\definecolor{mygray}{gray}{0.35}
\newtheorem{thm}{Theorem}[section]
	\newtheorem{defi}[thm]{Definition}
	\newtheorem{prop}[thm]{Proposition}
	\newtheorem{lemma}[thm]{Lemma}
	\newtheorem{corol}[thm]{Corollary}
	{\theoremstyle{definition}{
			\newtheorem{remark}[thm]{Remark} 
			\newtheorem{example}[thm]{Example} 
			 
}}}
\newsavebox{\fminipagebox}
\NewDocumentEnvironment{fminipage}{m O{\fboxsep}}
{\par\kern#2\noindent\begin{lrbox}{\fminipagebox}
		\begin{minipage}{#1}\ignorespaces}
		{\end{minipage}\end{lrbox}%
	\makebox[#1]{%
		\kern\dimexpr-\fboxsep-\fboxrule\relax
		\fbox{\usebox{\fminipagebox}}%
		\kern\dimexpr-\fboxsep-\fboxrule\relax
	}\par\kern#2
}
\newcommand{\R}{\mathbb{R}}
\newcommand{\Rd}{\R^{d}}
\newcommand{\N}{\mathbb{N}}
\newcommand{\half}{\frac{1}{2}}
\newcommand{\inverse}{^{-1}}
\newcommand{\abs}[1]{\left\lvert {#1} \right\rvert}
\newcommand{\transpose}{^{\mathsf{T}}}
\newcommand{\squared}{^{2}}
\newcommand{\tzero}{t_{0}}
\newcommand{\partition}{\pi}
\newcommand{\timeHorizon}{T}
\newcommand{\timeWindow}{{[0,\timeHorizon]}}
\newcommand{\hurstExponent}{H}
\newcommand{\volterraKernel}{{K}_{\hurstExponent}}
\newcommand{\volterraProcess}{\zeta}
\newcommand{\covarianceKernel}{{R}_{\hurstExponent}}
\newcommand{\zeroSleqTtimeHorizon}{0\leq s\leq t \leq \timeHorizon}
\newcommand{\zerostTimeHorizon}{0\leq s, t \leq \timeHorizon}
\newcommand{\simplex}{\lbrace (s,t) \in \R\squared:\,  \zeroSleqTtimeHorizon\rbrace}
\newcommand{\timesSleqUleqT}{s \leq u \leq t}
\newcommand{\convexHull}{\mathrm{Conv}}
\newcommand{\restrictedto}[1]{\arrowvert_{#1}}
\newcommand{\trace}{\mathtt{trace}}
\newcommand{\derivative}{^{\prime}}
\newcommand{\partialij}{\partial^{2}_{i,j}}
\newcommand{\gradx}{\nabla_{x}}
\newcommand{\gradz}{\nabla_{z}}
\newcommand{\Hessianx}{\nabla^{2}_{xx}}
\newcommand{\Hessianz}{\nabla^{2}_{zz}}
\newcommand{\norm}[1][\cdot]{\left\lVert {#1}\right\rVert}
\newcommand{\tripleNorm}[1]{\vert \vert \vert {#1}\vert \vert \vert }
\newcommand{\pvarNormInterval}[3][\cdot]{\lVert {#1} \rVert_{{#2}\text{-var}, {#3}}}
\newcommand{\Homomorphisms}{\text{Hom}}
\newcommand{\banachSpace}{\mathcal{B}}
\newcommand{\pairing}[2]{\langle{#1},\, {#2} \rangle }
\newcommand{\Ltwo}{L\squared}
\newcommand{\Cpvar}[1][p]{C^{{#1}\text{-var}}}
\newcommand{\approxAdditivepVariation}{AA_{p\text{-var}}}
\newcommand{\semigroupP}{\mathtt{P}}
\newcommand{\generatorL}{{\mathsf{L}}}
\newcommand{\generatorA}{\mathsf{A}}
\newcommand{\CalphaHoelderLoc}{C^{\alpha\text{-H\"ol}}_{\text{loc}}}
\newcommand{\ContFunctionsOfEllipticPDEregularity}{\mathcal{C}^{\alpha}}
\newcommand{\Prob}{{{P}}}
\newcommand{\probabilityQ}{{{Q}}}
\newcommand{\Expectation}{{{E}}}
\newcommand{\Variance}{\mathrm{Var}}
\newcommand{\sigmaAlgebra}{\mathfrak{F}}
\newcommand{\probabilitySpace}{\big(\Omega,\sigmaAlgebra, \Prob \big)}
\newcommand{\brownianMotion}{W}
\newcommand{\geometricBrownianMotion}{X}
\def\one{\mbox{1\hspace{-4.25pt}\fontsize{12}{14.4}\selectfont\textrm{1}}}
\newcommand{\control}{\omega}
\newcommand{\roughX}[1][X]{\mathbf{{#1}}}
\newcommand{\secondOrderX}[1][X]{\mathbb{\MakeUppercase{#1}}}
\newcommand{\secondOrderXst}[1][X]{\mathbb{\MakeUppercase{#1}}_{s,t}}
\newcommand{\roughPair}[1][X]{\mathbf{{#1}}=(#1,\mathbb{{#1}})}
\newcommand{\roughBracket}[1][X]{\left[\roughX[#1]\right]}
\newcommand{\roughBracketijst}[1][X]{\left[\roughX[#1]\right]^{i,j}_{s,t}}
\newcommand{\Xs}{X_{s}}
\newcommand{\Xt}{X_{t}}	
\newcommand{\Xst}{X_{s,t}}
\newcommand{\Xsu}{X_{s,u}}
\newcommand{\Xut}{X_{u,t}}
\newcommand{\Xist}{\Xi_{s,t}}
\newcommand{\pricep}{\mathfrak{p}}
\newcommand{\portfolioValue}{V}
\newcommand{\discountedPortfolioValue}{\tilde{\portfolioValue}}
\newcommand{\DeltaHedge}{\mathtt{Delta}}
\newcommand{\GammaHedge}{\mathtt{Gamma}}
\newcommand{\interestRate}{r}
\newcommand{\risklessAsset}{S^{0}}
\newcommand{\pricePath}{S}
\newcommand{\enhancedPricePath}{\roughX[\pricePath]}
\newcommand{\enhancedPricePair}{\roughPair[\pricePath]}
\newcommand{\discountedPricePath}{\tilde{S}}
\newcommand{\volatilityCoefficient}{\sigma}
\newcommand{\volatilityOperator}{\generatorA}
\newcommand{\pricingMeasure}{\probabilityQ}
\newcommand{\pricingExpectation}{\Expectation_{\pricingMeasure}}
\newcommand{\linearDrag}{\ell}
\newcommand{\deceptiveArithmeticBM}{Z}
\newcommand{\BSsolutionInForwardPrices}{w}
\newcommand{\BSsolutionForPortfolioFunction}{v}
\newcommand{\benchmarkMarkovianModel}{\left( \left(\Omega,\sigmaAlgebra, \pricingMeasure, (\sigmaAlgebra_t)_t \right), \volatilityOperator \right)}
\newcommand{\approximatelyAdditive}{approximately additive}
\newcommand{ \roughS}{\roughX[S]}
\newcommand{\enhancedTruePricePath}{{\mathbf{\pricePath}^{\mathrm{true}}}}
\newcommand{\secondOrderTruePricePath}{\secondOrderX[\pricePath]^{\mathrm{true}}}
\newcommand{\roughPairTruePricePath}{\enhancedTruePricePath=(\pricePath,\secondOrderTruePricePath)}
\newcommand{\enhancedDiscountedPricePath}{\roughX[\discountedPricePath]}
\newcommand{\secondOrderS}{\secondOrderX[S]}
\newcommand{\roughBracketS}{[ \roughS]}
\newcommand{\Hzero}{H^{0}}
\newcommand{\piH}[1][\pi]{\, ^{{{#1}}}\! H}
\newcommand{\piHzero}[1][ ]{\, ^{\pi_{{#1}}}\! H^{0}}
\newcommand{\piHprime}[1][ ]{\, ^{\pi_{{#1}}}\! H\derivative}
\newcommand{\subscriptuuprime}{_{u,u\derivative}}
\newcommand{\subscriptumintuprimemint}{_{u\wedge t,\, u\derivative\wedge t}}
\newcommand{\subscriptst}{_{s,t}}
\newcommand{\intzerot}{\int_{0}^{t}}
\newcommand{\eminusrt}{e^{-rt}}
\newcommand{\htilde}{\tilde{h}}
\newcommand{\sigmasquare}{\mathfrak{a}}
\newcommand{\aij}{a^{i,j}}
\newcommand{\Xuuprime}[1][t]{ X_{u \wedge {#1} , u\derivative \wedge  {#1} } }
\newcommand{\uprime}{u\derivative}
\newcommand{\secondOrderXuuprime}[1][t]{ \secondOrderX_{u \wedge {#1} , u\derivative \wedge  {#1} } }
\newcommand{\piphi}{{^{\pi}}\!\!\phi}
\newcommand{\piphizero}{\piphi^{0}}
\newcommand{\piphione}{\piphi^{1}}
\newcommand{\piphitwo}{\piphi^{2}}
\newcommand{\piDeltaHedge}{{^{\pi}}\! \DeltaHedge}
\newcommand{\piGammaHedge}{{^{\pi}}\! \GammaHedge}
\newcommand{\onepi}{{^{(1)}\!\pi}}
\newcommand{\twopi}{{^{(2)}\!\pi}}
\newcommand{\rebal}{\mathtt{rebal}^{\pi}}
\newcommand{\price}{S}
\newcommand{\tripleNormClosure}{L^1(\timeWindow;\Ltwo(\Prob))}
\title{Option pricing models without probability:\\ a rough paths approach}
\author{
	{John Armstrong\footnote{King's College London.}, Claudio Bellani\footnote{Imperial College London.}, Damiano Brigo\footnotemark[\value{footnote}\footnote{Damiano Brigo is grateful to the participants of the conference in \cite{Bri19pro} for helpful feedback}], Thomas Cass\footnotemark[\value{footnote}
	\footnote{The work of Thomas Cass is supported by EPSRC Programme Grant EP/S026347/1}]
}}
\date{Wednesday 8 July 2020}
\begin{document}

\maketitle

\numberwithin{equation}{section}

\vspace{0.5cm}

\textbf{Note from the authors}. 
The authors would like to dedicate this paper to their late colleague Mark Davis. Each of the authors benefited greatly from discussions with Mark over the years and did so, in particular, during the preparation of an early version of this manuscript.  One aspect of the presentation below is a perspective on the so-called Fundamental Theorem of Derivative Trading. Mark often stressed the importance of this result to the understanding of real-world trading; indeed he included a version of it in his entry ``Black-Scholes Formula'' in the Encyclopedia of Quantitative Finance \cite{EncyclopediaQF2010}.  His decency and his good-natured common sense will be missed.

\vspace{0.5cm}

\begin{quotation}
{\small \noindent\textbf{Abstract.} We describe the pricing and hedging of
financial options without the use of probability using rough paths. By
encoding the volatility of assets in an enhancement of the price trajectory,
we give a pathwise presentation of the replication of European options. The
continuity properties of rough-paths allow us to generalise the so-called
fundamental theorem of derivative trading, showing that a small
misspecification of the model will yield only a small excess profit or loss of
the replication strategy. Our hedging strategy is an enhanced version of
classical delta hedging where we use volatility swaps to hedge the second
order terms arising in rough-path integrals, resulting in improved robustness. }
\end{quotation}

\vspace{0.5cm}

\section{Introduction}

\label{sec.introduction}

The theory of rough-paths provides a framework for understanding differential
systems driven by irregular input signals. An asset-price process arising from
a diffusion model may  be associated a rough-path. Conversely we will find a
necessary condition for a rough-path to arise from a given diffusion model,
and we will call a rough-path satisfying this condition a \emph{diffusive
rough-path}. An investment strategy gives rise to a rough differential
equation (RDE) describing the evolution of the profit and loss (P\&L) of the strategy under a
given asset price signal. Given an option with a smooth payoff function, we
will show that the P\&L of a modified version of the classical delta hedging
strategy replicates the option payoff for any diffusive rough-path. The
modification we make to achieve this replicaton is to augment the
delta-hedging strategy with additional trades determined by a particular type
of volatility swap. By assuming that the price of these swaps is well
controlled we see that, in the continuous time limit, purchasing these swaps
will not influence the P\&L.

A core property of  RDE solutions is their continuity with respect to the
input rough-path. A first consequence therefore of our rough-path approach is
robustness of our proposed hedging strategy: if the true asset price signal is
close to a diffusive signal, our hedging strategy will still approximately
replicate the option payoff.
This relates to the classical Fundamental Theorem of Derivative Trading. Classical references for this formula are  the entry ``Black-Scholes Formula'' in the Encyclopedia of Quantitative Finance \cite{EncyclopediaQF2010}, and  \cite{KJPS98rob}; a recent paper about the same topic is \cite{EJP17fun}.
This classical theorem shows that if one hedges according to
a given diffusion model but the actual asset price process is determined by a
nearby diffusion model, the error of the classical delta hedging strategy will
be small. Our approach goes beyond this in that it allows for asset price
signals that do not arise from any diffusion model at all. Due to phenomena
such as market-impact and front-running, any differential equation describing
the dynamics of the P\&L of an investment strategy in terms of the asset price
dynamics is likely to contain some error. A perturbation of the second order
term of the asset price dynamics allows us to model such an error, and hence
explain the robustness of hedging strategies in more realistic markets than
those given by diffusion models.

A second consequence of our rough-path approach is that it demonstrates that a
theory of hedging is possible without the need for probability theory, despite
the central role of probability in the classical treatment of hedging
established in \cite{HK79mar,HP81mar}. Our work clarifies the use of
probability theory in justifying prices by identifying two steps: (i) showing
that the asset price paths of a diffusion model satisfy our diffusivity
condition; and (ii) deducing the uniqueness of the price of an option from the
existence of a replicating strategy via a no-arbitrage argument. In a market
with an arbitrage any price is possible, so there is no hope of obtaining
uniqueness without invoking a no-arbitrage condition, and hence involving
probability theory. We see, therefore, that the correct probability-free
analogue of classical pricing is demonstrating the existence of replicating
strategy for a given initial endowment. In this way we may interpret our
theory as giving a probability-free approach to pricing.

In diffusion models, the quadratic variation is a well-defined pathwise notion
which determines the price. Our definition of a diffusive rough-path
identifies the exact property needed for the delta hedging strategy to work in
a rough-path context. A continuous pricing signal is enhanced with a
specification for its \emph{rough bracket} to obtain a \emph{reduced rough
path} (see \cite[Chapter 5]{FH14cou}) which we will term an \emph{enhanced
price path}. Our financial model will take the form of a specification for the
properties of the rough bracket. Thus our model specification is tantamount to
a choice of enhancer, and it is this rough bracket which provides the
appropriate analogue of quadratic variation for our asset pricing model. In
our version of the Fundamental Theorem of Derivative Pricing, we will study
the effect of a misspecification of the financial model by examining the
sensitivity of our strategy to the choice of enhancer.

{{The purely pathwise nature of the enhancer, the price and hence the implied
volatility is in marked contrast to the statistical (and therefore
probabilistic) notion of historical volatility}}. This dichotomy between pathwise
and probabilistic properties has been noted before. For example, it is
exploited in \cite{BM00opt}, which partly inspired the present work (see also \cite{Bri19pro}), to give examples of diffusion models which are statistically indistinguishable using samples on a fixed time grid yet which
have arbitrarily different option prices.

Ours is not the first work to give a non-probabilistic formulation of option
pricing. \cite{BW94dyn} obtained a pathwise formulation of option pricing
using the non-probabilistic approach to It\^{o} calculus given in
\cite{Foe81cal}. A similar approach to pricing can also be found in
\cite{Rig15pat}. One caveat of this approach is that the continuous-time
integrals used in \cite{BW94dyn} depend upon the discrete-time approximating
sequence, which more or less precludes obtaining robustness in their approach.
To cirucmvent this dependence, our proposed strategy is an augmented version
of delta-hedging where one also invests in volatility swaps in order to hedge
the second order part of the pricing signal. This yields a robust trading
strategy, however at the expense of introducing assumptions on the price of
volatility swaps to ensure our strategy is self-financing.

Another approach is given in \cite{BSV08pri}, which uses quadratic variation
to give a theory of pricing which is able to accommodate non semi-martingale
signals. However, this theory is restricted to signals of regularity at least
as regular as those of semi-martingales whereas our theory accommodates paths
of finite $p$-variation for $2<p<3$.

One additional assumption that we must make in our approach is that the option
payoff is differentiable. We will show that for a European call option with
strike $K$, one can find diffusive rough-paths for which our strategy fails to
replicate the option payoff. However, these rough-paths must have a stock
price exactly equal to the strike at maturity. In a probabilistic theory, such
paths occur with probability zero, so may be neglected. However, our
interpretation is that the existence of such paths demonstrates a genuine lack
of robustness of the classical delta hedging strategy. The need for a robust
strategy becomes more important towards maturity as classical diffusion models
break down and new phenomena occur such as the ``pinning'' of stock prices
around exchange traded strikes (see e.g. \cite{AL03mar}, \cite{AKL12mat}, \cite{JIS08mod}, \cite{GJ12pin}). The failure of our
strategy for certain stock paths indicates that one should switch strategy
near maturity to a genuinely probabilistic strategy, such as a buy-and-hold
strategy. This reflects actual trading practice, where delta hedging
strategies are abandoned and quite different strategies adopted near maturity.

The article is organised as follows. In Section
\ref{sec.preliminariesAndHeuristics} we recall the classical theory of hedging
and establish our notation. In Section \ref{sec.pathwiseIntegrals} we explain
how we rely on reduced rough path integration to tackle the technical
difficulty of integration with respect to unbounded variation signals and
explain in detail the difference between our approach and that of
\cite{BW94dyn}. In Section \ref{sec.enhancedPathOfDiffusionType} we define
what is meant by a diffusive rough-path. In Section
\ref{sec.pathwiseFormulationOfFundamentalEquationsOfHedging} we demonstrate
formally how to obtain a pathwise formulation of the classical formulas of
Mathematical Finance in continuous time. Section \ref{sec.EnlargedStrategies}
shows how our continuous time trading strategy can be interpreted as a limit
of discrete time trading strategies in volatility swaps. Section
\ref{sec.BSfromOurPathwisePerspective} demonstrates that our proposed strategy
fails in the Black-Scholes model for call options when the stock price
terminates at the strike. Section \ref{sec.conclusions} presents our conclusions.

An appendix \ref{sec.disentanglingVolatilityFromMarginalVariances}
expands on \cite{BM00opt} by explaining their construction from our pathwise
point of view.

\section{Notation and preliminaries}

\label{sec.preliminariesAndHeuristics}

We will develop a rough-path version of a classical  diffusion model, and will
begin by describing the  classical model. We suppose that each component of
the price vector $\pricePath_{t} \in\Rd$ of $d$ non-dividend paying stocks
displays the following dynamics in the pricing measure
\[
d\pricePath^{i}_{t} = \sigma^{i}_{j} dB^{j}, \quad i=1,\dots, d,
\quad\pricePath_{0} = s_{0} \in\Rd,
\]
on a stochastic base $(\Omega, \sigmaAlgebra , \Prob , (\sigmaAlgebra_{t})_{t}
, (B_{t})_{t})$ carrying a standard $n$-dimensional Brownian motion
$(B_{t})_{t}$.  We assume  $\sigma$ in $\CalphaHoelderLoc (\Rd, \R^{d\times
n})$, $0<\alpha<1$.  Einstein's summation convention on double indices is
employed and will be throughout all the paper.

We assume there is also a riskless asset, denoted by $\risklessAsset_{t}$
which follows the one-dimensional deterministic dynamics
\[
d\risklessAsset_{t} = r\risklessAsset_{t} dt, \quad\risklessAsset_{0}=1.
\]
We let $\discountedPricePath_{t}$ be the discounted price of the risky asset
at time $t$, namely $\discountedPricePath_{t} = \pricePath_{t} /
\risklessAsset_{t} = \eminusrt \pricePath_{t}$.

 Let $f(\pricePath_\timeHorizon)$ be the payoff of a Vanilla option on the underlying $\pricePath$. We assume that $f$ is a continuous and bounded function on $\Rd$. Let 
 \begin{equation} \label{eq.payoffInForwardPrices}
 h(x):= f(e^{\interestRate \timeHorizon} x)
 \end{equation}
 and $\tilde{h} := e^{-\interestRate\timeHorizon}h$. The payoff is therefore equivalently written as $h(\discountedPricePath_\timeHorizon)$, and its discounted value is $\tilde{h} (\discountedPricePath_\timeHorizon)$.
 
 The classical theory of \cite{HK79mar,HP81mar}
 tells us that the option payoff can be replicated at time $t$ for a price, $\discountedPortfolioValue_t$
 satisfying
 \[
 \discountedPortfolioValue_t = \semigroupP_{T-t} \htilde (\discountedPricePath_t),
 \]
 where $(\semigroupP_t)_t$ is the semigroup on $C_b(\Rd)$ generated\footnote{
 	By this we mean the semigroup of linear operators on $C_b(\Rd)$ such that for any continuous and bounded $f$, the solution of the Cauchy problem
 	\[
 	\begin{cases}
 	\Big(\partial_t - \volatilityOperator  \Big) u = 0 \quad \text{ in } \R_+ \times \Rd \\
 	u(0,x)= f(x) \quad\text{ on }\lbrace 0 \rbrace \times \Rd 
 	\end{cases}
 	\]	
 	is represented as $u(t,x)=\semigroupP_t f(x)$.
 } by the infinitesimal operator
 \[
 \volatilityOperator\varphi (x) = \half \left( \sum_k \sigma^i_k \sigma^j_k   \right) \partialij \varphi(x) \, , \quad \varphi \in C^2(\Rd) \cap C_b(\Rd),
 \]
 of the dynamics of $\discountedPricePath$.
 We call $\volatilityOperator$ the {\em volatility operator}. To ensure the absence of arbitrage, we must make
 some assumptions to ensure that the solutions
 to the Black--Scholes PDE are unique. In this paper will typically
 assume that the volatility operator is uniformly
 elliptic.
 
 The stochastic process $\discountedPortfolioValue_t$ is then a deterministic function $\BSsolutionInForwardPrices=\BSsolutionInForwardPrices(t,x)$ of time and space applied after $(t,\discountedPricePath_{t})$ which solves
 \begin{equation}\label{Eq.discountedBSpde}
 \begin{cases}
 \Big(\partial_t + \volatilityOperator  \Big) \BSsolutionInForwardPrices = 0 \quad \text{ in } [0,T) \times \Rd \\
 \BSsolutionInForwardPrices(\timeHorizon,x)= \htilde(x) \quad\text{ on }\lbrace \timeHorizon \rbrace \times \Rd .
 \end{cases}
 \end{equation}
 Equation \eqref{Eq.discountedBSpde} is the discounted version of the  Black-Scholes partial differential equation.
 
 We will write $v(t,z)=e^{rt}w(t,e^{-rt}z)$
 for the undiscounted value function and will use following notation for the Greeks:
 \[
 \DeltaHedge_t := \nabla_z \BSsolutionForPortfolioFunction (t,S_t) = \gradx \BSsolutionInForwardPrices(t,\discountedPricePath_{t}),
 \]
 taking values in $\Rd \cong \Homomorphisms(\Rd,\R)$, and
 \[
 \GammaHedge_t := \Hessianz \BSsolutionForPortfolioFunction (t,S_t) = e^{rt} \Hessianx \BSsolutionInForwardPrices(t,\discountedPricePath_t),
 \]
 taking values in $\R^{d\times d} \cong \Homomorphisms(\Rd \otimes \Rd , \R) \cong \Homomorphisms(\Rd,\Homomorphisms(\Rd,\R))$.

 In our setup, the pricing PDE is justified by the existence
 of a replicating strategy for the payoff.
 An investment strategy
 may be viewed as a pair $(\Hzero_t , H_t)$ indicating the quantities
 to purchase at each time of the riskless and risky asset.
 By It\^o's formula,
 \begin{align} \label{Eq.integralAccruing}
 \BSsolutionInForwardPrices(t,X_t) - \BSsolutionInForwardPrices(0,X_0)  = & 	\intzerot \gradx \BSsolutionInForwardPrices(u,X_u)\sigma(X_u)dW_u + \intzerot \big(\partial_t +\volatilityOperator \big) \BSsolutionInForwardPrices(u,X_u) du \nonumber \\
 = & \intzerot \gradx \BSsolutionInForwardPrices(u,X_u)dX_u.
 \end{align}
 It follows that the
 delta hedging  strategy $\phi_t = (\Hzero_t , H_t)$ given by 
 \begin{equation}\label{Eq.replicatingStrategy}
 \begin{split}
 H_t &:= \gradx \BSsolutionInForwardPrices (t,\discountedPricePath_t) \\
 \Hzero_t &:= \BSsolutionInForwardPrices(t,\discountedPricePath_t) - H_t \discountedPricePath_t
 \end{split}
 \end{equation}
 is such that the undiscounted portfolio process
 \[
 \portfolioValue_t(\phi) = \Hzero_t e^{\interestRate t} + H_t \pricePath_t = e^{rt}\BSsolutionInForwardPrices(t,\discountedPricePath_{t}),
 \]
 is self-financing, i.e.\ it satisfies
 \begin{equation}\label{Eq.selfFinancingContTime}
 \portfolioValue_t (\phi) = \portfolioValue_0 (\phi) + \intzerot \Hzero_u d\risklessAsset_u + \intzerot H_u dS_u,
 \end{equation}
 and replicates the option payoff, i.e. $\portfolioValue_\timeHorizon(\phi) = f(\pricePath_T)$.

 \section{Pathwise integrals}\label{sec.pathwiseIntegrals}
 
 In this section we will first motivate the
 use of compensated Riemann sums for our problem
 and then describe the definitions and results
 on pathwise integration that we will require.
 Proofs are given in \ref{sec.proofsPathwiseIntegrals}.
 
 \subsection{Motivation}
 Consider the setting introduced in Section \ref{sec.preliminariesAndHeuristics}. Let $H$ denote a continuous hedging strategy and let $X$ be a semimartingale. In this setting, the It\^o integral $\int H dX$ is employed to represent portfolio values associated with the strategy $H$ if the price of the traded asset is modelled on the semimartingale $X$. This is justified by the following limit argument, which links real trading to the continuous-time abstraction. 
 
 In real trading,  hedging happens in discrete time and, given a partition $\partition$ of $\timeWindow$ with mesh-size $\abs{\partition} := \sup\lbrace \abs{u\derivative-u}: \, u \in \pi\rbrace$, the strategy $H$  is commonly replaced by 
 \[
 \piH_t := \sum_{u \in \partition} H_u \one \big\lbrace t \in (u,u\derivative] \big\rbrace , 
 \]
 where for $ u\in \pi$ we denoted by $u\derivative := \inf \lbrace v \in \pi : \, u<v\rbrace$ its next partition point. 
 The (It\^o-) integral of the elementary caglad process $\piH$ is
 \[
 (\piH . X)_t = \sum_{u \in \pi} H_u X \subscriptuuprime.
 \]
 $H$ being continuous, the discrete-time integral converges in probability to the It\^o integral $\int HdX$ along any sequence $(\pi^n)_n$ of partitions with mesh-size $\abs{\pi^n}$ shrinking to zero (see  \cite[Chapter IV, (2.13) Proposition]{RY99con}). Two issues are to be stressed about this convergence. First, the convergence does not happen pathwise for arbitrary sequences of partitions; if it did, the integrator $X$ would be of bounded variation (see Proposition \ref{Prop.[RY99,ChapterIV,(2.21)Exercise]} below), which is not the case for semimartingales. Second, the convergence is not uniform with respect to equivalent martingale measures on the same filtered probability space (see Example \ref{Ex.nonCommutativityOfLimits} below).

 Therefore, the justification for the employment of It\^o integrals in hedging crucially relies on the probabilistic framework. Disengaging from this justification, A. Bick and W. Willinger in \cite{BW94dyn} formulated a probability-free analysis of hedging strategies' portfolio values expressed as pathwise integrals \`a la F\"ollmer. Their approach bore the weight of the necessary specification of which approximating sequence every integral depends on, but this seemed to represent the trade-off needed for the probability-free description. 
 
 With the development of Rough Path Theory however, it was understood that, by appropriately compensating Riemann sums, the pathwise convergence could be achieved consistently across all possible sequences of approximations. Moreover, by employing Rough Path Theory, the assumption on the regularity of the stock price can be weakened: instead of assuming that the paths are of finite quadratic variation, we will be able to assume only that they are of finite $p$-variation for some $p$ smaller than 3. This will let our formulas make one further step away from the probabilistic framework, because these formulas will not assume the typical semimartingale regularity of a finite non-null quadratic variation. 
 
 Rough Path Theory achieves the stated convergence by deploying higher-order expansions of integrators and integrands. In the canonical semimartingale setting, these higher order expansions will entail that the Riemann approximation of the It\^o integral $\int H dX$ will be compensated with a term of the following form 
 \begin{equation}\label{eq.compensationWithQuadraticVariation}
  \sum_{u \in \pi} H\derivative _u \secondOrderXuuprime =
 \sum_{u \in \pi} \big[ H\derivative_u \Xuuprime \otimes \Xuuprime - H\derivative _u \pairing{X}{X}_{u\wedge t , u\derivative \wedge t}\big],
 \end{equation}
 where $\langle X, X \rangle$ stands for the quadratic variation of the semimartingale $X$, and $H\derivative$ is a suitable derivative of $H$ that will be introduced below.
 
 The compensation term is the technical ingredient that distinguishes our integrals from the classical It\^o integrals and from the pathwise integration employed in \cite{BW94dyn}. On the one hand, the compensation term vanishes in probability; indeed, with respect to any $\Prob$-equivalent measure the quantity in equation \eqref{eq.compensationWithQuadraticVariation} goes to zero in probability as the mesh-size $\lvert \partition \rvert$ shrinks to zero (see for example \cite[ChapterIV, (1.33) Exercise]{RY99con}). Hence, the probabilistic It\^o integration is insensitive to whether or not the compensation is incorporated. On the other hand, if the compensation is not incorporated, then the Riemann sums cannot converge pathwise along arbitrary sequences of partitions with vanishing mesh-zise (See Proposition \ref{Prop.[RY99,ChapterIV,(2.21)Exercise]}). This was the reason  why A. Bick and W. Willinger \cite{BW94dyn} were forced to attach to their continuous-time integrals the specification of the sequence of partitions along which the integrals were approximated. Employing compensated Riemann sums, instead, allows to circumvent this specification.
 
 It is the purpose of this section to describe the compensations to the Riemann sums which we will later apply to the integrals in the classical formulas of Mathematical Finance.

\subsubsection*{Technical motivation for compensated  Riemann  sums} \label{Sec.motivationCompensatedSums}
There are two technical reasons which motivate the use of compensated Riemann sums. First, if the integrator has semimartingale-type regularity (or worse), then uncompensated Riemann sums cannot converge pathwise (see Proposition \ref{Prop.[RY99,ChapterIV,(2.21)Exercise]}). Second, in the semimartingale setting the rate of convergence of uncompensated Riemann sums to the corresponding It\^o integral depends on the underlying probability measure (see Example \ref{Ex.nonCommutativityOfLimits}).

 
 \begin{prop}[{\cite[Chapter IV, (2.21) Exercise]{RY99con}}]\label{Prop.[RY99,ChapterIV,(2.21)Exercise]}
 	Let $X$ be an $\Rd$-valued function on $[0,T]$ such that for every $H$ in $C([0,T],\Rd)$ the limit of (uncompensated) Riemann sums
 	\[
 	\lim_n (\piH[\pi_n].X)_T = \lim_n \sum_{u \in \partition_n} H_u X\subscriptuuprime
 	\]		
 	exists in $\R$ along any sequence $(\pi_n)_n$ of partitions of $[0,T]$ with vanishing mesh-size. Then, the limit does not in fact depend on the specific sequence of partitions and $X$ is of bounded variation. 
 \end{prop}

The second technical motivation follows. It discusses the convergence of Riemann sums in relation to the probability measure with respect to which such a convergence happens. 

 \begin{example}\label{Ex.nonCommutativityOfLimits}
 	Let $\Omega$ be the space $C([0,1],\R)$ of continuous real-valued functions on $[0,1]$,  and let $\sigmaAlgebra$ be its Borel $\sigma$-algebra. Let $P$ be the Wiener measure on $(\Omega, \sigmaAlgebra)$, so that the coordinate map $X_t(\omega)=\omega(t)$, $\omega \in \Omega$, is a $((\sigmaAlgebra_t)_t, P)$-Brownian motion, where $\sigmaAlgebra_t$ is the $P$-completion of $\sigma(X_s:\,  0\leq s\leq t)$. Consider the sequence $P^k$, $k\in \N$, of probability measures on $(\Omega,\sigmaAlgebra, (\sigmaAlgebra_t)_t)$ given by
 	\[
 	\frac{dP^k}{dP}\arrowvert_{\sigmaAlgebra_t} = \exp \Big(kX_t - \frac{k\squared}{2}t\Big), \qquad 0\leq t \leq 1, \quad k=1,2,\dots.
 	\]
 	We observe that for each $k$, the process $X_t - kt$ is a $((\sigmaAlgebra_t)_t, P^k)$-Brownian motion. For each $k$, for every continuous $(\sigmaAlgebra_t)_t$-adapted integrand $H$ and any sequence $(\pi_n)_n$ of partitions with vanishing mesh-size we have that
 	\[
 	\sup \Big\lbrace \lim_{n,m \uparrow \infty} P^k \Big(\abs{(\piH[\pi_n].X)_1 - (\piH[\pi_m].X)_1}> \epsilon \Big): \, \epsilon>0 \Big\rbrace = 0.
 	\]
 	Consider the integrand $H_s \equiv s$ and the sequence $\pi_n = \lbrace l 2^{-n}: \, l=0,\dots, 2^n\rbrace$ of dyadic partitions of $[0,1]$. We claim that for every $\epsilon>0$ it simultaneously holds
 	\[
 	\lim_k \lim_{n } P^k \Big(\abs{(\piH[\pi_n].X)_1 - (\piH[\pi_{n+1}].X)_1}> \epsilon \Big) = 0
 	\]
 	and 
 	\[
 	\lim_{n }\lim_k P^k \Big(\abs{(\piH[\pi_n].X)_1 - (\piH[\pi_{n+1}].X)_1}> \epsilon \Big) = 1.
 	\]
 	Indeed, since $\pi_{n+1} = \pi_n \cup \lbrace (2l+1)2^{-(n+1)}: \, l=0, \dots , 2^n -1\rbrace$, we have
 	\begin{align*}
 	\sum_{\substack{v \in \pi_{n+1} \\v<1}} H_v X_{v,v\derivative} - 
 	\sum_{\substack{u \in \pi_n \\ u < 1}} H_u X \subscriptuuprime =& 
 	\sum_{\substack{u \in \pi_n \\ u < 1}} \big(H_{u+2^{-(n+1)}} - H_u\big) X_{u+2^{-(n+1)}, \, u +2^{-n}} \\
 	=& 2^{-(n+1)} X_{2^{-(n+1)},1}.
 	\end{align*}
 	Hence, under $P^k$ the difference $(\piH[\pi_{n+1}].X)_1 - (\piH[\pi_{n}].X)_1$ is distributed as 
 	\[
 	2^{-(n+1)}\sigma_n \Big( N + k \sigma_n\Big),
 	\]
 	where $\sigma_n = (1-2^{-(n+1)})^{\half}$ and $N$ is a standard normal random variable. We conclude 
 	\begin{align*}
 	P^k \Big(\big\lvert(\piH[\pi_{n+1}].X)_1 -& (\piH[\pi_{n}].X)_1\big\rvert>\epsilon \Big) \\
 	=&  
 	\Phi \Big(k\sigma_n - \frac{\epsilon}{\sigma_n}2^{n+1}\Big) + \Phi \Big( -\frac{\epsilon}{\sigma_n}2^{n+1} - k \sigma_n\Big) \\
 	& \stackrel{k \uparrow \infty}{\longrightarrow} 1,
 	\end{align*}
 	where $\Phi$ is the cumulative distribution function of a standard normal random variable.
 \end{example}

\subsection{Additivity, approximate additivity and the Sewing lemma} \label{sec.additivity}

Let $\banachSpace$ be a Banach space and let $X:\timeWindow \rightarrow \banachSpace$ be a continuous path with trajectory in $\banachSpace$. The increments
\begin{equation}\label{eq.notationForIncrements}
\Xst : = \Xt - \Xs, \qquad \zerostTimeHorizon,
\end{equation}
of such path define a two-parameter function $X=\Xst$ on the square $\timeWindow \times \timeWindow$. We employ the notation in \eqref{eq.notationForIncrements} throughout all our work. Moreover, rather than considering general $s$ and $t$ in $\timeWindow$, we often restrict to the simplex $\simplex \subset \timeWindow\squared$. A clear property of $X$ is additivity, in that for all $0\leq s,u,t \leq \timeHorizon$ it holds
\begin{equation}\label{eq.additivityOfIncrements}
\Xst = X_{s,u} + X_{u,t}.
\end{equation} 
Notice that if $X$ is a priori only defined on the simplex but additive, then it can straightforwardly be extended to an additive function on $\timeWindow\times\timeWindow$ by setting $X_{t,s}:=-\Xst$.

Additivity characterises those functions $X$ on  $\timeWindow\times\timeWindow$ that descend from increments of paths, in the following sense.

\begin{prop}\label{prop.additivityMeansPath}
	Let $X:\timeWindow \times \timeWindow \rightarrow \banachSpace$ be additive. Then, there exists a path $x$ on $\banachSpace$ such that 
	\[
	\Xst = x_t - x_s, \qquad \forall \zerostTimeHorizon.
	\]
	Moreover, if $y$ is another path whose increments coincide with $X$, then $y-x$ is constant. 
\end{prop}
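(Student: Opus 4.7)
The plan is to define the candidate path explicitly by $x_t := X_{0,t}$ and then verify the two assertions directly from the additivity relation \eqref{eq.additivityOfIncrements}. The whole argument is elementary, so the only thing that needs care is to set up the bookkeeping cleanly and to exploit additivity in the right order.

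First I would establish two preparatory identities. Setting $s = u = t = 0$ in \eqref{eq.additivityOfIncrements} gives $X_{0,0} = X_{0,0} + X_{0,0}$, hence $X_{0,0} = 0$, and more generally $s = u = t$ yields $X_{t,t} = 0$ for every $t$. Setting $u = s$ in \eqref{eq.additivityOfIncrements} then gives $X_{s,t} = X_{s,s} + X_{s,t}$, which is consistent, and setting $t = s$ with $u$ arbitrary gives $0 = X_{s,u} + X_{u,s}$, so $X_{u,s} = -X_{s,u}$ for all $s, u$. This justifies that, once the path is defined on the simplex, the formula $X_{s,t} = x_t - x_s$ automatically extends to all $s, t \in \timeWindow$ with the correct sign.

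Next I define $x_t := X_{0,t}$ for $t \in \timeWindow$. Continuity of $x$ is inherited from the continuity of $X$ in its second argument (since $X$ was assumed to be a two-parameter continuous function on $\timeWindow \times \timeWindow$). Applying \eqref{eq.additivityOfIncrements} with $u = s$ replaced by $0$, i.e.\ writing
\begin{equation*}
X_{0,t} = X_{0,s} + X_{s,t},
\end{equation*}
and rearranging gives $X_{s,t} = X_{0,t} - X_{0,s} = x_t - x_s$, which is the first claim.

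For uniqueness, suppose $y:\timeWindow \to \banachSpace$ also satisfies $X_{s,t} = y_t - y_s$ for every $0 \leq s \leq t \leq \timeHorizon$. Then, for all such $s, t$,
\begin{equation*}
(y_t - x_t) - (y_s - x_s) = (y_t - y_s) - (x_t - x_s) = X_{s,t} - X_{s,t} = 0,
\end{equation*}
so $t \mapsto y_t - x_t$ is constant. There is no genuine obstacle here; the only pitfall to avoid is trying to reconstruct $x$ from an increment $X_{s,t}$ with $s > 0$ and then having to paste pieces together — anchoring everything at the base point $0$ from the outset bypasses this entirely.
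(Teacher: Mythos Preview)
The paper states this proposition without proof, so there is nothing to compare against; your argument is correct and is the standard one (anchor at $0$, set $x_t := X_{0,t}$, read off $X_{s,t} = x_t - x_s$ from additivity, and get uniqueness by subtracting). One small quibble: your remark about continuity of $x$ being ``inherited from the continuity of $X$'' is extraneous --- the proposition as stated neither assumes $X$ continuous nor asserts $x$ continuous, so that sentence can simply be dropped.
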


 We regard a partition $\partition$ simultaneously as the finite collection of points and as the finite collection of adjacent subintervals of a given time interval $[s,t]$ that $\partition$ subdivides.  Given a partition $\partition$ of $\timeWindow$ and a time instant $t$ in $\timeWindow$,  we adopt the following notational convention: 
\begin{equation}\label{eq.notationAboutPartition}
\begin{split}
t\derivative:= \inf \lbrace u \in \pi : \, u> t \rbrace , & 
\qquad 
\lfloor t \rfloor := \sup\lbrace u \in \pi : \, u\leq t\rbrace, \\  
t-  := \sup\lbrace u \in \pi : \, u\derivative \leq t\rbrace ,&  
\qquad
t\star:= \begin{cases}
t- & \text{ if } t\in \pi \\
\lfloor t \rfloor & \text{ if } t\notin \pi,
\end{cases} \\
\abs{\pi} := \sup \lbrace \abs{u\derivative - u}: \, u \in \pi \rbrace,&  
\qquad
\pi_t := \big(\pi \cup \lbrace t \rbrace \big) \cap [0,t].
\end{split}
\end{equation}

Let $\partition$ be a partition of the time interval $[s,t]$ under consideration. From the additivity \eqref{eq.additivityOfIncrements} it follows that 
\begin{equation}\label{eq.telescopicSumOfIncrements}
\sum_{u \in \partition} X\subscriptuuprime = \Xst.
\end{equation}
This holds irrespectively of the choice of the partition $\partition$, so that if $\partition_n$, $n\geq 1$, is a sequence of partitions with mesh-size shrinking to zero, we can carry formula \eqref{eq.telescopicSumOfIncrements} to the limit in $n$ and write 
\begin{equation}\label{eq.limitOfTelescopicSumOfIncrementsAlongSequence}
\lim_{n\uparrow \infty} \sum_{u \in \partition_n} X\subscriptuuprime = \Xst.
\end{equation}
Actually, this  does not use the fact that the mesh-size $\abs{\partition_n}:=\sup\lbrace \abs{\uprime - u}: \, u \in \partition_n\rbrace$ goes to zero as $n\uparrow \infty$. However, restricting to such class of sequences of partitions will become meaningful soon, because we wish to interpret the limit in \eqref{eq.limitOfTelescopicSumOfIncrementsAlongSequence} as the integral $\int_{s}^{t}dX$. In order to emphasise that the limit in \eqref{eq.limitOfTelescopicSumOfIncrementsAlongSequence} does not depend on the particular sequence of partitions we write
\begin{equation}\label{eq.limitOfTelescopicSumOfIncrements}
\lim_{\abs{\partition}\downarrow 0} \sum_{u \in \partition} X\subscriptuuprime = \Xst.
\end{equation}
In integral notation, this is the trivial, yet fundamental, relation $\int_{s}^{t}dX = \Xst$.
Let $X:\simplex \rightarrow \banachSpace$. We say that $X$ is of finite $p$-variation for some $p\geq 1$ if 
\[
\pvarNormInterval[X]{p}{\timeWindow}^p :=
 \sup \left\lbrace
\sum_{u \in \partition} \abs{X\subscriptuuprime}^p: \, \partition \text{ partition of } \timeWindow 
\right\rbrace
<\infty.
\]
If $X$ is additive, this notation is the usual $p$-variation norm of the underlying path.

For $\timesSleqUleqT$ we introduce the symbol 
\[
\delta X_{s,u,t} := \Xst - \Xsu - \Xut.
\]
If $X$ is additive, then $\delta X \equiv 0$. 

Equation \eqref{eq.limitOfTelescopicSumOfIncrements} is the combination of two statements: a. the limit on the left hand side exists and is the same along every sequence of partitions with vanishing mesh-size; b. such limit defines an additive functional on the simplex, hence a path. We have seen that these properties are immediate if we start from an additive $X$.
We  will now relax the additivity of $X$ to obtain the non-trivial statement in Proposition \ref{prop.SewingLemma}. 

\begin{defi}[``Control function'']
	A \emph{control function} $\control$ is a non-negative continuous function on $\lbrace (s,t)\in \R^2: \,  0\leq s\leq t \leq T\rbrace$, null on the diagonal and such that
	\begin{enumerate}
		\item $\control(s_1,t_1)\leq \control(s_2,t_2)$, if the interval $[s_1,t_1]$ is contained in the interval $[s_2,t_2]$;
		\item $\control(s,u) +\control (u,t)\leq \control(s,t) $, for all $s\leq u\leq t$.
	\end{enumerate}
\end{defi}

 A control function generalizes the concept of the length of an interval. Common controls are $\control(s,t):=\abs{t-s}$ and, for a continuous path  $x$ of finite $p$-variation, $\control(s,t):=\norm[x]_{p\text{-var}, \, [s,t]}^{p}$. From these, new controls can be defined by linear combinations $c_1 \control_1 + c_2 \control_2$ with non-negative coefficients $c_1, c_2 \in \R_{\geq 0}$, and by products $\control_{1}^{\gamma_1}\control_{2}^{\gamma_2}$ with exponents $\gamma_1$ and $\gamma_2$ satisfying $\gamma_1 + \gamma_2\geq 1$,  see \cite[Exercise 1.9]{FV10mul}. 

Given a partition $\partition$ of $[s,t]\subset \timeWindow$ we may use a control function to measure the mesh-size.
	\begin{defi}
		The modulus of continuity of $\control$ on a scale smaller or equal than the mesh-size $\abs{\pi}$ is
		given by
		\[\mathrm{osc}(\control,\abs{\pi}):= \sup\lbrace \control(s,t): \, \abs{t-s}\leq \abs{\pi}\rbrace.\]
	\end{defi}

\begin{defi}[``Approximate additivity'']\label{def.approximateAdditivity}
		A function $\Xi:\simplex \rightarrow\banachSpace$ is said \emph{\approximatelyAdditive} if 
		\begin{enumerate}
			\item it is null and right-continuous on the diagonal, i.e. $\Xi_{s,s}$ $ = \lim_{t\downarrow s} \Xi_{s,t}=0$ for all $s$ in $\timeWindow$;
			\item \label{item.relaxedAdditivity}
						 there exist $\gamma > 1$ and a control function
			$\control$ such that 
			\begin{equation}\label{eq.relaxedAdditivity}
			\abs{\Xist - \Xi_{s,u} - \Xi_{u,t}} \leq \control^{\gamma} (s,t),
			\end{equation}
			for all $\timesSleqUleqT$. 
		\end{enumerate}
\end{defi}
Notice that equation \eqref{eq.relaxedAdditivity} implies that for all $1<\gamma\derivative<\gamma$
\[
\norm[\delta \Xi]_{\control, \gamma\derivative} :=
\sup_{\timesSleqUleqT} 
\frac{\abs{\delta \Xi_{s,u,t}}}{\control^{\gamma\derivative}(s,t)} \leq \control^{\gamma-\gamma\derivative}(0,\timeHorizon).
\]
Therefore, condition \ref{item.relaxedAdditivity} above is equivalent to the existence of a control $\control$ and some $\gamma>1$ such that $\norm[\delta\Xi]_{\control, \gamma} <\infty$.

\begin{prop}[``Sewing Lemma'']\label{prop.SewingLemma}
	Let $\Xi:\simplex \rightarrow \banachSpace$ be \approximatelyAdditive \, and let the control $\control$ and the exponent $\gamma>1$ be such that $\norm[\delta\Xi]_{\control,\gamma}<\infty$. Then, there exists a unique continuous path 
	\[
	\int \Xi : \timeWindow \rightarrow\banachSpace,
	\]
	whose increments we denote by $\int_{s}^{t} \Xi$, such that for all $\zeroSleqTtimeHorizon$ 
	\begin{enumerate}
		\item $\int_{s}^{t}\Xi = \lim_{\abs{\partition}\downarrow 0}\sum_{u\in\partition}\Xi\subscriptuuprime$ with limit in $\banachSpace$;
		\item \begin{equation}\label{Eq.errorOfIncrements}
		\abs{\int_{s}^{t}\Xi - \Xist } \leq \frac{\norm[\delta\Xi]_{\control,\gamma}}{1-2^{1-\gamma}}\control^{\gamma}(s,t).
		\end{equation}
	\end{enumerate}
\end{prop}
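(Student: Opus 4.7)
The plan is to construct $\int\Xi$ as the limit of Riemann sums along a carefully chosen sequence of dyadic refinements, and then extend convergence to every partition with vanishing mesh. The preparatory observation is that, since $\control$ is continuous, null on the diagonal and super-additive, the intermediate value theorem on any $[s,t]$ produces a \emph{balanced midpoint} $m$ at which $\control(s,m)=\control(m,t)$; super-additivity then forces $\control(s,m)\vee\control(m,t)\le \control(s,t)/2$. Iterating this balanced bisection yields a sequence $\partition_n$ of partitions of $[s,t]$ in which every subinterval $[u,u\derivative]$ satisfies $\control(u,u\derivative)\le \control(s,t)/2^n$.

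Writing $J(\partition):=\sum_{[u,u\derivative]\in\partition}\Xi_{u,u\derivative}$, the passage from $\partition_n$ to $\partition_{n+1}$ perturbs the Riemann sum by the second-order errors $-\sum_{[u,u\derivative]\in\partition_n}\delta\Xi_{u,m_u,u\derivative}$, so approximate additivity combined with super-additivity of $\control$ (giving $\sum\control(u,u\derivative)\le\control(s,t)$) and the uniform bound $\control(u,u\derivative)\le \control(s,t)/2^n$ delivers
\[
\abs{J(\partition_{n+1})-J(\partition_n)}\le \norm[\delta\Xi]_{\control,\gamma}\sum_{[u,u\derivative]\in\partition_n}\control(u,u\derivative)^\gamma \le \norm[\delta\Xi]_{\control,\gamma}\,\control(s,t)^\gamma\,2^{-n(\gamma-1)}.
\]
Since $\gamma>1$ the bound is geometric, so $(J(\partition_n))_n$ is Cauchy in $\banachSpace$; summing from $n=0$ with $J(\partition_0)=\Xi_{s,t}$ produces precisely the sharp constant $1/(1-2^{1-\gamma})$ of \eqref{Eq.errorOfIncrements}, and I define $\int_s^t\Xi$ as the limit.

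To promote this to convergence along every mesh-shrinking partition $\partition$ of $[s,t]$, I apply the construction above on each subinterval $[u,u\derivative]$ of $\partition$ to obtain the local estimate $\abs{\Xi_{u,u\derivative}-\int_u^{u\derivative}\Xi}\le C\,\control(u,u\derivative)^\gamma$. Summation and factoring out $(\sup_\partition\control)^{\gamma-1}$ yield
\[
\Big|J(\partition)-\sum_{[u,u\derivative]\in\partition}\int_u^{u\derivative}\Xi\Big|\le C\,\control(s,t)\bigl(\sup_{[u,u\derivative]\in\partition}\control(u,u\derivative)\bigr)^{\gamma-1},
\]
which vanishes as $\abs{\partition}\downarrow 0$ by continuity of $\control$. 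Coupled with additivity of $(s,t)\mapsto\int_s^t\Xi$ across concatenated intervals---verified by comparing balanced dyadic sequences on $[s,u]$ and $[u,t]$ with balanced dyadic sequences on $[s,t]$ that force $u$ into the partition---this establishes property (1), produces the continuous path via Proposition \ref{prop.additivityMeansPath}, and makes uniqueness immediate from the defining limit. The hardest part is precisely this second step: super-additivity alone only provides the fixed bound $\sum\control(u,u\derivative)\le\control(s,t)$, and it is the cooperation with the exponent $\gamma-1>0$ that converts it into the quantitative smallness factor $(\sup_\partition\control)^{\gamma-1}$ as $\abs{\partition}\downarrow 0$; establishing additivity of the candidate integral across concatenated intervals is an equally delicate point, since it is what assembles the local estimates into a single global statement.
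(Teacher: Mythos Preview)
Your balanced-bisection construction for the sharp constant $1/(1-2^{1-\gamma})$ is essentially the same dyadic refinement the paper uses in its final step (the paper bisects at the first point where $\control(u,\cdot)$ reaches $2^{-(n+1)}\control(s,t)$, you bisect at the point where $\control(s,\cdot)=\control(\cdot,t)$; both give $\control$-size $\le\control(s,t)/2^n$ per cell and the same geometric sum). The real difference is in how general convergence and additivity are obtained, and here your proposal has a gap.

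The paper \emph{first} proves a maximal inequality via Young's point-removal trick: in any partition with $m$ cells there is an interior node $u$ with $\control(u-,u')\le\frac{2}{m-1}\control(s,t)$, so removing nodes one at a time yields $\lvert\Xi_{s,t}-\int_\pi\Xi\rvert\le 2^\gamma\zeta(\gamma)\norm[\delta\Xi]_{\control,\gamma}\control(s,t)^\gamma$ for \emph{every} partition $\pi$. This immediately gives the Cauchy estimate \eqref{Eq.differentPartitions} for arbitrary pairs of partitions, from which both convergence along every mesh-shrinking sequence and additivity of the limit fall out at once. Only afterwards is the dyadic construction invoked to sharpen the constant.

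You reverse the order: define $\int_s^t\Xi$ via one specific balanced dyadic sequence, then attempt to bootstrap to general partitions by summing the local bounds $\lvert\Xi_{u,u'}-\int_u^{u'}\Xi\rvert\le C\control(u,u')^\gamma$. But that summation only controls $J(\pi)-\sum_{[u,u']\in\pi}\int_u^{u'}\Xi$; to conclude $J(\pi)\to\int_s^t\Xi$ you must already know $\sum_{[u,u']\in\pi}\int_u^{u'}\Xi=\int_s^t\Xi$, which is exactly the additivity you have not yet established. Your proposed verification---comparing balanced dyadic sequences on $[s,u]$, $[u,t]$ with a balanced dyadic on $[s,t]$ ``forced through $u$''---does not close the loop: the sequence through $u$ is \emph{not} the balanced dyadic of $[s,t]$ (the first bisection is at $u$, not at the balanced midpoint), so you are comparing limits along two structurally different nested sequences. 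Showing these limits agree requires precisely a device that controls $\lvert\int_\pi\Xi-\int_{\tilde\pi}\Xi\rvert$ for unrelated partitions $\pi,\tilde\pi$, which is what the paper's point-removal maximal inequality supplies and what your argument is missing. The gap is fixable---one can prove the maximal inequality and then proceed as you do---but as written the additivity step is circular.
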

\begin{remark}
	With respect to the formulation in \cite[Lemma 4.2]{FH14cou}, Proposition \ref{prop.SewingLemma} extends the so-called Sewing Lemma to the case of general control $\control$. Hence, it allows to handle the  case of $p$-variation regularity, which is more general than the case of $1/p$-H\"older regularity. 
\end{remark}

Owing to Proposition \ref{prop.SewingLemma}, we can regard the integral as the map
\[
\int : \left\lbrace \substack{\text{approximatively} \\ \text{additive} \\ \text{functionals}} \right\rbrace 
\rightarrow 
\Big\lbrace \substack{ \text{additive} \\ \text{functionals}} \Big\rbrace .
\]
Owing to Proposition \ref{prop.additivityMeansPath}, we can unambiguously replace the range $\Big\lbrace \substack{ \text{additive} \\ \text{functionals}} \Big\rbrace $ with the space of continuous paths on $\banachSpace$ starting at $0 \in \banachSpace$. Let $\approxAdditivepVariation (\timeWindow; \banachSpace)$ be the family of \approximatelyAdditive \, functions $\Xi:\simplex \rightarrow \banachSpace$ that are of finite $p$-variation, $p\geq 1$. Then, we can state the following
\begin{corol}
	The restriction of the integral map $\int$ to $\approxAdditivepVariation(\timeWindow; \banachSpace)$ takes value in the space $\Cpvar_{0}(\timeWindow;\banachSpace)$ of continuous paths on $\banachSpace$ that start at the origin $0\in\banachSpace$ and are of finite $p$-variation. Moreover, 
	\begin{eqnarray*}
	\int:   \approxAdditivepVariation(\timeWindow; \banachSpace) &\longrightarrow &\Cpvar_{0} (\timeWindow; \banachSpace) \\
	\Xi &   \longmapsto & \lim_{\abs{\partition}\downarrow 0} \sum_{u\in\pi}\Xi\subscriptuuprime
	\end{eqnarray*}
	is continuous in $p$-variation norm. 
\end{corol}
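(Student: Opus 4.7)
The plan is to deploy the Sewing Lemma (Proposition \ref{prop.SewingLemma}) as a black box and then pay the remaining technical price by means of $p$-variation estimates. Given $\Xi \in \approxAdditivepVariation(\timeWindow;\banachSpace)$, let $\control$ and $\gamma > 1$ realise the approximate additivity, so that $\norm[\delta\Xi]_{\control,\gamma} < \infty$. Proposition \ref{prop.SewingLemma} produces the continuous path $\int\Xi$, starting at $0 \in \banachSpace$, and satisfying
\[
\abs{\int_s^t\Xi - \Xist} \leq \frac{\norm[\delta\Xi]_{\control,\gamma}}{1-2^{1-\gamma}}\control^{\gamma}(s,t) =: C\,\control^\gamma(s,t)
\]
for all $\zeroSleqTtimeHorizon$. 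The two points left to check are the finiteness of the $p$-variation of $\int\Xi$ and the continuity of the linear map $\Xi \mapsto \int\Xi$.

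For the $p$-variation, I would fix an arbitrary partition $\partition$ of $\timeWindow$ and apply the triangle inequality together with the Sewing estimate to each increment:
\[
\sum_{u\in\partition}\abs{\int_u^{\uprime}\Xi}^p \leq 2^{p-1}\sum_{u\in\partition}\abs{\Xi\subscriptuuprime}^p + 2^{p-1} C^p \sum_{u\in\partition}\control^{p\gamma}(u,\uprime).
\]
The first sum is bounded by $2^{p-1}\pvarNorm[\Xi]{p}^p$. For the second, the crucial fact is that $p\gamma > 1$; this, combined with the super-additivity of the control, yields $\sum_{u}\control^{p\gamma}(u,\uprime) \leq \control(0,\timeHorizon)^{p\gamma-1}\sum_{u}\control(u,\uprime) \leq \control(0,\timeHorizon)^{p\gamma}$. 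Taking the supremum over partitions gives
\[
\pvarNorm[\int\Xi]{p}^p \leq 2^{p-1}\Big(\pvarNorm[\Xi]{p}^p + C^p\,\control(0,\timeHorizon)^{p\gamma}\Big),
\]
so that $\int\Xi \in \Cpvar_{0}(\timeWindow;\banachSpace)$ as required.

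For the continuity of the integral map, linearity reduces matters to proving that $\pvarNorm[\int(\Xi_n-\Xi)]{p} \to 0$ whenever $\Xi_n \to \Xi$ in $\approxAdditivepVariation$. Applying the same bound to the difference $\Xi_n - \Xi$ yields
\[
\pvarNorm[\int(\Xi_n-\Xi)]{p}^p \leq 2^{p-1}\Big(\pvarNorm[\Xi_n-\Xi]{p}^p + \tilde{C}_n^{\,p}\,\control(0,\timeHorizon)^{p\gamma}\Big),
\]
where $\tilde{C}_n$ is the Sewing constant attached to $\Xi_n - \Xi$. The main obstacle here is not really analytic but rather taxonomic: one must pin down the topology on $\approxAdditivepVariation(\timeWindow;\banachSpace)$ so that convergent sequences share a common control $\control$ and exponent $\gamma$ and satisfy $\norm[\delta(\Xi_n-\Xi)]_{\control,\gamma} \to 0$ alongside $\pvarNorm[\Xi_n-\Xi]{p} \to 0$. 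With respect to the natural graph norm that quantifies both pieces of data, both contributions above vanish in the limit and the desired continuity follows.
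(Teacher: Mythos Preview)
Your argument is correct and is precisely the computation the paper has in mind: the paper's own proof is the single line ``Immediate from equation \eqref{Eq.errorOfIncrements}'', and you have unpacked that line by combining the Sewing estimate with the $p$-variation of $\Xi$ via the triangle inequality and the super-additivity bound $\sum_u\control^{p\gamma}(u,\uprime)\le\control(0,\timeHorizon)^{p\gamma}$. Your caveat about the topology on $\approxAdditivepVariation$ is well taken---the paper never makes it explicit, and your reading (a graph-type norm controlling both $\pvarNorm[\cdot]{p}$ and $\norm[\delta\cdot]_{\control,\gamma}$ for a common $\control,\gamma$) is the natural one that makes the continuity statement meaningful.
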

\begin{proof}
	Immediate from equation \eqref{Eq.errorOfIncrements}.
\end{proof}

\subsection{Young integrals}

Let $X:\timeWindow \rightarrow\banachSpace$ be continuous and of finite $p$-variation. Let $H: \timeWindow \rightarrow W$ be continuous and of finite $q$-variation, where $W=\Homomorphisms(\banachSpace;V)$ and $V$ is a Banach space. We say that $p$ and $q$ are \emph{Young complementary} if $1/p + 1/q >1$.
\begin{prop}[``Young integral\footnote{The original article by L. C. Young is \cite{You36ine}, where the extension of Stieltjes integral was introduced. Our reference is the rough path-oriented presentation of the Young integral contained in \cite[Chapter 6]{FV10mul}.}'']\label{prop.YoungIntegral}
	Let $p$ and $q$ be Young complementary and set 
	\[
	\Xist := H_s \Xst
	\]
	for all $\zeroSleqTtimeHorizon$, or $\Xist := H_t\Xst$ for all $\zeroSleqTtimeHorizon$. Then, $\Xi$ is \approximatelyAdditive \, and of finite $p$-variation. As a consequence, the integral 
	\begin{equation}\label{eq.definitionYoungIntegral}
	H.X := \lim_{\abs{\partition}\downarrow 0 } \sum_{u\in\partition}\Xi\subscriptuuprime
	\end{equation}
	defines a continuous path in $V$ of finite $p$-variation. The integral in \eqref{eq.definitionYoungIntegral} does not depend on whether $\Xi$ is defined according to $\Xist = H_s\Xst$ or to $\Xist = H_t\Xst$.
\end{prop}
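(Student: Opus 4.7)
The plan is to verify the hypotheses of the Sewing Lemma (Proposition~\ref{prop.SewingLemma}) for $\Xi_{s,t}:=H_s\Xst$, deduce the finite $p$-variation of the resulting integral, and then show that replacing $H_s$ by $H_t$ leaves the limit unchanged. First I would check that $\Xi_{s,s}=0$ and $\Xi_{s,t}\to 0$ as $t\downarrow s$ by continuity of $X$, while the crude bound $\abs{\Xi_{s,t}}\leq \supNorm[H]\cdot\abs{\Xst}$ already gives $\pvarNormInterval[\Xi]{p}{\timeWindow}\leq\supNorm[H]\cdot\pvarNormInterval[X]{p}{\timeWindow}<\infty$, which settles the finite $p$-variation claim for $\Xi$. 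The decisive identity, via additivity of the increments of $X$, is
\[
\delta\Xi_{s,u,t}=H_s\Xst-H_s\Xsu-H_u\Xut=-(H_u-H_s)\Xut,
\]
so that $\abs{\delta\Xi_{s,u,t}}\leq \pvarNormInterval[H]{q}{[s,u]}\,\pvarNormInterval[X]{p}{[u,t]}$.

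To fit this into Definition~\ref{def.approximateAdditivity} I would introduce the control
\[
\control(s,t):=\pvarNormInterval[H]{q}{[s,t]}^{q}+\pvarNormInterval[X]{p}{[s,t]}^{p},
\]
which is continuous, vanishes on the diagonal, and is superadditive as a sum of two controls. Monotonicity in the interval then yields
\[
\abs{\delta\Xi_{s,u,t}}\leq \control(s,u)^{1/q}\,\control(u,t)^{1/p}\leq \control(s,t)^{1/q+1/p}=\control(s,t)^{\gamma},
\]
with $\gamma:=\tfrac{1}{p}+\tfrac{1}{q}>1$, the strict inequality being precisely Young complementarity. Proposition~\ref{prop.SewingLemma} therefore produces a continuous additive path $H.X$ satisfying $\abs{\int_s^t\Xi-H_s\Xst}\leq C\,\control(s,t)^{\gamma}$; combining this with $\abs{H_s\Xst}\leq \supNorm[H]\abs{\Xst}$ and using $p\gamma>1$ together with the superadditivity of $\control$ then delivers $\pvarNormInterval[H.X]{p}{\timeWindow}<\infty$ by a routine partition argument.

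For the endpoint-independence claim I would set $\tilde\Xi_{s,t}:=H_t\Xst$ and observe that $(\Xi-\tilde\Xi)_{s,t}=-(H_t-H_s)\Xst$ is itself bounded by $\control(s,t)^{\gamma}$, hence
\[
\sum_{u\in\partition}\abs{(\Xi-\tilde\Xi)_{u,\uprime}}\leq \sum_{u\in\partition}\control(u,\uprime)^{\gamma}\leq \Bigl(\max_{u\in\partition}\control(u,\uprime)^{\gamma-1}\Bigr)\control(0,\timeHorizon),
\]
which vanishes as $\abs{\partition}\downarrow 0$ by continuity of $\control$; the two Riemann-sum sequences therefore share the same limit. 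The main subtlety I anticipate is the continuity of $\control$ itself: it is the continuity of $X$ (and of $H$) that makes the functional $[s,t]\mapsto\pvarNormInterval[X]{p}{[s,t]}^{p}$ continuous and ensures $\mathrm{osc}(\control,\abs{\partition})\to 0$, which in turn legitimises both the dyadic construction inside Proposition~\ref{prop.SewingLemma} and the vanishing argument above. Everything else is the interplay, familiar from Young's original argument, between the telescoping identity for $\delta\Xi$ and the Hölder-type splitting of $\control^{1/p+1/q}$.
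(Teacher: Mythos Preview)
Your proof is correct and follows essentially the same approach as the paper: both compute $\delta\Xi_{s,u,t}=-H_{s,u}X_{u,t}$ via the additivity of $X$, bound it by a control to the power $\gamma=\tfrac{1}{p}+\tfrac{1}{q}>1$ (you use the sum control $\control_H+\control_X$, the paper uses the product $\control_H^{1/q}\control_X^{1/p}$, either works), and invoke the Sewing Lemma. For the endpoint-independence the paper instead applies H\"older with the conjugate exponent $p'=p/(p-1)$ and the oscillation of $H$, whereas your argument via $\sum\control(u,\uprime)^{\gamma}\leq\mathrm{osc}(\control,\abs{\partition})^{\gamma-1}\control(0,\timeHorizon)$ is equally valid and arguably cleaner; this is a cosmetic difference, not a substantive one.
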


The continuity of $H$ was only used to show that the choice to evaluate $H$ at the beginning or at the end of the partition subintervals does not affect the integral. The two choices are respectively referred to as adapted evaluation and terminal evaluation. If $H$ is not continuous but of bounded variation, the Young integral is defined (because $q=1$), but depends on the evaluation choice. If $\partition$ is a partition of $\timeWindow$, we set 
\[
\piH_t:= \sum_{u \in \partition} H_u \one \lbrace t \in (u,\uprime]\rbrace,
\]
which denotes the piecewise constant caglad approximation of $H$ on the grid $\partition$. We let $\piH.X$ be the Young integral of $H$ against $X$ with terminal evaluation, namely
\[
(\piH.X)_{0,t} := \sum_{u\in\partition_t} H_u X\subscriptuuprime. 
\]
In this way, for $H$ continuous and of finite $q$-variation, $1/p+1/q>1$, we can write
\begin{equation}\label{eq.cagladApproximationYoungIntegral}
H.X = \lim_{\abs{\partition}\downarrow 0} \piH.X . 
\end{equation}

\subsection{Compensated integrals \`a la Gubinelli}

When the complementary regularities of integrand $H$ and integrator $X$ are not sufficient for Young integration, we resort to compensated Riemann sums. In particular this is the case if $H$ and $X$ have the same $p$-variation regularity for some $p$ greater than $2$. 

As above, let $X$ be a continuous path of finite $p$-variation with trajectory in the Banach space $\banachSpace$. Recall  that $W$ denotes $\Homomorphisms(\banachSpace;V)$. We use the identification $\Homomorphisms(\banachSpace, $ $ W) $ $ \cong\Homomorphisms(\banachSpace\otimes \banachSpace; V)$, and we write $\Homomorphisms_{\text{sym}}(\banachSpace\otimes \banachSpace; V)$ for the subset of those $\ell$ in $\Homomorphisms(\banachSpace\otimes \banachSpace; V)$ such that $\ell \, (a\otimes b) = \ell \,  (b\otimes a)$ for all $a,b \in \banachSpace$. Also, the symbol $\banachSpace \odot \banachSpace$ will denote the symmetric tensor product of the Banach space $\banachSpace$, so that we can identify $\Homomorphisms_{\text{sym}}(\banachSpace\otimes \banachSpace; V)\cong\Homomorphisms(\banachSpace\odot\banachSpace; V)$. We say that a continuous path $H: \timeWindow \rightarrow W$ admits a  symmetric \emph{Gubinelli derivative} $H\derivative$ with respect to $X$ if there exists a continuous path $H\derivative :\timeWindow \rightarrow \Homomorphisms_{\text{sym}} (\banachSpace\otimes\banachSpace;V)$ of finite $q$-variation such that
\begin{enumerate}
	\item $q$ and $p/2$ are Young complementary;
	\item $R^H_{s,t} : =H_{s,t} - H\derivative_{s}\Xst$ is of finite $pq/(p+q)$-variation.
\end{enumerate}
In this case we say that the pair $(H,H\derivative)$ is $X$-controlled of $(p,q)$-variation regularity. Notice that the regularities of $R^H$ and of $X$ imply that $H$ is of finite $p$-variation. 

\begin{defi}[``Enhancement of a path'']\label{defi.enhancementOfAPath}
	Let $X$ be in $\Cpvar(\timeWindow; \banachSpace)$ and let $A$ be in $\Cpvar[p/2](\timeWindow; \banachSpace\odot\banachSpace)$. The $A$-\emph{enhancement} of $X$ is the pair $\roughPair$, where
	\[
	2\secondOrderXst = \Xst \otimes \Xst - A_{s,t}.
	\]
\end{defi}
Similarly, we speak of an enhanced path\footnote{
	An enhanced path is what in \cite[Chapter 5]{FH14cou} is called \emph{reduced rough path}. An enhanced path satisfies the following two properties, which are taken as the defining properties of reduced rough paths:
	\begin{enumerate}
		\item the symmetric second order process $\secondOrderX=\text{sym}\secondOrderX$ is of finite $p/2$-variation;
		\item the reduced Chen's identity holds, i.e. for all $\timesSleqUleqT$
		\[
		\secondOrderXst - \secondOrderX_{s,u} -  \secondOrderX_{u,t} =  \Xsu \odot \Xut .
		\]
	\end{enumerate}
	 In \cite[Lemma 5.4]{FH14cou} these two properties are shown to necessarily imply the more explicit formulation that we adopted.
 }
 $\roughPair$ of $p$-variation regularity if $X$ is in $\Cpvar(\timeWindow; \banachSpace)$ and $(s,t)\mapsto   \Xst \otimes \Xst -2\secondOrderXst $ defines an additive $\banachSpace\odot \banachSpace$-valued function of finite $p/2$-variation. The path $A_{s,t}:=\Xst \otimes \Xst -2\secondOrderXst$ is called the \emph{enhancer} of $\roughX$ and we often denote such enhancer with the symbol 
\[
\roughBracket_{s,t}:=\Xst \otimes \Xst -2\secondOrderXst.
\]
The symbol  $\roughBracket$ will be referred to as  \emph{volatility} enhancer when the financial meaning of it is to be stressed.  
We say that $\roughPair$ is a \emph{bounded-variation enhancement} of $X$ if 
\[
\sup\left\lbrace
\sum_{u \in \partition}\abs{\roughBracket\subscriptuuprime}: \, \partition \text{ partition of } \timeWindow
\right\rbrace
<\infty.
\]
Notice that $\delta\secondOrderX$ does not depend on the enhancer because $\roughBracket$ is additive; moreover, for all $\timesSleqUleqT$ the following reduced Chen identity holds
\[
\delta\secondOrderX_{s,u,t}=\Xsu\odot\Xut.
\]

\begin{lemma}\label{lemma.approxAdditivityOfCompensatedSummand}
	Let $\roughPair$ be an enhanced path and let $(H,H\derivative)$ be $X$-controlled of $(p,q)$-variation regularity, with $H\derivative$ being symmetric. Then,
	\[
	\Xist := H_s \Xst +H\derivative_{s}\secondOrderXst
	\]
	is \approximatelyAdditive.
\end{lemma}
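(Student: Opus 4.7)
The plan is to verify the two conditions of Definition \ref{def.approximateAdditivity} directly. The diagonal condition $\Xi_{s,s}=0$ and right-continuity on the diagonal follow at once from continuity of $H$, $H\derivative$, $X$, $\secondOrderX$ together with $\Xst\to 0$ and $\secondOrderXst\to 0$ as $t\downarrow s$. The content of the lemma is the second condition, namely producing a control $\control$ and an exponent $\gamma>1$ such that $|\delta \Xist[\Xi]|\leq \control^{\gamma}(s,t)$.

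The first real step is to compute $\delta\Xi_{s,u,t}$ explicitly, using additivity of $X$ and the reduced Chen identity $\secondOrderXst-\secondOrderX_{s,u}-\secondOrderX_{u,t}=\Xsu\odot \Xut$ for the enhancer. After collecting terms one obtains
\[
\delta\Xi_{s,u,t}=-H_{s,u}\Xut\;-\;H\derivative_{s,u}\secondOrderX_{u,t}\;+\;H\derivative_{s}\bigl(\Xsu\odot \Xut\bigr).
\]
Next I would insert the controlled-path decomposition $H_{s,u}=H\derivative_{s}\Xsu+R^{H}_{s,u}$ into the first term. The key observation, which I expect to be the crux of the argument, is that the symmetry hypothesis on $H\derivative_{s}$ forces $H\derivative_{s}(\Xsu\otimes \Xut)=H\derivative_{s}(\Xsu\odot \Xut)$, so the two cubic contributions cancel exactly and we are left with
\[
\delta\Xi_{s,u,t}=-R^{H}_{s,u}\Xut\;-\;H\derivative_{s,u}\secondOrderX_{u,t}.
\]

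The final step is routine estimation. Let $\control_{X}$, $\control_{H\derivative}$, $\control_{\secondOrderX}$, $\control_{R^{H}}$ be $p$-, $q$-, $(p/2)$- and $(pq/(p+q))$-variation controls of $X$, $H\derivative$, $\secondOrderX$ and $R^{H}$ respectively. Then
\[
|R^{H}_{s,u}\Xut|\;\lesssim\;\control_{R^{H}}(s,u)^{(p+q)/(pq)}\,\control_{X}(u,t)^{1/p},
\]
\[
|H\derivative_{s,u}\secondOrderX_{u,t}|\;\lesssim\;\control_{H\derivative}(s,u)^{1/q}\,\control_{\secondOrderX}(u,t)^{2/p}.
\]
Setting $\control:=\control_{X}+\control_{H\derivative}+\control_{\secondOrderX}+\control_{R^{H}}$ and using $\control_{i}\leq\control$ together with superadditivity, both summands are dominated by a constant times $\control(s,t)^{\gamma}$ with $\gamma=1/q+2/p$, since $(p+q)/(pq)+1/p=1/q+2/p$. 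The Young-complementarity of $q$ and $p/2$ is exactly the statement $1/q+2/p>1$, so $\gamma>1$ and the required bound $|\delta\Xi_{s,u,t}|\leq \control(s,t)^{\gamma}$ holds, completing the proof of approximate additivity.
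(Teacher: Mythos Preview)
Your proof is correct and follows essentially the same route as the paper. You make explicit the computation of $\delta\Xi_{s,u,t}$ and the cancellation coming from the symmetry of $H\derivative$, which the paper leaves implicit; the paper starts directly from the bound $|\delta\Xi_{s,u,t}|\leq \control_{R^{H}}^{1/p+1/q}\control_{X}^{1/p}+\control_{H\derivative}^{1/q}\control_{\secondOrderX}^{2/p}$ and then packages the right-hand side into a single control via the product rule $\control_{1}^{\gamma_{1}}\control_{2}^{\gamma_{2}}$ with $\gamma_{1}+\gamma_{2}=1$, whereas you use the simpler sum $\control=\sum_{i}\control_{i}$ together with monotonicity---a cosmetic difference only.
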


As a consequence of Lemma \ref{lemma.approxAdditivityOfCompensatedSummand}, the integral given by the compensated Riemann sum
\[
(H,H\derivative).(X,\secondOrderX) = 
\lim_{\abs{\partition}\downarrow 0}
\sum_{u \in \partition} 
\Big[
H_uX\subscriptuuprime + H\derivative_u \secondOrderXuuprime
\Big]
\]
is well-defined. Analogously to \eqref{eq.cagladApproximationYoungIntegral}, we write
\[
(\piH,\piH\derivative).(X,\secondOrderX) = 
\sum_{u \in \partition} 
\Big[
H_uX\subscriptuuprime + H\derivative_u \secondOrderXuuprime
\Big],
\]
so that
\[
(H,H\derivative).(X,\secondOrderX) = 
\lim_{\abs{\partition}\downarrow 0}
(\piH,\piH\derivative).(X,\secondOrderX).
\]

\subsubsection*{Space-gradient integrands associated with $q$-\emph{moderate} pairs}
If $J$ is a time interval,  $n$ and $m$ are non-negative integers and $\alpha$, $\beta$ are in $[0,1)$, consider the space
\[
C_{\text{loc}}^{m+\beta, \, \, n+\alpha}(J\times \Rd; \R^e)
\]
of $\R^e$-valued functions that are $m$ times continuously differentiable in time with the $m$-th time derivative of local $\beta$-H\"older regularity, and $n$ times continuously differentiable in space with all the $n$-th order space derivatives of local $\alpha$-H\"older regularity. Notice that nothing is assumed about the cross derivatives in time and space of functions  in $C_{\text{loc}}^{m+\beta, \, \, n+\alpha}$. Let $ C_{\text{cross}}^{m+\beta, \, \, n+\alpha}(\timeWindow\times \Rd; \R^e)$ be the subspace of $C_{\text{loc}}^{m+\beta, \, \, n+\alpha}(\timeWindow\times \Rd; \R^e)$ consisting of functions $f$ such that
\begin{enumerate}
	\item for every multiindex $I$ with $\abs{I}=n$ and every compact $K\Subset\Rd$, 
	\[
	\sup \Big\lbrace \norm[\partial_{x}^{I}f(t,\cdot)]_{\alpha\text{-H\"ol},\,  K}: \, 0\leq t\leq T \Big\rbrace < \infty;
	\]
	\item for every compact $K\Subset\Rd$,
	\[
	\sup \Big\lbrace \norm[\partial_t^m f (\cdot, x)]_{\beta\text{-H\"ol},\, [0,T]}: \, x \in K\Big\rbrace < \infty.
	\]
\end{enumerate}
Let $\ContFunctionsOfEllipticPDEregularity$ be the space 
\begin{equation}\label{eq.definitionContFunctionsOfEllipticPDEregularity}
\ContFunctionsOfEllipticPDEregularity :=
 C_{\text{loc}}^{1+\alpha/2, \, \, 2+\alpha}([0,\timeHorizon)\times \Rd)
 \cap
 C(\timeWindow\times\Rd).
\end{equation}

\begin{defi}[``$q$-Moderation'']\label{def.qModeration}
Let $w$ be in $\ContFunctionsOfEllipticPDEregularity$ and let $X$ be a continuous path on $\Rd$ of finite $p$-variation, with $p-2<\alpha<1$. We say that the pair $(w,X)$ is $q$-\emph{moderate} if
\begin{enumerate}
	\item the paths 
	\begin{equation*}
	\begin{split}
	H:& \quad t\longmapsto \gradx w(t,X_t)\\
	H\derivative:& \quad  t\longmapsto \Hessianx w (t,X_t),\\
	& \qquad \qquad \quad \quad \quad  0\leq t <\timeHorizon,
	\end{split}
	\end{equation*}
	can be continuously extended up to $\timeWindow$, and $H\derivative$ is of finite $q$-variation for some $1-2/p<1/q<\alpha/p$;
	\item there exists a control function $\control$ such that for all $x$ in the trace $X\timeWindow$ and all $\zeroSleqTtimeHorizon$
	\[
	\abs{\gradx w (t,x) - \gradx w (s,x)}^{p*} \leq \control (s,t),
	\]
	where $p*= pq/(p+q)$;
	\item 
	\[
	\sup_{0\leq s \leq \timeHorizon} \norm[\Hessianx w (s,\cdot)]_{\alpha\text{-H\"ol}, \convexHull X\timeWindow} <\infty,
	\]
	where $\convexHull X\timeWindow$ is the convex hull of the trace of $X$. 
\end{enumerate}
\end{defi}
\begin{remark}
	Let $0<\alpha<1$ and $p,q \geq 1$ be such that $1-2/p<1/q<\alpha/p$.
	Assume that $w \in  C_{\text{loc}}^{1 +\alpha/2, \, \, 2 +\alpha}(\timeWindow\times \Rd; \R)$ is such that $\gradx w$ is in $ C_{\text{cross}}^{1/p + 1/q, \, \, 1 +\alpha}(\timeWindow\times \Rd; \Rd)$ and $\Hessianx w $ is in $ C_{\text{cross}}^{ 1/q, \, \, \alpha}(\timeWindow\times \Rd; \R^{d\times d})$. Then for all $X$ in $\Cpvar(\timeWindow;\Rd)$ the pair $(w,X)$ is $q$-moderate. In particular this holds if $w$ is twice continuously differentiable in the combined time-space variable $(t,x)$ with second derivatives of $\alpha$-H\"older regularity. 
\end{remark}

\begin{lemma}\label{lemma.qModerationAndControlledPaths}
	Let $w$ be in $\ContFunctionsOfEllipticPDEregularity$ and let $X$ be a continuous $\Rd$-valued path of finite $p$-variation, with $p-2<\alpha<1$. Assume that the pair $(w,X)$ is $q$-moderate, $1-2/p<1/q<\alpha/p$. Then, 
	\[
	(H,H\derivative):=\left(\gradx w (t,\Xt),\Hessianx w(t,\Xt)\right)
	\]
	is a Gubinelli $X$-controlled path of $(p,q)$-variation regularity.
\end{lemma}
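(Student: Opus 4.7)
The plan is to verify the two defining properties of an $X$-controlled path of $(p,q)$-variation regularity directly from the $q$-moderation hypothesis. Young complementarity of $q$ and $p/2$ requires $1/q+2/p>1$, which is exactly the assumption $1-2/p<1/q$; that $H$ and $H\derivative$ are continuous paths on $\timeWindow$ with values in $\Rd$ and in $\Homomorphisms_{\text{sym}}(\Rd\otimes\Rd;\R)$ (the Hessian of a $C^{2}$ function is symmetric), and that $H\derivative$ is of finite $q$-variation, are all built into the definition of $q$-moderation. It remains only to control the Gubinelli remainder.

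The core of the argument is to show that
\[
R^H_{s,t} \;=\; H_{s,t} - H\derivative_{s}\Xst \;=\; \gradx w(t,\Xt) - \gradx w(s,\Xs) - \Hessianx w(s,\Xs)\Xst
\]
is of finite $p^{*}$-variation, with $p^{*}=pq/(p+q)$. I would split it as
\[
R^H_{s,t} \;=\; \underbrace{\big[\gradx w(t,\Xt)-\gradx w(s,\Xt)\big]}_{A_{s,t}} \;+\; \underbrace{\int_{0}^{1}\big[\Hessianx w(s,\Xs+\theta\Xst)-\Hessianx w(s,\Xs)\big]\Xst\, d\theta}_{B_{s,t}},
\]
via the integral form of Taylor's theorem in the space variable. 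The time-only piece $A_{s,t}$ satisfies $\abs{A_{s,t}}^{p^{*}}\leq \control(s,t)$ directly from item 2 of $q$-moderation evaluated at $x=\Xt\in X\timeWindow$. For $B_{s,t}$ one notes that $\Xs+\theta\Xst$ lies on the segment between $\Xs$ and $\Xt$, hence in $\convexHull X\timeWindow$, so item 3 of $q$-moderation gives $\abs{\Hessianx w(s,\Xs+\theta\Xst)-\Hessianx w(s,\Xs)}\leq C\abs{\Xst}^{\alpha}$ uniformly in $s$ and $\theta$, whence $\abs{B_{s,t}}\leq C\abs{\Xst}^{1+\alpha}\leq C\,\control_{X}(s,t)^{(1+\alpha)/p}$, where $\control_{X}$ is the $p$-variation control of $X$.

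Combining the two estimates and raising to the $p^{*}$-th power yields a bound of the form $\abs{R^H_{s,t}}^{p^{*}}\lesssim \control_{1}(s,t)+\control_{X}(s,t)^{(1+\alpha)p^{*}/p}$, and since sums and powers $\geq 1$ of controls are again controls, this settles the $p^{*}$-variation provided that $(1+\alpha)p^{*}/p\geq 1$. Rearranging, this condition reads $\alpha q\geq p$, i.e.\ exactly the upper inequality $1/q<\alpha/p$ in the $q$-moderation hypothesis. The step that calls for the most care is precisely matching the Taylor remainder with the prescribed $p^{*}$-variation scale: both ends of the chain $1-2/p<1/q<\alpha/p$ enter essentially, the lower end securing Young complementarity of $q$ and $p/2$, and the upper end making the space Taylor remainder sit at the required scale in $\control_{X}$.
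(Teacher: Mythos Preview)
Your proof is correct and follows essentially the same approach as the paper: the same time/space splitting of $R^{H}_{s,t}$, the same use of item~2 of $q$-moderation for the time piece, the same integral Taylor estimate combined with item~3 for the space piece, and the same verification that $(1+\alpha)p^{*}/p>1$ is equivalent to the hypothesis $1/q<\alpha/p$. Your additional explicit remarks on Young complementarity and the symmetry of the Hessian are welcome but do not depart from the paper's argument.
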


\section[Enhanced  paths of diffusion type]{Enhanced  paths of diffusion type}\label{sec.enhancedPathOfDiffusionType}

Relying on the pathwise integrals introduced in Section \ref{sec.pathwiseIntegrals}, we now describe a framework to assess the pathwise feature of price trajectories that actually affects the hedging practice, disregarding the  probabilistic specifications of the stochastic models. We will consider enhanced price paths (as defined in Definition \ref{defi.enhancementOfAPath}) that embed the essential feature of the stochastic models. Among these we isolate those that descend from classical diffusion models (Markovian SDEs) wich we adopt as benchmark.

A benchmark Markovian model consists of the pair $\benchmarkMarkovianModel$, where $(\Omega,\sigmaAlgebra,$ $\pricingMeasure,(\sigmaAlgebra_t)_t)$ is a filtered probability space and $\volatilityOperator$ is a diffusion generator. Until further notice, we adopt the perspective of discounted prices, so that only the second order part of $\volatilityOperator$ is considered, with coefficients thought of as functions of the discounted stock price. 
\begin{defi}[``$\alpha$-H\"older volatility operator'']\label{defi.volatilityOperator}
Let $\alpha$ be in the open interval $(0,1)$. An $\alpha$-\emph{H\"older volatility operator} is a second order elliptic differential operator of the form 
\[
\volatilityOperator = \trace \left(\sigmasquare \nabla\squared \right)/2 
=\aij\partialij/2,
\]
where $\sigmasquare = (\aij)_{1\leq i,j\leq d}$ is symmetric and such that  all  coefficients $\aij:\Rd\rightarrow\R$, $1\leq i,j\leq d$, are $\alpha$-H\"older regular. 
\end{defi}
Notice that, for the definition of the classical delta hedging of equation \eqref{Eq.replicatingStrategy}, only the diffusion generator of the market model is relevant, whereas the stochastic base is not. With this respect, we sometimes write ``$\volatilityOperator$-delta hedging'', in order to emphasise that it is defined in terms of the semigroup $e^{t\volatilityOperator}$ of $\volatilityOperator$ as explained in Section \ref{sec.preliminariesAndHeuristics}.

Given an $\alpha$-H\"older volatility operator  $\volatilityOperator = \trace \left(\sigmasquare \nabla\squared \right)/2 $ and a continuous path $X:\timeWindow\rightarrow \Rd$ of finite $p$-variation, we can consider the $\volatilityOperator$-enhancement\footnote{
	Here we are abusing notation: Definition \ref{defi.enhancementOfAPath} prescribed to put the rough bracket $A=A_{s,t}$ in front of the word ``enhancement'', so that actually we should have written $\int_{s}^{t}\sigmasquare(X_u)du$-enhancement. However, the employed notational distortion does not cause confusion and rather stresses the nature of an enhancement of  diffusion-type.
} $\roughPair$ of $X$ given be 
\[
\secondOrderX_{s,t} = \half \big(\Xst \otimes \Xst \big) - \int_{s}^{t} \frac{\sigmasquare(X_u)}{2}du.
\]
Notice that such construction yields a bounded variation enhancement. 
The converse construction, which starts from a bounded variation enhancement  and defines a differential operator, is formalised in the following 

\begin{defi}[``Enhanced path of $\alpha$-diffusion type'']
	Let $\roughPair$ be an enhanced path of $p$-variation regularity. We say that $\roughX$ is of $\alpha$-\emph{diffusion type}, $p-2<\alpha<1$,  if by setting 
	\begin{equation}\label{Eq.lineMeasureInducedByRoughBracket}
	m^{i,j}\Big((s,t]\Big) := \roughBracketijst, \qquad 0\leq s < t \leq T, \quad 1\leq i,j \leq d, 
	\end{equation}
	absolute continuous measures are defined on the interval $\timeWindow$, and if their densities with respect to the Lebesgue measure are given by
	\[
	\frac{dm^{i,j}}{dt} = \aij (X_t), 
	\]
	for some $\sigmasquare=(\aij)_{1\leq i,j \leq d}$ in $C^{\alpha\text{-H\"ol}}_{\text{loc}}(\Rd,\R^{d\times d})$  satisfying the ellipticity condition
	\begin{equation}\label{eq.ellipticityOfVolatilityOperator}
	\aij (x) \xi_i \xi_j \geq c(x) \abs{\xi}^2, \qquad \forall x,\xi \in \Rd,
	\end{equation}
	with some continuous strictly positive $c:\Rd \rightarrow \R_+$. 		
	The operator
	$		\volatilityOperator ^{\roughBracket} :=  \aij  (x) \partialij	\, /2	$
	is called $\roughBracket$-volatility operator, and we say that a diffusive price with Markov generator $\generatorL$ is $\roughBracket$-compatible if the second order part of $\generatorL$ is equal to $\volatilityOperator^{\roughBracket}$.
\end{defi}	

\begin{remark}\label{remark.ellipticityAssumption}
	The ellipticity condition in equation \eqref{eq.ellipticityOfVolatilityOperator} is in place in order to apply the theory from \cite[Chapter 2]{LB07ana} to the existence and uniqueness of semigroups on $C_b(\Rd)$ associated with the volatility operator $\volatilityOperator$. If the solution to the PDE associated with  $\volatilityOperator$ is known to posses a unique solution, the assumed ellipticity can be removed. This is the case for example of the classical Black-Schoels partial differential equation with volatility operator $\volatilityCoefficient\squared  x \squared \partial\squared_{xx} /2$. 
\end{remark}

An enhanced path of $\alpha$-diffusion type is the minimal information that the PDE pricing technology requires from a probabilistic model. Indeed, assume that we wish to use the PDE pricing technology to price a contingent claim $h(X_\timeHorizon)$, where $h$ is in $C_b(\Rd)$ and $X_\timeHorizon$ is the terminal value of a continuous price path $X$ of finite $p$-variation. Let $\roughPair$ be an enhancement of $X$ of $\alpha$-diffusion type and consider the equation 
\begin{equation}\label{eq.minimalPDEtechnology}
\begin{cases}
\Big(\partial_t + \volatilityOperator^{\roughBracket} \Big) \BSsolutionInForwardPrices = 0  & \text{ in } [0,T) \times \Rd \\
\BSsolutionInForwardPrices(T,\cdot)=\htilde(\cdot) & \text{ on } \lbrace T \rbrace \times \Rd . 
\end{cases}
\end{equation}
Then, the Cauchy problem \eqref{eq.minimalPDEtechnology} admits\footnote{
	The existence and regularity of a solution to \eqref{eq.minimalPDEtechnology} is proved for example in \cite[Theorem 2.2.1]{LB07ana}. Recall that the function space $\ContFunctionsOfEllipticPDEregularity$ was defined in equation \eqref{eq.definitionContFunctionsOfEllipticPDEregularity}.
} a solution $\BSsolutionInForwardPrices$ in $\ContFunctionsOfEllipticPDEregularity$ and, on any $\roughBracket$-compatible market model, the value $\BSsolutionInForwardPrices(t,X_t)$ is the discounted price at time $t<\timeHorizon$ of the option maturing at $\timeHorizon$  and yielding $h(X_\timeHorizon)$. 

We are in the position to give the pathwise counterpart to equation \eqref{Eq.integralAccruing}, which is the linchpin of delta hedging. 

\begin{prop}\label{prop.pathwiseIntegralAccruing}
	Let $\roughPair$ be an enhanced path of $\alpha$-diffusion type. Let $\BSsolutionInForwardPrices$ be the solution to \eqref{eq.minimalPDEtechnology} and assume that the pair $(\BSsolutionInForwardPrices,X)$ is $q$-moderate, for some $1-2/p< 1/q<\alpha/p$. Then,
	\[
	(H_t,H\derivative_t):=
	\left(
	\gradx \BSsolutionInForwardPrices (t,\Xt),\Hessianx \BSsolutionInForwardPrices(t,\Xt)
	\right)
	\]
	is a Gubinelli $X$-controlled path of $(p,q)$-variation regularity, and it is such that 
	\begin{equation}\label{eq.pathwiseIntegralAccruing}
	\left((H,H\derivative).(X,\secondOrderX)\right)_{s,t}
	=
	\BSsolutionInForwardPrices(t,\Xt) - \BSsolutionInForwardPrices(s,\Xs)
	\end{equation}
	for all $\zeroSleqTtimeHorizon$.
\end{prop}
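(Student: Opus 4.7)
The first claim is immediate from Lemma \ref{lemma.qModerationAndControlledPaths} applied to the solution $\BSsolutionInForwardPrices \in \ContFunctionsOfEllipticPDEregularity$ to the Cauchy problem \eqref{eq.minimalPDEtechnology}. Consequently, by Lemma \ref{lemma.approxAdditivityOfCompensatedSummand}, the function $\Xi_{s,t} := H_s \Xst + H\derivative_s \secondOrderXst$ is approximately additive, and so the Sewing Lemma (Proposition \ref{prop.SewingLemma}) produces the compensated-Riemann integral $\int\Xi = ((H, H\derivative).(X, \secondOrderX))$. My plan is to compare $\Xi$ with the trivially additive $\tilde\Xi_{s,t} := \BSsolutionInForwardPrices(t,\Xt) - \BSsolutionInForwardPrices(s,\Xs)$ (which integrates to itself) and invoke uniqueness: if $|\tilde\Xi_{s,t} - \Xi_{s,t}| \le C\, \control^{\gamma}(s,t)$ for some control $\control$ and some $\gamma > 1$, then the associated additive paths agree, giving \eqref{eq.pathwiseIntegralAccruing}.

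\textbf{Core estimate.} I would split
\[
\BSsolutionInForwardPrices(t,\Xt) - \BSsolutionInForwardPrices(s,\Xs) = [\BSsolutionInForwardPrices(s,\Xt) - \BSsolutionInForwardPrices(s,\Xs)] + [\BSsolutionInForwardPrices(t,\Xt) - \BSsolutionInForwardPrices(s,\Xt)].
\]
A second-order spatial Taylor expansion treats the first bracket: using the $\alpha$-H\"older bound on $\Hessianx\BSsolutionInForwardPrices(s,\cdot)$ uniformly over $s \in \timeWindow$ and over $\convexHull X\timeWindow$ (supplied by $q$-moderation), the remainder is $\lesssim |\Xst|^{2+\alpha}$, and the enhancement identity $\Xst \otimes \Xst = 2\secondOrderXst + \int_s^t \sigmasquare(X_u)\,du$ rewrites the quadratic term as $H\derivative_s \secondOrderXst + \half \int_s^t a^{ij}(X_u)\partial_{ij}\BSsolutionInForwardPrices(s,\Xs)\,du$. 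The second bracket, by the PDE $\partial_t\BSsolutionInForwardPrices = -\half a^{ij}\partial_{ij}\BSsolutionInForwardPrices$, equals $-\half \int_s^t a^{ij}(\Xt)\partial_{ij}\BSsolutionInForwardPrices(u,\Xt)\,du$. The discrepancy $\tilde\Xi_{s,t} - \Xi_{s,t}$ thus reduces to the Taylor remainder plus $\half \int_s^t \Delta^{ij}(u)\,du$, where
\[
\Delta^{ij}(u) := a^{ij}(X_u)\partial_{ij}\BSsolutionInForwardPrices(s,\Xs) - a^{ij}(\Xt)\partial_{ij}\BSsolutionInForwardPrices(u,\Xt).
\]
To bypass the need for time-H\"older regularity of the Hessian, I exploit the PDE identity $a^{ij}(y)\partial_{ij}\BSsolutionInForwardPrices(u,y) = -2\partial_t\BSsolutionInForwardPrices(u,y)$ to rewrite
\[
\Delta^{ij}(u) = a^{ij}(X_u)[H\derivative_s - H\derivative_u]^{ij} + 2[\partial_t\BSsolutionInForwardPrices(u,\Xt) - \partial_t\BSsolutionInForwardPrices(u,X_u)].
\]

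\textbf{Exponent bookkeeping and obstacle.} The first summand of $\Delta^{ij}(u)$ is bounded pointwise by $C\,\control_{H\derivative}^{1/q}(s,t)$ thanks to the $q$-variation of $H\derivative$, and the second by $C\,\control_X^{\alpha/p}(s,t)$ via the spatial $\alpha$-H\"older regularity of $\partial_t\BSsolutionInForwardPrices$ on $\convexHull X\timeWindow$ (inherited from $\BSsolutionInForwardPrices \in \ContFunctionsOfEllipticPDEregularity$). Integrating over $[s,t]$ gives contributions of exponents $1 + 1/q$ and $1 + \alpha/p$ in the joint control $\control := (t-s) + \control_X + \control_{H\derivative}$, both strictly above $1$; combined with the Taylor remainder $\lesssim \control_X^{(2+\alpha)/p}(s,t)$, whose exponent exceeds $1$ because $\alpha > p - 2$, this yields the required $C\,\control^{\gamma}(s,t)$ bound. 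The main obstacle is the algebraic manipulation exchanging the Hessian's (absent) time-H\"older control for the (available) space-H\"older control of $\partial_t\BSsolutionInForwardPrices$ through the PDE; once this swap is performed, the remaining bookkeeping is routine and Sewing-type uniqueness concludes.
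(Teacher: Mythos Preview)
Your proposal is correct and follows essentially the same approach as the paper: Taylor expansion, use of the PDE $\partial_t\BSsolutionInForwardPrices=-\tfrac12 a^{ij}\partial_{ij}\BSsolutionInForwardPrices$ to trade time regularity for space regularity, the $q$-moderation hypotheses to control the remainders, and the Sewing Lemma. The only organisational difference is that the paper keeps the terms $\partial_t\BSsolutionInForwardPrices(s,\Xs)(t-s)$ and $\tfrac12 H'_s\roughBracket_{s,t}$ explicit in the expansion, sums them over a partition, and lets them cancel in the limit as a Riemann--Stieltjes integral via \eqref{eq.minimalPDEtechnology}; you instead absorb this cancellation directly into the germ comparison $|\tilde\Xi_{s,t}-\Xi_{s,t}|\le C\control^{\gamma}(s,t)$ and invoke uniqueness in the Sewing Lemma, which is a slightly more streamlined packaging of the same idea.
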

\begin{proof}
	The fact that $(H,H\derivative)$ is $X$-controlled follows from Lemma \ref{lemma.qModerationAndControlledPaths}. 
	We can expand the increments of $\BSsolutionInForwardPrices_t := \BSsolutionInForwardPrices(t,\Xt)$ as 
	\begin{align*}
	\BSsolutionInForwardPrices(t,\Xt) - \BSsolutionInForwardPrices(&s,\Xs) \\
	= & 
	(t-s) \int_{0}^{1} \Big[\partial_t \BSsolutionInForwardPrices (s+y(t-s), \Xt) - \partial_t \BSsolutionInForwardPrices (s,\Xt)\Big] dy \\
	& - \frac{t-s}{2} \Big[ \aij (\Xt)\partialij \BSsolutionInForwardPrices (s,\Xt) - \aij (\Xs)\partialij \BSsolutionInForwardPrices (s,\Xs)\Big] \\
	& + \partial_t \BSsolutionInForwardPrices (s,\Xs)(t-s) \\
	& + \Bigg(
	\int_{0}^{1} \int_{0}^{1} \, \Big[ \, \Hessianx \BSsolutionInForwardPrices (s, \Xs + y_1 y_2 \Xst ) \\
	&\qquad \qquad \qquad \qquad - \Hessianx \BSsolutionInForwardPrices (s,\Xs) y_1 \, \Big] \,  dy_2 dy_1
	\Bigg) \big(\Xst \otimes \Xst\big) \\
	&+ \gradx \BSsolutionInForwardPrices (s,\Xs)\Xst + \half \Hessianx \BSsolutionInForwardPrices (s,\Xs) \big(\Xst \otimes \Xst\big).
	\end{align*}
	We have used \eqref{eq.minimalPDEtechnology} on the second line to re-express time derivatives as spatial ones. The assumed $q$-moderation  allows to control the three increment-type summands in the expansion. Let $\convexHull X\timeWindow$ be the convex hull of the trace of $X$ and let $K:=$ $\sup_{0\leq s \leq \timeHorizon}$ $\norm[\Hessianx \BSsolutionInForwardPrices(s,\cdot)]_{\alpha\text{-H\"ol}, \convexHull X\timeWindow}$. Then,
	\begin{align*}
	\Big\lvert
	\int_{0}^{1} \Big[\partial_t \BSsolutionInForwardPrices (s+y(t-s), \Xt) -  \partial_t \BSsolutionInForwardPrices (& s,\Xt)\Big] dy 
	\Big\rvert \\
	\leq &  \norm[\sigmasquare]_{\infty, \, X[0,T]} 
	\Big[
	K \control_{X}^{\alpha/p} + \control_{H\derivative}^{1/q}
	\Big](s,t);
	\end{align*}
	and 
	\begin{align*}
	\Big\lvert
	\aij (\Xt)\partialij \BSsolutionInForwardPrices (s,\Xt) - \aij (\Xs)&\partialij \BSsolutionInForwardPrices ( s,\Xs)
	\Big\rvert \\
	\leq & \norm[\sigmasquare]_{\infty, \, X[0,T]} K\control_{X}^{\alpha/p} (s,t) \\
	& + \norm[H\derivative]_{\infty, \, [0,T]} \norm[\sigmasquare]_{\alpha\text{-H\"ol}, \, \convexHull X[0,T]} \control_{X}^{\alpha/p}(s,t);
	\end{align*}
	 and 
	\begin{align*}
	\Big\lvert
	\Big(\int_{0}^{1} \int_{0}^{1} \Big[\Hessianx \BSsolutionInForwardPrices (s, \Xs + y_1 y_2 \Xst ) - \Hessianx \BSsolutionInForwardPrices & (s,\Xs) y_1 \Big] dy_2 dy_1\Big)
	\big(\Xst \otimes \Xst\big)
	\Big\rvert \\
	\leq & \frac{K}{(1+\alpha)(2+\alpha)} \control_{X}^{(2+\alpha)/p}(s,t).
	\end{align*}
	Recall that, in particular, $\frac{2+\alpha}{p} >1$ by the choice of $\alpha$ in the definition of enhanced path of $\alpha$-diffusion type. Then,  the three estimations above say that, for the expansion  of the increments $\BSsolutionInForwardPrices_{s,t}$, the following holds: there exists a control $\control$ and an exponent $\gamma>1$ such that
	\begin{align*}
	\Big\lvert
	\BSsolutionInForwardPrices_{s,t} - \gradx \BSsolutionInForwardPrices(s,\Xs)&\Xst -  \Hessianx \BSsolutionInForwardPrices(s,\Xs)\secondOrderX_{s,t} \\
	&\quad- \partial_t \BSsolutionInForwardPrices(s,\Xs) - \half \Hessianx \BSsolutionInForwardPrices(s,\Xs) \roughBracket_{s,t}
	\Big\rvert \\
	= & 
	\Big\lvert
	\BSsolutionInForwardPrices_{s,t} - \partial_t \BSsolutionInForwardPrices(s,\Xs)  \\
	&\quad - \gradx \BSsolutionInForwardPrices(s,\Xs)\Xst -  \Hessianx \BSsolutionInForwardPrices(s,\Xs) \big(\Xst \otimes \Xst\big)
	\Big\rvert \\
	\leq& \control^{\gamma}(s,t).
	\end{align*} 
		Hence, 
	\begin{align*}
	\BSsolutionInForwardPrices\subscriptst = 
	\lim _{\abs{\pi}\rightarrow 0} 
	\sum_{u\in \pi\cap [s,t]} \Big[ \partial_t &\BSsolutionInForwardPrices(u, X_u)(\uprime - u) +\half  \partialij \BSsolutionInForwardPrices (u, X_u) \roughBracket^{i,j}\subscriptuuprime \Big] \\
	&+ \underbrace{
		\lim _{\abs{\pi}\rightarrow 0} \underbrace{
			\sum_{u\in \pi\cap [s,t]} \Big[H_u \Xuuprime + H\derivative_u \secondOrderXuuprime \Big].}_{=:((\piH,\piHprime).(X, \secondOrderX))_{s,t}}}_{
		=\left((H,H\derivative).(X,\secondOrderX)\right)\subscriptst
	}
	\end{align*}
	The possibility to split the limit descends from the already-known convergence of $((\piH,\piHprime).(X, \secondOrderX))$ as $\abs{\pi}\rightarrow 0 $. 
	For any $i,j$ the discrete sum $\sum_{u \in \pi}\partialij \BSsolutionInForwardPrices(u,X_u)\roughBracket\subscriptuuprime$ approximates the Stieltjes integral of the continuous function $u\mapsto \partialij \BSsolutionInForwardPrices(u,X_u)$ against the measure $m^{i,j}$ of \eqref{Eq.lineMeasureInducedByRoughBracket}. Hence, in the limit as $\abs{\partition} \rightarrow 0$ it converges to 
	$\intzerot \partialij \BSsolutionInForwardPrices(u,X_u)$ $ \aij(X_u)du$. The cancellation guaranteed by \eqref{eq.minimalPDEtechnology} then implies \eqref{eq.pathwiseIntegralAccruing}.
\end{proof}

The deployment of higher order sensitivities and pathwise integration allows to estimate errors arising from time discretisation of integral quantities. Instances of time discretisation of integral quantities  appear in the costs associated with hedging. Indeed, consider the \emph{cost of financing} of a hedging strategy, defined as
\begin{equation}\label{eq.DefCostOfFinancing}
C_t(\phi) := \phi^0_t \risklessAsset_t + \phi^1_t \pricePath_t - (\phi^0 . \risklessAsset)_t - (\phi^1 . \pricePath)_t ,
\end{equation}
where   $(\phi^0,\phi^1) \in \R\times \Rd$ is the strategy and $\risklessAsset$, $\pricePath$ are respectively the riskless asset and the risky asset.  The symbols $(\phi^0.\risklessAsset)_t$ and $(\phi^1.\pricePath)_t$ denote the time-$t$ marginals of the integral processes of $\phi^0$ and $\phi^1$ respectively against $\risklessAsset$ and $\pricePath$.  Thus, the cost of financing in equation \eqref{eq.DefCostOfFinancing} is the difference between the value of the portfolio at time $t$ and the cost of rebalancing the portfolio during the time window $[0,t]$ in order to follow the hedging strategy. 
If continuous hedging were possible and one were able to take $(\phi^0,\phi^1)=(H^0,H)$ as defined in \eqref{Eq.replicatingStrategy}, then this cost\footnote{
	In the continuous-time abstraction, the term $(\phi^1 . \pricePath)_t$ is to be read as the It\^o integral of the continuous adapted process $\phi^1$ against the continuous semimartingale $S$; the term $(\phi^0 . S^0)_t$ would instead refer to the Lebesgue integral $r\intzerot \phi^0_u e^{ru} du$.
} would  match $\portfolioValue_0 = \BSsolutionInForwardPrices(0,X_0)$, the price at time $t=0$ of the option, on a $\Prob$-full set. We remark that the probability $\Prob$ is the measure of the stochastic base on which in the continuous-time case the It\^o integral $(\phi^1 . S)_t$ would be defined. In practice, the cost of financing has two components: the theoretical price $\portfolioValue_0$ and the cost arising from time discretisation, which is $C_\timeHorizon(\phi) - \portfolioValue_0$. For the latter,  with $\phi$ replaced by the discretisation $(\piH^0, \piH)$  of \eqref{Eq.replicatingStrategy}, we now provide a pathwise estimate that relies on integration bounds. 
Recall that $X$ in Proposition \ref{prop.pathwiseIntegralAccruing} plays the role of the discounted trajectory $\discountedPricePath_{t}=\eminusrt\pricePath_t$.

\begin{corol}\label{Corol.CostOfFinancingDiscountedPerspective}
	Assume the setting of Proposition \ref{prop.pathwiseIntegralAccruing}.
	Let $\control$ be the control function whose $(2/p+1/q)$-th power asserts the approximate additivity of $H_s\Xst + H\derivative_s \secondOrderX_{s,t}$. Along any partition $\partition$ of  $\timeWindow$, the discretised strategy $(\piH^0, \piH)$ stemming from  \eqref{Eq.replicatingStrategy} with $\discountedPricePath=X$  has a cost of financing $C(\piH^0,\piH)$ that is bounded as follows:
	\begin{equation}\label{Eq.costOfFinancingBound}
	\begin{split}
	C_\timeHorizon(\piH^0,&\piH)  \\ \leq &
	\abs{\portfolioValue_0} 
	+e^{rT} \Big( K \control(0,T) \mathrm{osc}(\control, \abs{\pi})^{2/p + 1/q - 1} + \abs{\BSsolutionInForwardPrices_{T-,T}}\Big) \\
	&+ \big\lvert \sum_{\substack{u \in \pi \\ u\derivative < T}} e^{ru\derivative} H\derivative _u \secondOrderX_{u,u\derivative}\big\rvert ,
	\end{split}
	\end{equation}
	where $\mathrm{osc}(\control, \abs{\pi})$ is the modulus of continuity of $\control$ on a scale smaller or equal than the mesh-size of the partition, and  $\BSsolutionInForwardPrices_{T-,T}$ is the difference between $\BSsolutionInForwardPrices(T,X_T)= \htilde (X_T)$ and the discounted value $\BSsolutionInForwardPrices(T-,X_{T-})$ of the option at the second last node of the partition. 
	The path-dependent constant $K$ appearing in the bound is not greater than
	\[
	\frac{1}{1-2^{1-(2/p + 1/q)}} \Big(
	\control_{R^H}^{1/p+1/q}(0,T) \pvarNormInterval[X]{p}{[0,T]} + \pvarNormInterval[H\derivative]{q}{[0,T]}\pvarNormInterval[\secondOrderX]{p/2}{[0,T]}
	\Big),
	\]
	where $\control_{R^{H}}$ is the $pq/(p+q)$-variation control of $H\subscriptst - H\derivative_s\Xst$. 
\end{corol}
\begin{proof}
	Let $\BSsolutionInForwardPrices_t$ be the path $t\mapsto \BSsolutionInForwardPrices(t,\Xt)$. Fix a partition  $\partition$ of $\timeWindow$ and recall the notation in \eqref{eq.notationAboutPartition}. 
	We preliminarily observe that
	\begin{align*}
	(\piHzero.\risklessAsset)_t +(\piH.\pricePath)_t =&
	\sum_{u \in \partition} \big[ \BSsolutionInForwardPrices_u \risklessAsset\subscriptumintuprimemint + H_u\risklessAsset_{\uprime\wedge t}\discountedPricePath\subscriptumintuprimemint  \big] \\
	=& \BSsolutionInForwardPrices_{t\star}\risklessAsset_t - \BSsolutionInForwardPrices_0 + \sum_{u \in \partition}\risklessAsset_{\uprime\wedge t} \big[- \BSsolutionInForwardPrices\subscriptumintuprimemint +H_u\discountedPricePath\subscriptumintuprimemint \big],
	\end{align*}
	where in the second line we have used summation by parts. Then, 
	\begin{align}\label{Eq.rearrangedDiscretisedCostOfFinancing}
	C_t(\piH^0, \piH) = &
	^{\pi}\!\BSsolutionInForwardPrices_t S_t^{0} - \piH_t ^{\pi}\!\discountedPricePath_{t}\risklessAsset_t + \piH_t \pricePath_t - \BSsolutionInForwardPrices_{t\star}\risklessAsset_t\nonumber  \\
	& + \BSsolutionInForwardPrices_0 + \sum_{u \in \pi} \risklessAsset_{\uprime\wedge t}\big[\BSsolutionInForwardPrices\subscriptumintuprimemint - H_u \discountedPricePath\subscriptumintuprimemint\big] \nonumber \\
	= & \risklessAsset_t H_{ t\star }\big( \discountedPricePath_t - \discountedPricePath_{t\star}\big) + \portfolioValue_0 + \sum_{u \in \pi} \risklessAsset_{\uprime\wedge t}\big[\BSsolutionInForwardPrices\subscriptumintuprimemint - H_u \discountedPricePath\subscriptumintuprimemint\big] \nonumber\\
	= & \portfolioValue_0 + \risklessAsset _t \BSsolutionInForwardPrices_{t\star,t} + \sum_{\substack{u \in \pi \\ u\derivative < t}}\risklessAsset_{u\derivative} \big[\BSsolutionInForwardPrices_{u,u\derivative} - H_u \discountedPricePath_{u,u\derivative}\big].
	\end{align}
	By adding and subtracting the compensation, we can apply the Sewing Lemma (Proposition \ref{prop.SewingLemma}) and conclude.
\end{proof}

Until now, we have worked with the identification $\roughX= \enhancedDiscountedPricePath$, i.e. the enhanced  path at hand has represented the actual enhanced path of the discounted stock price. In other words, the market models have been $[\enhancedDiscountedPricePath]$-compatible.  This amounts to considering the square $\sigmasquare = \sigma\sigma\transpose$ of co-volatilities a \emph{true} parameter. In Corollary \ref{Corol.pathwiseRobustness} below, we no longer do so and we distinguish the \emph{modelled} enhancer of $\roughX$ from the \emph{actual} enhancer of $\enhancedDiscountedPricePath$. The only assumption on $\enhancedDiscountedPricePath$ is that it is an enhanced path, i.e. its trace $\discountedPricePath$ is a continuous path of finite $p$-variation, $2<p<3$, and its second order process $\secondOrderX[\discountedPricePath]=(\discountedPricePath \otimes \discountedPricePath - [\enhancedDiscountedPricePath])/2$ is a continuous two-parameter function of finite $p/2$-variation with values in $\Rd\odot\Rd$; the enhancer $[\enhancedDiscountedPricePath]$ is not required to be of bounded variation and the integrals against it will be interpreted as Young integrals.

	\begin{corol}\label{Corol.pathwiseRobustness}
	Let $\roughPair[\discountedPricePath]$ be an enhanced path above the $\Rd$-valued discounted price trajectory $\discountedPricePath$ of $p$-variation regularity. 
	Let $\volatilityOperator$ be an $\alpha$-H\"older volatility operator, with $\alpha>p-2$. Consider the $\volatilityOperator$-enhancement $\roughX=(\discountedPricePath,\secondOrderX)$ of $\discountedPricePath$. If $h$ and $\BSsolutionInForwardPrices$ are as in Proposition \ref{prop.pathwiseIntegralAccruing}, then
	$(H_t,H\derivative_t):=$ $(\gradx \BSsolutionInForwardPrices (t,\discountedPricePath_t),$ $ \Hessianx \BSsolutionInForwardPrices (t,\discountedPricePath_t))$ is a Gubinelli $\discountedPricePath$-controlled path of $(p,q)$-variation regularity and
	\begin{equation} \label{Eq.discountedDavisGammaRobustness}
	\begin{split}
	\htilde(\discountedPricePath_T) - \portfolioValue_0 
	=&
	 \big( (H,H\derivative).(\discountedPricePath,\secondOrderX[\discountedPricePath]) \big)_{0,\timeHorizon}\\
	&+
	 \half (H\derivative .\Big([\enhancedDiscountedPricePath] - \roughBracket \Big) )_{0,\timeHorizon} , 
	\end{split}
	\end{equation}
	where the second summand on the right hand side is a well-defined Young integral. As a consequence, if $\piH$ denotes the strategy obtained by discretising along $\partition$ the $\volatilityOperator$-delta hedging,
	then its cost of financing $C_T(\piHzero, \piH) $ is bounded by 
	\begin{multline}\label{Eq.RobustCostOfFinancing}
	\abs{\portfolioValue_0} + \Big\lvert \sum_{\substack{u \in \pi \\ u\derivative < T}}e^{r\uprime}H\derivative _u \secondOrderX[\discountedPricePath]\subscriptuuprime \Big\rvert \\
	+ e^{rT} \Bigg( K \control(0,T) \mathrm{osc}(\control, \abs{\pi})^{2/p+ 1/q -1} \\
	+  \abs{\BSsolutionInForwardPrices_{T-,T}} + K_{H\derivative}\lVert [\enhancedDiscountedPricePath] - \roughBracket\rVert_{p/2\text{-var}, [0,T]}\Bigg),
	\end{multline}
	where  $\control$, $K$ and $\abs{\BSsolutionInForwardPrices_{T-,T}}$ are as in Corollary \ref{Corol.CostOfFinancingDiscountedPerspective} and 
	\[
	K_{H\derivative} = \frac{2^{-(1-2/p)\squared}}{1-2^{1-(4/p+1/q)}}\norm[H\derivative]_{q\text{-var}, [0,T]} + 2^{-(1-2/p)\squared}\norm[H\derivative]_{\infty, [0,T]}.
	\]	
\end{corol}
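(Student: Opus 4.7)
The plan is to reuse Proposition \ref{prop.pathwiseIntegralAccruing} applied to the modelled enhancement $\roughX=(\discountedPricePath,\secondOrderX)$, and then convert the resulting compensated Riemann sum from the modelled second-order process $\secondOrderX$ into the actual one $\secondOrderX[\discountedPricePath]$ at the cost of an explicit Young correction driven by $[\enhancedDiscountedPricePath]-\roughBracket$.

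First I would establish that $(H,H\derivative)$ is Gubinelli $\discountedPricePath$-controlled of $(p,q)$-variation regularity. This is immediate from Lemma \ref{lemma.qModerationAndControlledPaths}: the standing hypothesis that $\BSsolutionInForwardPrices$ is as in Proposition \ref{prop.pathwiseIntegralAccruing} furnishes the $q$-moderation of the pair $(\BSsolutionInForwardPrices,\discountedPricePath)$, with the role of $X$ in Definition \ref{def.qModeration} played by $\discountedPricePath$. Note that being controlled is a property of $\discountedPricePath$ alone and is insensitive to the choice of enhancement above it.

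Second, to establish identity \eqref{Eq.discountedDavisGammaRobustness}, I would apply Proposition \ref{prop.pathwiseIntegralAccruing} to the modelled $\roughX$ to get
\[
\htilde(\discountedPricePath_T) - \portfolioValue_0 = \big((H,H\derivative).(\discountedPricePath, \secondOrderX)\big)_{0,T}.
\]
Comparing $2\secondOrderX_{s,t} = \discountedPricePath_{s,t}\otimes \discountedPricePath_{s,t} - \roughBracket_{s,t}$ with $2\secondOrderX[\discountedPricePath]_{s,t} = \discountedPricePath_{s,t}\otimes\discountedPricePath_{s,t} - [\enhancedDiscountedPricePath]_{s,t}$ yields $\secondOrderX - \secondOrderX[\discountedPricePath] = \tfrac{1}{2}([\enhancedDiscountedPricePath] - \roughBracket)$. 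Substituting into the compensated Riemann sum and using linearity, the limit splits as
\[
\big((H,H\derivative).(\discountedPricePath, \secondOrderX[\discountedPricePath])\big)_{0,T} + \tfrac{1}{2}\lim_{|\partition|\downarrow 0}\sum_{u\in \partition}H\derivative_u\, ([\enhancedDiscountedPricePath] - \roughBracket)_{u,\uprime}.
\]
The first summand is a well-defined rough integral by Lemma \ref{lemma.approxAdditivityOfCompensatedSummand}, since $(H,H\derivative)$ is $\discountedPricePath$-controlled and $\enhancedDiscountedPricePath$ is an enhanced path. The second summand is a Young sum: $H\derivative$ has finite $q$-variation, and $[\enhancedDiscountedPricePath]-\roughBracket$ has finite $p/2$-variation because $\roughBracket$ is absolutely continuous by construction and $[\enhancedDiscountedPricePath]$ is of finite $p/2$-variation by assumption. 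The Young complementarity $1/q+2/p>1$ follows from the $q$-moderation condition $1/q>1-2/p$, so Proposition \ref{prop.YoungIntegral} identifies the limit with the Young integral $\tfrac{1}{2}\big(H\derivative.([\enhancedDiscountedPricePath] - \roughBracket)\big)_{0,T}$.

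Third, for the cost-of-financing bound \eqref{Eq.RobustCostOfFinancing}, I would repeat the summation-by-parts rearrangement of Corollary \ref{Corol.CostOfFinancingDiscountedPerspective} to reach
\[
C_\timeHorizon(\piHzero, \piH) = \portfolioValue_0 + \risklessAsset_T \BSsolutionInForwardPrices_{T\star, T} + \sum_{\substack{u\in\partition \\ \uprime < T}}\risklessAsset_\uprime\big[\BSsolutionInForwardPrices_{u,\uprime} - H_u \discountedPricePath_{u,\uprime}\big],
\]
and then, on each subinterval, insert the expansion $\BSsolutionInForwardPrices_{u,\uprime} - H_u \discountedPricePath_{u,\uprime} = H\derivative_u \secondOrderX[\discountedPricePath]_{u,\uprime} + \tfrac{1}{2}H\derivative_u([\enhancedDiscountedPricePath] - \roughBracket)_{u,\uprime} + \varepsilon_{u,\uprime}$, where $\varepsilon_{u,\uprime}$ is the Sewing-Lemma error for the compensated sum against the modelled enhancement $\secondOrderX$. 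The $\secondOrderX[\discountedPricePath]$-term is kept explicit (the Gamma term in the stated bound); the $\varepsilon$-sum, telescoped via the product control, produces the $K\control(0,T)\mathrm{osc}(\control,|\partition|)^{2/p+1/q-1}$ contribution and the boundary $\risklessAsset_T|\BSsolutionInForwardPrices_{T-,T}|$; finally, the $([\enhancedDiscountedPricePath]-\roughBracket)$-sum is bounded by applying the Sewing Lemma to the approximately additive $H\derivative_s([\enhancedDiscountedPricePath]-\roughBracket)_{s,t}$, whose defect is controlled by the Young-complementary product $\control_{H\derivative}^{1/q}\control_{[\enhancedDiscountedPricePath]-\roughBracket}^{2/p}$. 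The main obstacle, and the most delicate piece of bookkeeping, is producing the constant $K_{H\derivative}$ in the precise form stated: this requires running the dyadic iteration of the Sewing construction with a weighted control calibrated to the discretisation, splitting the boundary term $H\derivative_0([\enhancedDiscountedPricePath]-\roughBracket)_{0,T}$ (from which the $\|H\derivative\|_\infty$-contribution arises) from the geometric-series tail of ratio $2^{1-(4/p+1/q)}$ (from which the $\|H\derivative\|_{q\text{-var}}$-contribution arises), the exponent $(1-2/p)^{2}$ in the numerator $2^{-(1-2/p)^{2}}$ reflecting the specific choice of dyadic split matched to the mesh.
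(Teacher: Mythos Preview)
Your proposal is correct and follows essentially the same route as the paper: the controlled-path property via Lemma \ref{lemma.qModerationAndControlledPaths}, identity \eqref{Eq.discountedDavisGammaRobustness} from the Taylor expansion of Proposition \ref{prop.pathwiseIntegralAccruing} corrected by the additive relation $\secondOrderX-\secondOrderX[\discountedPricePath]=\tfrac12([\enhancedDiscountedPricePath]-\roughBracket)$, and the cost-of-financing bound by inserting this into the rearrangement \eqref{Eq.rearrangedDiscretisedCostOfFinancing} and applying Sewing plus Young estimates. The only cosmetic difference is that the paper keeps the Young correction on each subinterval as the full integral increment $(H\derivative.([\enhancedDiscountedPricePath]-\roughBracket))_{u,\uprime}$ and then reads off the explicit constant $K_{H\derivative}$ directly from \cite[Theorem 6.8]{FV10mul}, rather than re-running the dyadic Sewing argument you outline.
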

\begin{proof}
	The fact that $(\gradx \BSsolutionInForwardPrices (t,\discountedPricePath_t), \Hessianx \BSsolutionInForwardPrices (t,\discountedPricePath_t))$ is  $\discountedPricePath$-controlled of $(p,q)$-variation regularity is already contained in Proposition \ref{prop.pathwiseIntegralAccruing}, because it does not involve the second-order component of $\enhancedDiscountedPricePath$. Also, the Taylor expansion of Proposition \ref{prop.pathwiseIntegralAccruing} yields a control function $\control$ and an exponent $\gamma>1$ such that however chosen a subinterval $[s,t]$ of $[0,T]$, it holds
	\begin{align*}
	\BSsolutionInForwardPrices(t,\discountedPricePath_t) - \BSsolutionInForwardPrices (s,\discountedPricePath_s) = &
	\gradx \BSsolutionInForwardPrices (s,\discountedPricePath_s) \discountedPricePath_{s,t} + \partial_t \BSsolutionInForwardPrices (s,\discountedPricePath_s)(t-s) \\
	& + \half \Hessianx \BSsolutionInForwardPrices (s,\discountedPricePath_s) \discountedPricePath_{s,t} \otimes \discountedPricePath_{s,t} 
	+O (\control^{\gamma}(s,t)) \\
	=& \gradx \BSsolutionInForwardPrices (s,\discountedPricePath_s) \discountedPricePath_{s,t} + \Hessianx \BSsolutionInForwardPrices (s,\discountedPricePath_s) \secondOrderX[\discountedPricePath]_{s,t} 
	\\
	&  + \partial_t \BSsolutionInForwardPrices (s,\discountedPricePath_s)(t-s) + \half \Hessianx \BSsolutionInForwardPrices (s,\discountedPricePath_s) \roughBracket_{s,t}\\
	& + \half \Hessianx \BSsolutionInForwardPrices (s,\discountedPricePath_s)\Big([\enhancedDiscountedPricePath]_{s,t} - \roughBracket_{s,t}\Big)+O (\control^{\gamma}(s,t)).
	\end{align*}
	Therefore, by considering the subintervals $[u,u\derivative]$ of a partition $\partition$ of $[s,t]$, summing over these,  and letting $\abs{\pi} \rightarrow 0$, we obtain
	\begin{equation} \label{Eq.incrementsUnderWrongTrajectory}
	\begin{split}
	\BSsolutionInForwardPrices(t,\discountedPricePath_t) -  \BSsolutionInForwardPrices(s,\discountedPricePath_s)
	=&
	\big( (H,H\derivative).(\discountedPricePath,\secondOrderX[\discountedPricePath]) \big)_{s,t}
	+
	\half (H\derivative .\Big([\enhancedDiscountedPricePath] - \roughBracket \Big) )_{s,t} , 
	\end{split}
	\end{equation}
	and in particular \eqref{Eq.discountedDavisGammaRobustness}. The second summand on the right hand side is a well-defined Young integral because $t\mapsto \Hessianx \BSsolutionInForwardPrices(t,\discountedPricePath_{t})$ is of bounded $q$-variation, $q<p/\alpha$, and $\alpha>p-2$ by assumption.
	
	Write $\BSsolutionInForwardPrices\subscriptst$ for the increments $\BSsolutionInForwardPrices(t,\discountedPricePath_t) - \BSsolutionInForwardPrices(s,\discountedPricePath_s)$, $0\leq s \leq t\leq T$. Owing to \eqref{Eq.incrementsUnderWrongTrajectory}, for every subinterval $[u,\uprime]$ of a partition $\partition$ we can write 
	\begin{align*}
	\BSsolutionInForwardPrices\subscriptuuprime - H_u \discountedPricePath\subscriptuuprime = & 
	\big( (H,H\derivative).(\discountedPricePath,\secondOrderX[\discountedPricePath]) \big)\subscriptuuprime
	 - H_u\discountedPricePath\subscriptuuprime - H\derivative_u \secondOrderX[\discountedPricePath]\subscriptuuprime \\
	& + H\derivative_u \secondOrderX[\discountedPricePath]\subscriptuuprime +
	\half (H\derivative .\Big([\enhancedDiscountedPricePath] - \roughBracket \Big) )\subscriptuuprime.
	\end{align*}
	Therefore, 
	\begin{align*}
	\Big\lvert\sum_{\substack{u \in \pi \\ u\derivative < t}}S^0_{\uprime} \big[\BSsolutionInForwardPrices\subscriptuuprime 
	-&
	 H_u \discountedPricePath \subscriptuuprime\big]\Big\rvert  \\
	\leq &
	e^{rT} K \control(0,T) \mathrm{osc}(\control, \abs{\pi})^{2/p+ 1/q -1} + \Big\lvert \sum_{\substack{u \in \pi \\ u\derivative < t}} e^{r\uprime} H\derivative_u \secondOrderX[\discountedPricePath]\subscriptuuprime \Big\rvert \\
	& + \half	e^{rT} \Big\lVert H\derivative .\Big([\enhancedDiscountedPricePath] - \roughBracket \Big) \Big\rVert_{p/2\text{-var}, [0,T]},
	\end{align*}
	where, by applying the bounds in \cite[Theorem 6.8]{FV10mul} we see
	\begin{align*}
		\Big\lVert H\derivative .\Big([\enhancedDiscountedPricePath] -& \roughBracket \Big)\Big\rVert_{p/2\text{-var}, [0,T]} \\
		\leq &2^{(1-\frac{2}{p})\frac{2}{p}} \Big\lVert[\enhancedDiscountedPricePath] - \roughBracket\Big\rVert_{p/2\text{-var}, [0,T]}  \\
		 &
		\left(
		\frac{1}{1-2^{1-(4/p+1/q)}}\norm[H\derivative]_{q\text{-var}, [0,T]} + \norm[H\derivative]_{\infty, [0,T]}
		\right).
	\end{align*}
	Therefore, by plugging in \eqref{Eq.rearrangedDiscretisedCostOfFinancing}, we conclude.
\end{proof}

\section[Pathwise formulation of fundamental equations of hedging]{Pathwise formulation of fundamental \\ equations of hedging} \label{sec.pathwiseFormulationOfFundamentalEquationsOfHedging}
By adopting the perspective of undiscounted price paths, we recover the classical formulas of Mathematical Finance within our pathwise setting. Given a price path $\pricePath$, we say that a model for $\pricePath$ has been specified when a choice for the enhancement $\roughPair[\pricePath]$ is made. This means choosing the enhancer $[\enhancedPricePath]$, see  Section \ref{sec.pathwiseIntegrals}. We speak of an $\alpha$-\emph{diffusive model specification} if the enhancer is given by
\[
[\enhancedPricePath]_{u,v}^{i,j} = \int_{u}^{v} e^{2rt}\aij (\eminusrt \pricePath_t) dt, \qquad 0\leq u \leq v \leq T, \quad 1\leq i,j \leq d,
\]
where $\aij$, $1\leq i,j\leq d$ are the coefficients of an $\alpha$-H\"older volatility operator and $\interestRate$ is the constant interest rate. In other words, an $\alpha$-diffusive model specification is the undiscounted counterpart to an $\volatilityOperator$-enhancement of some discounted price path, where $\volatilityOperator$ is an $\alpha$-H\"older volatility operator as defined in Definition \ref{defi.volatilityOperator}.

\begin{thm}\label{thm.BSpdeForDiffusiveModelSpecification}
	Let $f(\pricePath_\timeHorizon)$ be a contingent claim, where $f$ is in $C_b(\Rd)$ and $\pricePath_\timeHorizon$ is the terminal value of a continuous $d$-dimensional price path $\pricePath$ of finite $p$-variation. Let $\enhancedPricePair$ be an $\alpha$-diffusive model specification, with $\alpha>p-2$, and let $\volatilityOperator = \aij \partialij/2$ be the corresponding volatility operator. Then, the Black-Scholes partial differential equation
	\begin{equation}\label{eq.fullBSpde}
	\begin{cases}
	{e^{2\interestRate t} \aij (\eminusrt z) \partial_{z^{i},z^{j}}\squared v 
		+\interestRate z^{i} \partial_{z^{i}} v + \partial_t v = \interestRate v 
	} & \text{ in }[0,\timeHorizon)\times \Rd \\
	v(\timeHorizon,z)=f(z) & \text{ on } \lbrace\timeHorizon\rbrace \times \Rd
	\end{cases}
	\end{equation}
	admits a solution $\BSsolutionForPortfolioFunction$ in $\ContFunctionsOfEllipticPDEregularity$ and this solution is unique. Moreover, for every $0\leq t \leq \timeHorizon$, the quantity $\portfolioValue_t:=\BSsolutionForPortfolioFunction(t,\pricePath_t)$ is the fair value at time $t$ of the contingent claim $f(\pricePath_\timeHorizon)$ in the benchmark Markovian model $\benchmarkMarkovianModel$.
\end{thm}
\begin{remark}
	In line with what we argued in Section \ref{sec.preliminariesAndHeuristics}, the statement of Theorem \ref{thm.BSpdeForDiffusiveModelSpecification} shows that  probability only plays a role in the justification of the fairness of the option price from the option buyer's perspective. The justification of the price from the option writer's perspective is instead only based on hedging; see the formulas presented in Proposition \ref{prop.pathwisePortfolioIncrements} below within our pathwise framework. Therefore, the hedging strategy can be formulated in a model that is specified without referring to probabilistic evolutions of the underlying price path $\pricePath$, if the required pathwise information about the price is encoded in  the enhancer $\roughBracketS$. 
\end{remark}
\begin{proof}[Proof of Theorem \ref{thm.BSpdeForDiffusiveModelSpecification}]
	The change of variable $x:=\eminusrt z$ allows to rewrite equation \eqref{eq.fullBSpde} as 
	\begin{equation*}
	\begin{cases}
	\Big(\partial_t  + \volatilityOperator\Big)\BSsolutionInForwardPrices = 0 & \text{ in } [0,\timeHorizon)\times\Rd\\
	\BSsolutionInForwardPrices(\timeHorizon,x)= e^{-\interestRate \timeHorizon}f(e^{+r \timeHorizon}x) & \text{ on } \lbrace \timeHorizon \rbrace \times \Rd,
	\end{cases}
	\end{equation*}
	where $\BSsolutionInForwardPrices(t,x) = \eminusrt \BSsolutionForPortfolioFunction (t,z)$. Therefore, existence, uniqueness and regularity of the solution follow from those of equation \eqref{eq.minimalPDEtechnology}.
	
	Let $\pricep (t,\timeHorizon)$ be the fair value of $f(\pricePath_\timeHorizon)$ in the benchmark Markovian model $((\Omega,\sigmaAlgebra,\probabilityQ$ $ (\sigmaAlgebra_t)_t ), \generatorA )$. This means that the discounted price path $\discountedPricePath$ is thought of as a realisation of a Markov diffusion process on $(\Omega, \sigmaAlgebra, \pricingMeasure)$ with generator $\volatilityOperator$, and such diffusion process is a $\pricingMeasure$-martingale. On the one hand, by the pricing paradigm 
	\begin{align}\label{eq.pricingParadigmSemigroup}
	\pricep(t,\timeHorizon) =&
	 \pricingExpectation
	 \big[
	e^{-\interestRate (\timeHorizon-t)}h(\discountedPricePath_\timeHorizon) \vert \sigmaAlgebra_t
	\big] \nonumber \\
	=& e^{(\timeHorizon-t)\volatilityOperator} h(\discountedPricePath_t),
	\end{align}
	where $h(x):=f(e^{\interestRate \timeHorizon}x)$ and $e^{t\volatilityOperator}$ is the semigroup associated with $\volatilityOperator$.
	On the other hand, the It\^o integral $\discountedPortfolioValue_t:=\int_{0}^{t}\gradz \BSsolutionForPortfolioFunction (u,\pricePath_u)d\discountedPricePath_u$ is such that $\discountedPortfolioValue_\timeHorizon = e^{-\interestRate \timeHorizon}h(\discountedPricePath_\timeHorizon)$, and thus 
	\begin{equation}\label{eq.pricingParadigmReplication}
	\pricep(t,\timeHorizon) =
	e^{\interestRate t }\pricingExpectation
	\big[
	\discountedPortfolioValue_\timeHorizon \vert \sigmaAlgebra_t
	\big] 
	= e^{\interestRate t }\discountedPortfolioValue_t,
	\end{equation}
	because $\discountedPortfolioValue$ is a martingale. Combining \eqref{eq.pricingParadigmSemigroup} and \eqref{eq.pricingParadigmReplication} we obtain the second claim. 
\end{proof}
\begin{prop}\label{prop.pathwisePortfolioIncrements}
	Let $f$ and $\pricePath$ be as in Theorem \ref{thm.BSpdeForDiffusiveModelSpecification}. Let $\enhancedPricePair$ be an $\alpha$-diffusive model specification, with $\alpha>p-2$, and let $\BSsolutionForPortfolioFunction=\BSsolutionForPortfolioFunction(t,z)$ solve equation \eqref{eq.fullBSpde}. If $(\BSsolutionForPortfolioFunction,\pricePath)$ is $q$-moderate, for some $1-2/p<1/q<\alpha/p$, then 
	\[
	\big(\DeltaHedge_t,\GammaHedge_t\big)
	:=\left(
	\gradz \BSsolutionForPortfolioFunction (t,\pricePath_t),
	\Hessianz\BSsolutionForPortfolioFunction(t,\pricePath_t)
	\right)
	\]
	is a Gubinelli $\pricePath$-controlled path of $(p,q)$-variation regularity, and 
	\begin{equation}\label{Eq.pathwisePortfolioIncrements}
	\begin{split}
	\portfolioValue_t-\portfolioValue_0 
	=
	\big(
	(\DeltaHedge,\GammaHedge).&(\pricePath,\secondOrderX[\pricePath])
	\big)_{0,t} \\
	+& \intzerot \big(\portfolioValue_u - \DeltaHedge_u \pricePath_u\big)\, d\risklessAsset_u,
	\end{split}
	\end{equation}
	where $\portfolioValue_t=\BSsolutionForPortfolioFunction(t,\pricePath_t)$ and $\risklessAsset_t= \exp(\interestRate t)$. 
\end{prop}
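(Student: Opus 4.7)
The plan is to mirror the proof of Proposition~\ref{prop.pathwiseIntegralAccruing}, now applied to the undiscounted function $\BSsolutionForPortfolioFunction$ and trajectory $\pricePath$. The controlled-path property is immediate: $q$-moderation of $(\BSsolutionForPortfolioFunction, \pricePath)$ is exactly the hypothesis of Lemma~\ref{lemma.qModerationAndControlledPaths}, which yields that $(\DeltaHedge, \GammaHedge) = (\gradz\BSsolutionForPortfolioFunction(t,\pricePath_t), \Hessianz\BSsolutionForPortfolioFunction(t,\pricePath_t))$ is a Gubinelli $\pricePath$-controlled path of $(p,q)$-variation regularity.

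For the integral identity \eqref{Eq.pathwisePortfolioIncrements}, I would Taylor-expand $V_{s,t} = \BSsolutionForPortfolioFunction(t,\pricePath_t) - \BSsolutionForPortfolioFunction(s,\pricePath_s)$ to first order in time and second order in space around $(s, \pricePath_s)$, using the PDE \eqref{eq.fullBSpde} to substitute $\partial_t\BSsolutionForPortfolioFunction(s,\pricePath_s) = r\portfolioValue_s - r\pricePath_s\cdot\DeltaHedge_s - \tfrac{1}{2}e^{2rs}a^{i,j}(e^{-rs}\pricePath_s)\GammaHedge_s^{i,j}$. Splitting the quadratic piece as $\tfrac{1}{2}\GammaHedge_s \pricePath_{s,t}^{\otimes 2} = \GammaHedge_s\secondOrderX[\pricePath]_{s,t} + \tfrac{1}{2}\GammaHedge_s\roughBracket[\pricePath]_{s,t}$ and noting that the integrand of $\roughBracket[\pricePath]_{s,t} = \int_s^t e^{2ru}a^{i,j}(e^{-ru}\pricePath_u)\,du$ is $\alpha/p$-H\"older in $u$, the two ``$\GammaHedge\,a$''-contributions cancel at leading order, up to an error $\tfrac{1}{2}\GammaHedge_s\{\roughBracket[\pricePath]_{s,t} - e^{2rs}a^{i,j}(e^{-rs}\pricePath_s)(t-s)\} = O(\control^{1+\alpha/p}(s,t))$, where the Taylor remainder in space and time is controlled as in Proposition~\ref{prop.pathwiseIntegralAccruing} via the $\alpha$-H\"older regularity of $\Hessianz\BSsolutionForPortfolioFunction$ and the H\"older-in-time regularity of $\partial_t\BSsolutionForPortfolioFunction$ granted by the PDE and the $q$-moderation.

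What remains is the approximately additive
\[
\Xi_{s,t} := \DeltaHedge_s\pricePath_{s,t} + \GammaHedge_s\secondOrderX[\pricePath]_{s,t} + r(\portfolioValue_s - \DeltaHedge_s\cdot\pricePath_s)(t-s),
\]
for which $|V_{s,t} - \Xi_{s,t}| \le \control^{\gamma}(s,t)$ with $\gamma > 1$, obtained by fusing the space-Taylor, time-Taylor, and compensator-residual estimates into a single control. The Sewing Lemma (Proposition~\ref{prop.SewingLemma}) then identifies $V_t - V_0$ with $\lim_{\abs{\partition}\downarrow 0}\sum_u \Xi_{u,u'}$, and the sum splits into the compensated rough-path part $\sum_u[\DeltaHedge_u\pricePath_{u,u'} + \GammaHedge_u\secondOrderX[\pricePath]_{u,u'}]$, converging by Lemma~\ref{lemma.approxAdditivityOfCompensatedSummand} to $((\DeltaHedge, \GammaHedge).(\pricePath, \secondOrderX[\pricePath]))_{0,t}$, and the Lebesgue-type part $\sum_u r(\portfolioValue_u - \DeltaHedge_u\cdot\pricePath_u)(u'-u)$, converging for the continuous integrand and smooth $\risklessAsset_t = e^{rt}$ to the classical Stieltjes integral $\int_0^t(\portfolioValue_u - \DeltaHedge_u\cdot\pricePath_u)\,d\risklessAsset_u$. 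The main technical obstacle is the bookkeeping of H\"older exponents: unlike in Proposition~\ref{prop.pathwiseIntegralAccruing}, the undiscounted PDE retains the first-order drift $rz^i\partial_{z^i}v$ and mass $rv$, so one must verify that the PDE-substitution term fuses with the $\GammaHedge\,\roughBracket[\pricePath]$-compensator to leave only the clean Lebesgue piece plus a genuinely super-linear remainder; this fusion uses the $\alpha$-H\"older regularity of $a^{i,j}$ together with the $1/p$-regularity of $\pricePath$ in an essential way.
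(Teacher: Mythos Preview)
Your proposal is correct and follows essentially the same route as the paper: a Taylor expansion of $\BSsolutionForPortfolioFunction(t,\pricePath_t)-\BSsolutionForPortfolioFunction(s,\pricePath_s)$ controlled exactly as in Proposition~\ref{prop.pathwiseIntegralAccruing}, followed by summation over a partition and passage to the limit, with the Black--Scholes PDE \eqref{eq.fullBSpde} producing the drift term $\int_0^t (\portfolioValue_u - \DeltaHedge_u\pricePath_u)\,d\risklessAsset_u$. The only cosmetic difference is timing: you substitute $\partial_t\BSsolutionForPortfolioFunction$ via the PDE at the level of the local germ and absorb the resulting mismatch $\tfrac12\GammaHedge_s\big([\enhancedPricePath]_{s,t}-e^{2rs}\sigmasquare(e^{-rs}\pricePath_s)(t-s)\big)$ into the super-linear remainder before summing, whereas the paper keeps $\partial_t\BSsolutionForPortfolioFunction(u,\pricePath_u)(\uprime-u)+\tfrac12\Hessianz\BSsolutionForPortfolioFunction(u,\pricePath_u)[\enhancedPricePath]_{u,\uprime}$ intact in the expansion, takes the limit to obtain the corresponding Riemann/Stieltjes integrals, and invokes the PDE only afterwards to effect the cancellation---both orderings rest on the same $\alpha$-H\"older estimate for $\sigmasquare$ along $\pricePath$.
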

\begin{proof}
	The proof is analogous to the one of Proposition \ref{prop.pathwiseIntegralAccruing}. Indeed, the same Taylor expansion shows that  for some $\gamma > 1$ and some control function $\control$, on the subintervals $[u,u\derivative]$ of any partition $\partition$, it holds
	\begin{align*}
	\BSsolutionForPortfolioFunction(\uprime, \pricePath_{\uprime} ) - \BSsolutionForPortfolioFunction(u,\pricePath_u) = & 
	\gradz \BSsolutionForPortfolioFunction (u, \pricePath_u) \pricePath\subscriptuuprime + \Hessianz \BSsolutionForPortfolioFunction(u, \pricePath_u) \secondOrderX[\pricePath]\subscriptuuprime \\
	& + \partial_t \BSsolutionForPortfolioFunction(u, \pricePath_u) (\uprime - u) + \half \Hessianz \BSsolutionForPortfolioFunction(u, \pricePath_u)[\enhancedPricePath]\subscriptuuprime \\ & \qquad\qquad \qquad\qquad\qquad \qquad\qquad\qquad + O \big(\control^{\gamma}(u,\uprime)\big).
	\end{align*}
	By applying the operator $\lim_{\abs{\partition}\rightarrow 0}\sum_{u\in \pi}$ to both sides of this expansion, we obtain \eqref{Eq.pathwisePortfolioIncrements} since $\BSsolutionForPortfolioFunction$ solves the Black-Scholes partial differential equation \eqref{eq.fullBSpde}.
\end{proof}

The pathwise differential equation in \eqref{Eq.pathwisePortfolioIncrements} syntactically coincides with the classical Stochastic Differential Equation for the portfolio process in the delta hedging. In addition, the definition of the pathwise integral $(\DeltaHedge,$ $\GammaHedge)$ $.(\pricePath,\secondOrderX[\pricePath])$ explicitly expresses the dependence on the gamma sensitivity, which is not captured by the classical stochastic integral. This provides a theoretical underpinning to the usage of Greeks beyond the leading order delta.

\subsubsection*{Fundamental theorem of derivative trading}

The formulas for pricing and hedging heavily depend on the diffusive model specification. In classical terms of Mathematical Finance, such specification amounts to specifying the diffusion coefficient (volatility) in It\^o's price dynamics. Volatility is not directly observable and consequently a trader is liable to misspecify volatility and to use coefficients that do not faithfully represent the true price dynamics. The \emph{Fundamental Theorem of Derivative Trading} addresses such misspecification. It provides a formula that computes the profit\&loss that a trader incurs into when hedging with the wrong volatility -- a reference for this classical formula is \cite{EJP17fun}. Proposition \ref{prop.fundamentalTheoremDerivativeTrading} contributes to the assessment of model misspecification in two ways: on the one hand, it shows the pathwise nature of the P\&L formula (this aligns with the unifying theme of the section); on the other hand, it provides a generalisation of the classical  P\&L formula. The generalisation consists in removing the assumption that the ``true'' price evolution is governed by an It\^o SDE: it captures the misspecification that arises not just between two diffusive enhancements but between a diffusive enhancement (used by the trader) and a general enhanced path (the ``true'' dynamics).

\begin{prop}[``Fundamental Theorem Of Derivative Trading'']\label{prop.fundamentalTheoremDerivativeTrading}
	Let $f(\pricePath_\timeHorizon)$ be a contingent claim, where $f$ is in $C_b(\Rd)$ and $\pricePath_\timeHorizon$ is the terminal value of a continuous $d$-dimensional price path $\pricePath$ of finite $p$-variation. Let $\roughPairTruePricePath$ be the true enhanced path above the trace $\pricePath$. Let $\enhancedPricePair$ be an $\alpha$-diffusive model specification, $\alpha>p-2$, and let $\volatilityOperator$, $\BSsolutionForPortfolioFunction$, $\DeltaHedge$ and $\GammaHedge$ be as in Proposition \ref{prop.pathwisePortfolioIncrements}. Then, 
	\begin{equation}\label{eq.fundamentalPandLformula}
	P\&L = \tilde{\portfolioValue}_\timeHorizon - f(\pricePath_\timeHorizon) 
	= 
	\half (\GammaHedge. \Big([\enhancedPricePath] - [\enhancedTruePricePath]\Big))_{0,\timeHorizon},
	\end{equation}
	where the integral on the right hand side is a well-defined Young integral, and $\hat{\portfolioValue}_t$ is the value at time $0\leq t \leq\timeHorizon$ of the $\volatilityOperator$-hedging portfolio applied to the true enhancement $\enhancedTruePricePath$, 
	{defined by}
	\begin{equation*}
	\begin{split}
	\hat{\portfolioValue}_t := 
	\BSsolutionForPortfolioFunction(0,\pricePath_0) 
	+ \big( (\DeltaHedge,& \GammaHedge). (\pricePath,\secondOrderTruePricePath)\big)_{0,t} \\
	&+
	\intzerot \Big(\BSsolutionForPortfolioFunction(u,\pricePath_u) - \DeltaHedge_u \pricePath_u \Big) \, d\risklessAsset_u.
	\end{split}
	\end{equation*}
\end{prop}
\begin{remark}
{
    If $\enhancedTruePricePath$ arises from a diffusion model then
    as compensation terms in our integrals
    vanish in probability, our definition of the
    value of the portfolio, $\hat{\portfolioValue}_t$, can
    be justified as a self-financing condition. We will justify this definition
    for general pricing signals in Section \ref{sec.EnlargedStrategies}
    below. 
}

	In order to recognise the extension of the classical Fundamental Theorem of Derivative Trading, we rewrite the Young integral in equation \eqref{eq.fundamentalPandLformula} as 
	\[
	\half \int_{0}^{\timeHorizon} \Hessianz \BSsolutionForPortfolioFunction (t,\pricePath_t) \, 
	d \Big( [\enhancedPricePath]_t - [\enhancedTruePricePath]_t\Big).
	\]
	In the case where $\enhancedTruePricePath$ is a diffusive enhancement, we have that $[\enhancedTruePricePath]_t  = \intzerot e^{2\interestRate u }  $ $ \aij _{\mathrm{true}} (e^{-\interestRate u }\pricePath_u) du$, so that the integral is turned in the familiar form
	\[
	\half \int_{0}^{\timeHorizon} e^{2\interestRate t} \partial\squared_{z^i, z^j} \BSsolutionForPortfolioFunction (t,\pricePath_t) 
	\Big(
	\aij (\eminusrt \pricePath_t) -  \aij _{\mathrm{true}} (e^{-\interestRate t }\pricePath_t)
	\Big)dt.
	\] 
\end{remark}
\begin{remark}
	We remark that our generalisation of the fundamental theorem of derivative trading allows to compare the model performance with the actual rough bracket $[\enhancedTruePricePath]$ of the price trajectory. This quantity is model-independent and directly computable from data using for example the Python package \texttt{iisignature}, available at the link \texttt{https://}\texttt{pypi.org/} \texttt{project/}\texttt{iisignature/}. See also the documentation by J. Reizenstein and B. Graham \cite{RG18iis}. An alternative way to extract the actual rough bracket $[\enhancedTruePricePath]$ from the discretely sampled stream of price data can rely on the convergence result of G. Flint, B. Hambly and T. Lyons  \cite{FHL16dis}, to which Remark \ref{remark.rough_hoff_and_model_specification} below is devoted. 
\end{remark}

\begin{proof}[Proof of Proposition \ref{prop.fundamentalTheoremDerivativeTrading}.]
	We manipulate the Taylor expansion in the proof of Proposition  \ref{prop.pathwisePortfolioIncrements} and, for $0\leq u \leq t \leq T$, we write 
	\begin{align*}
	\BSsolutionForPortfolioFunction(t, S_{t} ) - \BSsolutionForPortfolioFunction(u,S_u) = & 
	\gradz \BSsolutionForPortfolioFunction (u,S_u) S_{u,t} + \Hessianz \BSsolutionForPortfolioFunction(u,S_u) \secondOrderTruePricePath_{u,t} \\
	& + \partial_t \BSsolutionForPortfolioFunction(u,S_u) (t - u) + \half \Hessianz \BSsolutionForPortfolioFunction(u,S_u)[\enhancedPricePath]_{u,t} \\
	& + \half \Hessianz \BSsolutionForPortfolioFunction(u,S_u)\Big([\enhancedTruePricePath]_{u,t} - [\enhancedPricePath]_{u,t}\Big)
	+ O \big({\control}^{{\gamma}}(u,t)\big),
	\end{align*}
	where $\BSsolutionForPortfolioFunction$ is the solution to the $d$-dimensional Black-Scholes partial differential equation \eqref{eq.fullBSpde}, ${\control}$ is a control function and ${\gamma} >1$. We sum over the nodes of a partition and then we let the mesh-size shrink to zero, obtaining \eqref{eq.fundamentalPandLformula}. The good definition of the Young integral of $\GammaHedge$ against $[\enhancedTruePricePath]$ and $[\enhancedPricePath]$ holds as in Corollary \ref{Corol.pathwiseRobustness}.
\end{proof}

\begin{remark}\label{remark.rough_hoff_and_model_specification}
Assume that the price vector  $\price \in \Rd$ is sampled at the points $t_k$, $k = 0,\dots, N$ of a partition $\partition = \lbrace t_k \rbrace$ of the time window $\timeWindow$. Following \cite[Definition 2.1]{FHL16dis}, define the Hoff process associated to the data stream $\lbrace \price_{t_k}: \, t_k \in \partition\rbrace$ as the $\R^{2d}$-valued path $\price^H = (\price^{H,b},\price^{H,f})$ given by
\begin{equation*}
 \price^H_t = 
 \begin{cases}
 (\price_{t_k}, \price_{t_{k+1}} ) & \frac{k}{N}\timeHorizon \leq t < \frac{k+1/2}{N}\timeHorizon \\
 (\price_{t_k}, (1-\alpha(t))\price_{t_{k+1}} + \alpha(t) \price_{t_{k+2}} ) & \frac{k+1/2}{N}\timeHorizon \leq t < \frac{k+3/4}{N}\timeHorizon \\
 ( (1-\beta(t))\price_{t_k} + \beta(t)\price_{t_{k+1}} , \price_{t_{k+2}} ) & \frac{k+3/4}{N}\timeHorizon \leq t < \frac{k+1}{N}\timeHorizon
 \end{cases}
\end{equation*}
where $\alpha$ and $\beta$ are affine functions of $t$ such that $\alpha((k+1/2)\timeHorizon/N)$ $ = \beta((k+3/4)\timeHorizon/N) $ $ =0$ and $\alpha((k+3/4)\timeHorizon/N)$ $ = \beta((k+1)\timeHorizon/N) $ $ =1$. This is a particular choice of linear interpolation of the discrete lead-lag process of $\price$ in which the second (lead) component is updated before the first (lag) component.  Let $(\price^H, \secondOrderS ^ H)$ be the geometric $2$-rough path obtained by enhancing $\price^H$ via standard Stieltjes integration (possible because $\price^H$ is of bounded variation).   G. Flint, B. Hambly and T. Lyons  prove in \cite[Theorem 4.1]{FHL16dis} that, if $\price$ is a semimartingale, then $(\price^H, \secondOrderS ^ H)$ converges to the $2$-rough path
\begin{equation}
(\price^{H,\infty},\secondOrderS^{H,\infty})_{s,t} := 
\exp_2 \left(
\left(\begin{array}{c}\price_{s,t} \\ \price_{s,t}   \end{array}\right)
+
\left(
\begin{array}{c c}
A_{s,t} & A_{s,t} -\half \langle \price \rangle_{s,t} \\
A_{s,t} + \half \langle \price \rangle_{s,t} & A_{s,t}
\end{array}
\right)
\right),
\end{equation}
where $\exp_2$ is the level-2 truncation of the exponential map in the tensor algebra  $T(\R^{2d})$, and $A_{s,t}$ and $\langle \price \rangle_{s,t}$ are respectively the Levy area and the quadratic variation  of the semimartingale $\price$ from time $s$ to time $t$. The convergence happens in the limit as the meshsize of the partition $\partition$ shrinks to zero and with respect to suitable $p$-variation norms (we refer to the original article for the exact way of convergence).

Motivated by G. Flint, B. Hambly and T. Lyons's construction, the actual rough bracket $[\enhancedTruePricePath]$ could be extracted from the available discrete sample of data points along the following lines.

Let $V$ be a vector space, and for $v_1$ and $v_2$ in $V$ let $v=v_1\oplus v_2$ denote their direct sum, which lives in the space $V\oplus V$.  Define the map $q$ given by
\begin{equation}
 \begin{split}
 q:& \left( V \oplus V \right) ^{\otimes 2} \longrightarrow V^{\otimes 2} \\
  & (v_1 \oplus v_2) \otimes (w_1 \oplus w_2) \mapsto v_2 \otimes w_1 - v_1\otimes w_2 ,
 \end{split}
\end{equation}
where $\otimes$ denotes tensor product. Notice that for $v=v_1\oplus v_2$ and $w = w_1 \oplus w_2$ in $V\oplus V$, we have that $q(v \wedge w)= v_2 \odot w_1 - v_1 \odot w_2$, where $v\wedge w$ is the antisymmetric product $(v\otimes w - w\otimes v)/2$ in $(V\oplus V)^{\otimes 2}$, and  $v_i\odot w_j$ is the symmetric product $(v_i\otimes w_j + w_j\otimes v_i)/2$ in $V^{\otimes 2}$. 
Therefore, $q((V\oplus V)^{\wedge 2} ) = V^{\odot 2}$. A proxy for the rough bracket $[\enhancedTruePricePath]$ could then be derived from the data stream  $\lbrace \price_{t_k}: \, t_k \in \partition\rbrace$ by applying the map $q$, defined with $V=\Rd$, to the 2-rough path $(\price^H, \secondOrderS ^ H)$ of the Hoff process associated to such a data stream. 
The aforementioned result of \cite[Theorem 4.1]{FHL16dis} guarantees consistency with the classical semimartingale case, in the limit as the time grid of the sample gets finer and finer. To the best of our knowledge, the limit of the rough path lift of the Hoff process is not understood  in the non-semimartingale case. However, for the practical purpose of estimation from data, the procedure could be applied and the the diffusive  model specification $\roughBracketS$ could be calibrated to match the so-derived $[\enhancedTruePricePath]$, hence minimising the error in equation \eqref{eq.fundamentalPandLformula}.
\end{remark}

\section{Enlarged hedging strategies}\label{sec.EnlargedStrategies}
Given an enhanced price path $\enhancedPricePair$, we interpreted the pathwise integral $(H,H\derivative)$ $.(\pricePath,\secondOrderX[\pricePath])$ as the portfolio trajectory arising from the position $H$ on the risky asset $\pricePath$. 

In this section, we explore the possibility to modify the interpretation of $(H,H\derivative).(\pricePath,$ $\secondOrderX[\pricePath])$. We will not only consider it as representing the values of the position $H$ on $\pricePath$, but we will give a financial interpretation to the compensation $H\derivative\secondOrderX[\pricePath]$ as well. This requires to analyse the mechanics of rebalancing portfolios during hedging periods. Classically, given the partition $\partition$ and the discretised strategy $(\piHzero,\piH)$,\footnote{\label{footnote.LeftSuperscriptPi}Recall that given a (continuous) path $\varphi$ in $\R^m$ and a partition $\pi$ we denote by ${^{\pi}}\!\varphi$  the following piecewise constant caglad approximation:
	\[
	{^{\pi}}\!\varphi_t = \sum_{u \in \pi} \varphi_u \one\left\lbrace t \in (u,\uprime]\right\rbrace.
	\]
} the cost of rebalancing the portfolio from $(u-,u]$ to $(u,\uprime]$ is 
\[
\rebal(u) = \piHzero_{\uprime} \risklessAsset_u + \piH_{\uprime} S_u - \piHzero_{u} \risklessAsset_u - \piH_{u} S_u.
\]
Such discretised strategy is self-financing on the grid $\pi$ if and only if for all $u>0$ in $\pi$ it holds $\rebal(u)=0$, or equivalently if and only if
\[
\Hzero_{\uprime} \risklessAsset_{\uprime} + H_{\uprime} S_{\uprime} - \Hzero_{u} \risklessAsset_u - H_{u} S_u = 
\Hzero_u \risklessAsset\subscriptuuprime + H_u S\subscriptuuprime \qquad \forall u \in \pi \cap [0,T).
\]
Given $t$ in $(0,T]$, set $\pi_t := (\pi \cup \lbrace t \rbrace) \cap [0,t]$. By summing over $u \in \pi_t$, $u<t$, we have 
\[
\Hzero_{t} \risklessAsset_{t} + H_{t} S_{t} - \Hzero_{0} \risklessAsset_0 - H_{0} S_0 =
\sum_{\substack{u \in \pi_t \\ u < t}} \Hzero_u \risklessAsset\subscriptuuprime +
\underbrace{\sum_{\substack{u \in \pi_t \\ u < t}} H_u S\subscriptuuprime}_{= (\piH.S)_t}. 
\]
If $S$ is a semimartingale on $(\Omega, \sigmaAlgebra, \Prob, (\sigmaAlgebra_t)_t)$, then taking the $\Prob$-limit as $\abs{\pi}\rightarrow 0 $ justifies the axiomatic condition \eqref{Eq.selfFinancingContTime}, owing in particular to
\[
\sup \Big\lbrace \limsup_{\abs{\pi}\rightarrow 0} \Prob \left(\Big\lvert (\piH.S)_t - \intzerot HdS \Big\rvert >\epsilon \right): \, \epsilon>0\Big\rbrace = 0.
\]
Here the probabilistic model comes into play to guarantee the convergence of the Riemann sums to the It\^o integral $\intzerot HdS$ of $H$ against the semimartingale $S=S_t(\omega)$, of which the actual price trajectory is thought of as a realisation.

Considering an enhancement $\enhancedPricePath$ of $\pricePath$ and incorporating the appropriate compensation within the rebalancing mechanics, we can refrain from resorting to probability when assessing continuously rebalanced hedging strategies. 

Given a symmetric  $G_t$ in $\R^{d\times d} \cong \Homomorphisms(\Rd \otimes \Rd , \R)$ and a subinterval $[s,t] \subset [0,T]$ we interpret  the real quantity $G_s \secondOrderX[\pricePath]_{s,t}$  as the sum of the payoffs at time $t$ of the $d(d-1)/2$ positions $2G^{i,j}_s = 2G^{j,i}_s$, $1\leq i<j\leq d$, on the swap contracts
\[
\pricePath\subscriptst^i \pricePath\subscriptst^j - {[\enhancedPricePath]}^{i,j}_{s,t}, \qquad 1\leq i<j\leq d,
\]
and of the $d$ positions $G^{i,i}_s$, $1\leq i \leq d$, on the swap contracts
\[
(\pricePath\subscriptst^i)\squared - {[\enhancedPricePath]}^{i,i}_{s,t}, \qquad 1\leq i\leq d.
\]
Hence, for every continuous $\phi_t = (\phi^{0}_t,\phi^{1}_t , \phi^{2}_t ) \in \R \times \Rd \times \R^{d\times d}_{\text{sym}}$ we can interpret 
\[
\piphizero_u \risklessAsset_u + \piphione_u S_u + \piphitwo_u\secondOrderX[\pricePath]_{u-,u}
\]
as the value of our portfolio at time $u$ if on the subinterval $(u-,u]$ we have held $\piphizero_u = \phi^0_{u-}$ positions in cash, $\piphione_u = \phi^1_{u-}$ positions in stocks and $\piphitwo_u = \phi^2_{u-}$ positions in swaps. Strategies that adopt positions in cash, stocks and swaps shall be referred to as \emph{enlarged strategies}. For an enlarged strategy, the rebalancing cost from $(u-,u]$ to $(u,\uprime]$ is 
\begin{equation*}\begin{split}
\rebal(u) = \phi^0_u \risklessAsset_u + \phi^1_u S_u +& \phi^2_u \pricep(u,\uprime) \\
&- \big\lbrace  \phi^0_{u-} \risklessAsset_u + \phi^1_{u-} S_u + \phi^2_{u-} \secondOrderX[\pricePath]_{u-,u} \big\rbrace,
\end{split}
\end{equation*}
where, for $0\leq s < t \leq T$ and $1\leq i\leq j\leq d$, the amount $\pricep^{i,j}(s,t) = \pricep^{j,i}(s,t)$ denotes the (exogenously-given) price at time $s$ of the swap $\secondOrderX[\pricePath]_{s,t}^{i,j}$ with maturity $t$. Notice that, since swap contracts are not primitive financial instruments, in the equation above the payoff $\secondOrderX[\pricePath]_{u-,u}$ at time $u$ is disentangled from the price $\pricep (u,\uprime)$ required at time $u$ to take a unit position on the next swap $\secondOrderX[\pricePath]\subscriptuuprime$.

We assume that the price $\pricep(s,t)$ of the swap contracts $\secondOrderX[\pricePath]_{s,t}$ defines a $\Rd\odot\Rd$-valued function on $\simplex$, null and right-continuous on the diagonal,\footnote{By this we mean: $\pricep(s,s)=\lim_{t\downarrow s}\pricep(s,t) = 0$ for all $0\leq s \leq \timeHorizon$.} and such that $\pricep(s,t)$ is of finite $p/2$-variation. Let $\phi\squared$ be a continuous path of finite $q$-variation on $\Homomorphisms(\Rd\odot\Rd;\R)$, where $q$ and $p/2$ are Young complementary. Then, the integral path
\[
Y_t:=(\phi\squared.\pricep)_{0,t}
\]
exists and represents the accumulated cost in the time interval $[0,t]$ consumed by a continuously rebalanced enlarged strategy in order to adopt the positions $\phi\squared$ on the swap contracts. 

\begin{defi}\label{defi.enlargedDeltaHedging}
	Let $f(\pricePath_\timeHorizon)$ be a contingent claim, where $f$ is in $C_b(\Rd)$ and $\pricePath_\timeHorizon$ is the terminal value of a continuous $d$-dimensional price path $\pricePath$ of finite $p$-variation.  Let $\enhancedPricePair$ be an $\alpha$-diffusive model specification, $\alpha>p-2$, and let $\volatilityOperator$, $\BSsolutionForPortfolioFunction$, $\DeltaHedge$ and $\GammaHedge$ be as in Proposition \ref{prop.pathwisePortfolioIncrements}. Let $C$ be a continuous real valued function on $\timeWindow$. Then, the $C$-\emph{enlarged delta hedging} is the enlarged strategy defined as 
	\begin{equation}\label{eq.enlargedDeltaHedging}
	\begin{split}
	\phi^{0}_t =& C_t \eminusrt - \DeltaHedge_t \pricePath_t \eminusrt -Y_t\eminusrt , \\
	\phi^{1}_t = & \DeltaHedge_t , \\
	\phi\squared_t =& \GammaHedge_t,
	\end{split}
	\end{equation}
	where $Y_t:=(\phi\squared.\pricep)_{0,t}$. 
\end{defi}

A desirable property of a hedging strategy is the self-financing condition, i.e. the fact that the strategy does nor require money to readjust its positions during the hedging period. The following Proposition \ref{prop.zeroRebalancingEnlargedDeltaHedging} gives the explicit formula for $C$ in \eqref{eq.enlargedDeltaHedging} that guarantees a null rebalancing cost of the $C$-enlarged delta hedging. 

\begin{prop}\label{prop.zeroRebalancingEnlargedDeltaHedging}
	The continuous real valued function 
	\begin{equation}\label{eq.cashPositionEnlargedDeltaHedging}
	C_t = \BSsolutionForPortfolioFunction (t, \pricePath_t)
	-\interestRate \intzerot e^{\interestRate(t-u)}Y_u du,
	\end{equation}
	where $Y_t:=(\GammaHedge.\pricep)_{0,t}$, is such that the $C$-enlarged delta hedging has zero cost of continuous rebalancing. 
\end{prop}
\begin{proof}
	We adopt the notation in Definition \ref{defi.enlargedDeltaHedging}. Furthermore, we set 
	\[
	y_t := -r \intzerot e^{r(t-u)}Y_u \, du.
	\]
	We can write
	\begin{equation}\label{Eq.ODEvanishingRebal}
	y_{0,t} - r \intzerot (y_u - Y_u) du = 0.
	\end{equation}
	The cost of rebalancing along a partition $\partition$ is 
	\begin{align*}
	\rebal(u) =& \piphizero_{\uprime} \risklessAsset_u + \piDeltaHedge_{\uprime} S_u + \piGammaHedge_{\uprime}\pricep (u,\uprime) \\
	&-\big\lbrace \piphizero_u \risklessAsset_u + \piDeltaHedge_u S_u + \piGammaHedge_u \secondOrderX[\pricePath]_{u-,u} \big\rbrace \\
	= & C_{u-,u} + \GammaHedge_u \pricep(u,\uprime) - Y_{u-, u} \\
	&-\big\lbrace \phi^{0}_{u-}\risklessAsset_{u-,u} + \DeltaHedge_{u-}S_{u-,u} + \GammaHedge_{u-}\secondOrderX[\pricePath]_{u-,u} \big\rbrace. 
	\end{align*}
		Hence, summing over $u \in \pi_t$, $u>0$, we have 
	\begin{align*}
	\sum_{\substack{u \in \pi_t \\ u>0}} \rebal(u) =& V_{0,t} + y_{0,t} - Y_t + \sum_{u \in \pi_t} \GammaHedge_u \pricep(u,\uprime) - \GammaHedge_0\pricep(0,0\derivative) \\
	&- (\piphizero.\risklessAsset)_t - \big( (\piDeltaHedge, \piGammaHedge) . (S,\secondOrderX[\pricePath])\big)_t. 
	\end{align*}
	In the limit as $\abs{\pi}\rightarrow 0 $ we conclude 
	\begin{align*}
	\lim_{\abs{\pi}\rightarrow 0} \sum_{\substack{u \in \pi_t \\ u>0}} \rebal(u) =&
	V_{0,t} + y_{0,t} - r \intzerot V_u du \\
	&- r \intzerot (y_u - Y_u)du  + r\intzerot \DeltaHedge_u S_u du\\
	 & - \big( (\DeltaHedge,\GammaHedge).(\pricePath,\secondOrderX[\pricePath] )\big)_{0,t}  \\
	= & 0,
	\end{align*}
	owing to \eqref{Eq.pathwisePortfolioIncrements} and \eqref{Eq.ODEvanishingRebal}.
\end{proof}

The classical delta hedging is such that the initial endowment $\portfolioValue_0 = \BSsolutionForPortfolioFunction(0,\pricePath_0)$ is precisely what the replicating strategy requires in order to yield the amount $f(\pricePath_\timeHorizon)$ at maturity $\timeHorizon$. Therefore, the writer of an option invests $\portfolioValue_0 $ in the delta hedging strategy, and such strategy will yield exactly the amount fo money that the buyer of of the option will demand at maturity. Since  delta hedging has no additional costs of financing (i.e. rebalancing the portfolio does not consume money) the writer's profit\&loss is null. For the $C$-enlarged delta hedging in Proposition \ref{prop.zeroRebalancingEnlargedDeltaHedging}, the self-financing condition holds. Therefore, the option writer's P\&L is exclusively given by the cost of replication, namely by the difference between the due payment $f(\pricePath_\timeHorizon)$ and the final value $\phi^{0}_\timeHorizon\risklessAsset_\timeHorizon + \phi^{1}_\timeHorizon \pricePath_\timeHorizon$ of the portfolio. Notice that the latter does not comprise the payoff of the swaps, because such endowments are consumed in the rebalancing process. 
\begin{prop}
	The profit\&loss of the $C$-enlarged delta hedging with $C$ given as in \eqref{eq.cashPositionEnlargedDeltaHedging} is 
	\[
	P\&L = Y_\timeHorizon + \interestRate \int_{0}^{\timeHorizon} e^{\interestRate(\timeHorizon - t)}Y_t\,  dt , 
	\]
	where $Y_t = (\GammaHedge.\pricePath)_{0,t}$. 
\end{prop}
\begin{proof}
	The profit\&loss is given by the difference $P\&L= v(\timeHorizon,\pricePath_\timeHorizon) -  \phi^{0}_{\timeHorizon} \risklessAsset_\timeHorizon + \phi^{1}_{\timeHorizon} \pricePath _{\timeHorizon}$. Hence, the statement follows immediately from the definitions in equation \eqref{eq.enlargedDeltaHedging} with $C$ given as in equation \eqref{eq.cashPositionEnlargedDeltaHedging}. 
\end{proof}

\section{Call options in the Black-Scholes model}\label{sec.BSfromOurPathwisePerspective}
We now consider the classical Black-Scholes model, where the paths of the underlying stock price are modelled as trajectories of a geometric Brownian motion. Hence, the setting is the one presented in Section \ref{sec.preliminariesAndHeuristics}, and we take the dimension $d$ equal to 1. 

The volatility operator $\volatilityOperator$ is 
\begin{equation}\label{eq.oneDimensionalBlackScholesVolOperator}
\volatilityOperator \phi (x) =\frac{\volatilityCoefficient\squared}{2} x\squared \partial_{xx}\squared \phi (x), \qquad \phi \in C\squared (\R), 
\end{equation}
where $\volatilityCoefficient>0$ is the volatility coefficient.

Pricing a European option with payoff $f(\pricePath_\timeHorizon)$ entails to solve the partial differential equation \eqref{Eq.discountedBSpde} where the terminal constraint $\tilde{h}=e^{-r\timeHorizon} h $ appearing in this PDE stands in relation to the payoff function $f$ as expressed in equation \eqref{eq.payoffInForwardPrices}.

The volatility operator in equation \eqref{eq.oneDimensionalBlackScholesVolOperator} is not locally uniformly elliptic, i.e. it does not satisfy the requirement in equation \eqref{eq.ellipticityOfVolatilityOperator}. Therefore, we cannot rely on the theory contained in \cite[Chapter 2]{LB07ana} for the existence and uniqueness of the solution to the pricing equation.    However, such an  equation does have a solution for all continuous and bounded terminal constraint $\tilde{h}$ -- and actually for a larger class of terminal constraints. Indeed, let $w$ be the unknown in equation \eqref{Eq.discountedBSpde} and define $u(t,y):=w(\timeHorizon - t,e^{ \volatilityCoefficient y/\sqrt{2} - \volatilityCoefficient\squared (\timeHorizon - t) /2 })$. Then, $w$ solves equation \eqref{Eq.discountedBSpde} if and only if $u$ solves the equation 
\begin{equation}\label{eq.heatEquation}
\begin{cases}
\partial_t u - \partial\squared_{yy} u = 0  & \text{ in } (0,\timeHorizon] \times \R \\
u(0,y) =g(y)  &  \text{ on } \lbrace 0 \rbrace \times \R,
\end{cases}
\end{equation}
where $g(y) = \tilde{h} (e^{\volatilityCoefficient y/ \sqrt{2} - \volatilityCoefficient\squared \timeHorizon /2 })$. 
Therefore, a change of coordinates brings equation \eqref{Eq.discountedBSpde} into the  heat equation.

As pointed out in Remark \ref{remark.ellipticityAssumption}, the ellipticity was assumed in order to state a set of assumptions under which the solution to the partial differential equation in \eqref{eq.minimalPDEtechnology} exists and is unique. However, in cases where existence and uniqueness is guaranteed without relying on ellipticity, this assumption can be removed without affecting the rest of the discussion. This applies in particular to the log-normal example of this section. 

The solution to equation \eqref{eq.heatEquation} is written as 
\begin{equation*}
\begin{split}
u(t,y) =& \frac{1}{\sqrt{4\pi  t }}\int_{\R} g(\xi) \exp \left\lbrace - \frac{(y-\xi)\squared}{4t} \right\rbrace d\xi 
=  \Expectation [ g (Y_t)], 
\end{split}
\end{equation*}
where the random variable $Y_t$ is normally distributed with mean $y$ and variance $2t$. By using the definition of the initial condition $g$, we have that 
\begin{equation*}
u(t,y) 
= \Expectation[g(Y_t)] 
= \Expectation \left[
		\tilde{h}\left(
				\exp\left(
						\frac{\volatilityCoefficient}{\sqrt{2}}y+\volatilityCoefficient \brownianMotion_t - \frac{\volatilityCoefficient\squared}{2} \timeHorizon
					\right)
				\right)
			\right],
\end{equation*}
where $\brownianMotion_t$ is the time-$t$ marginal of a standard Brownian motion. Hence,
\begin{equation*}
\begin{split}
w(t,x) = & u \left (\timeHorizon - t, \frac{\sqrt{2}}{\volatilityCoefficient} \left(\ln x + \volatilityCoefficient\squared t /2\right) \right) \\
= & \Expectation \left[ 
					\tilde{h} \left(
						 \exp \left(
						 	\ln x + \volatilityCoefficient (\brownianMotion_\timeHorizon - \brownianMotion_t ) 
						 			- \frac{\volatilityCoefficient\squared}{2} (\timeHorizon - t )
						 		\right)
						 	\right)
					\right] 
\\
= & e^{-r\timeHorizon}\Expectation \left[ 
f \left(
\exp \left(
\ln x + \volatilityCoefficient (\brownianMotion_\timeHorizon - \brownianMotion_t ) 
- \frac{\volatilityCoefficient\squared}{2} (\timeHorizon - t ) + r \timeHorizon 
\right)
\right)
\right] ,
\end{split}
\end{equation*}
where on the last line we have used the relation in \eqref{eq.payoffInForwardPrices}. Finally, by recalling the discount relation $v(t,z) = e^{rt} w (t,\eminusrt z)$ we obtain that the price at time $t$ of an European option with payoff $f$ at maturity $\timeHorizon$ is given by the formula 
\begin{equation}\label{eq.BSpriceFormula}
\begin{split}
v(t,z) = &  e^{-r(\timeHorizon - t)} \Expectation \left[ 
f \left(
\exp \left(
\ln z + \volatilityCoefficient (\brownianMotion_\timeHorizon - \brownianMotion_t ) 
+(r - \frac{\volatilityCoefficient\squared}{2}) (\timeHorizon - t ) 
\right)
\right)
\right] ,
\end{split}
\end{equation}
where $z$ denotes the price at time $t$ of the underlying.

The expectations above are used in order to have compact formulas for the solutions  $u$, $v$ and $w$  to the differential equations. However, the quantities $u$, $v$ and $w$ do not descend from a probabilistic framework but only from  parabolic PDEs. 

In our framework, the classical Black-Scholes model is specified by the following enhancer
\begin{equation}\label{eq.BSmodelSpecification}
\roughBracketS_{u,v} =9 \volatilityCoefficient\squared \int_{u}^{v} \pricePath\squared_t dt,
\end{equation}
Under this specification, we now discuss the application of our pathwise framework to the case of European call options, where the payoff is 
\begin{equation}\label{eq.callOptionPayoff}
f(z) = (z-K)_{+},
\end{equation}
for some fixed strike $K>0$. 

This payoff is not bounded, so in principle it is not encompassed by the general discussion above. However, the formula in equation \eqref{eq.BSpriceFormula} extends to payoffs with linear growth and thus to the European call option. In other words, despite the fact that the semigroup associated with the PDE pricing equation was defined on the set $C_b(\R)$, this semigroup extends to a wider class than $C_b(\R)$, hence allowing to treat the European call option. However, we would like also to point out that even if the model specification did not allow for such an extension, pricing European call options could always be reduced to pricing European put options, whose payoff is in  $C_b(\R)$. This is due to the so-called \emph{put-call parity}, which is the following model-free relation between the price at time $t$ of the European call option $C_t$, the price at time $t$ of the European put option $P_t$ and the price at time $t$ of the stock $S_t$:
\begin{equation*}
P_t + S_t = C_t + Ke^{-r(\timeHorizon - t )}. 
\end{equation*}   
Because of this relation, if the price $P_t$ can be computed, then the price $C_t$ follows straightforwardly. 

With the payoff in equation \eqref{eq.callOptionPayoff}, formula \eqref{eq.BSpriceFormula} for the price of the option can be rewritten as 
\begin{equation}\label{eq.BSpriceOfCallOption}
v(t,\pricePath_t)
=\pricePath_t N(d_1 (t,\pricePath_t)) - K e^{-r(\timeHorizon - t)} N(d_2 (t,\pricePath_t)), 
\end{equation}
where $N$ is the cumulative distribution function of the normal distribution and 
\begin{equation*}
\begin{split}
d_1(t,\pricePath_t) =&  \left(\volatilityCoefficient\squared (\timeHorizon - t)\right)^{-\half} \left[\ln (\pricePath_t / K)  + \left(r + \frac{\volatilityCoefficient\squared}{2}\right) (\timeHorizon - t )\right] \\
d_1(t,\pricePath_t) =& d_1(t,\pricePath_t) - \volatilityCoefficient\sqrt{\timeHorizon - t}. 
\end{split}
\end{equation*}

In order to be able to apply Proposition \ref{prop.pathwisePortfolioIncrements}, it remains to discuss  the assumption on the $q$-moderation of the pair $(v,\pricePath)$. Unfortunately, here we see that the non-smoothness of the payoff of the call option (or equivalently of the put option) prevents us from applying directly the results established above. We will discuss this in details  now.

Recall the three conditions in Definition \ref{def.qModeration}. Let $H_t$ and $H\derivative_ t$ be the delta and the gamma sensitivities related to the price in \eqref{eq.BSpriceOfCallOption}, namely
\begin{equation}\label{eq.DeltaCallOption}
H_t = \DeltaHedge_t = \partial_{z} v (t,\pricePath_t) = N(d_1(t,\pricePath_t)) , 
\end{equation}
\begin{equation}\label{eq.GammaCallOption}
H\derivative_t =  \GammaHedge_t = \partial\squared_{zz} v(t,\pricePath) =\frac{N\derivative (d_1(t,\pricePath_t))}{\pricePath_t \volatilityCoefficient \sqrt{\timeHorizon - t}}, 
\end{equation}
where $N\derivative$ denotes the probability density function of the standard normal distribution. The fulfilment of the three conditions in Definition \ref{def.qModeration} depends on the terminal value $\pricePath_\timeHorizon$ ot the price path. Depending on this terminal value we have the following asymptotics as $t\uparrow \timeHorizon$:
\begin{equation}\label{eq.limitAtMaturityExists}
\begin{split}
d_1 (t,\pricePath_t) \sim d_2 (t,\pricePath_t) \sim  (\timeHorizon - t)^{-\half} &\qquad \text{ if } \pricePath_\timeHorizon > K ; \\
d_1 (t,\pricePath_t) \sim d_2 (t,\pricePath_t) \sim -(\timeHorizon - t)^{-\half} &\qquad  \text{ if } \pricePath_\timeHorizon < K .
\end{split}
\end{equation} 
Instead, if $\pricePath_\timeHorizon = K$, then neither $d_1$ nor $d_2$ have a limit as $t\uparrow \timeHorizon$. To see this we use the law of iterated logarithm, which gives a precise statement on the small time asymptotics of the brownian path. In the adopted geometric brownian motion case we have that the terminal value  $\pricePath_\timeHorizon$ is written as
\begin{equation*}
\pricePath_\timeHorizon 
= \pricePath_t \exp \left\lbrace
 \volatilityCoefficient \brownianMotion_\timeHorizon - \volatilityCoefficient\brownianMotion_t 
 + \left( r -\frac{\volatilityCoefficient\squared}{2}\right) (\timeHorizon - t ) \right\rbrace.
\end{equation*}
If $\pricePath_\timeHorizon = K$, then by taking logarithm on both sides of this equation we have 
\begin{equation*}
\ln K - \ln \pricePath_t = \volatilityCoefficient \brownianMotion_\timeHorizon - \volatilityCoefficient\brownianMotion_t  + \left( r -\frac{\volatilityCoefficient\squared}{2} \right) (\timeHorizon - t ). 
\end{equation*}
Hence, as $t\uparrow \timeHorizon$ we have 
\begin{equation}\label{eq.asymptoticsWithIteratedLog}
\frac{\ln K - \ln \pricePath_t}{\volatilityCoefficient \sqrt{\timeHorizon - t}} \sim 
\frac{\brownianMotion_\timeHorizon - \brownianMotion_t}{\sqrt{\timeHorizon - t}}
= \underbrace{\frac{\brownianMotion_\timeHorizon - \brownianMotion_t}{\sqrt{2(\timeHorizon - t)\ln \ln (1/(\timeHorizon - t))}} }_{\limsup = 1; \qquad \liminf = -1} 
\cdot\sqrt{2 \ln\ln \frac{1}{\timeHorizon - t}}. 
\end{equation}
The first factor on the right hand side is such that the limsup as $t\uparrow \timeHorizon$ is equal to $1$, and the liminf is equal to $-1$. Therefore, if $\pricePath_\timeHorizon = K$, then
\begin{equation}\label{eq.limitAtMaturityDoesNotExist}
\begin{split}
\limsup_{t\uparrow \timeHorizon} d_1(t,\pricePath_t) = \limsup_{t\uparrow \timeHorizon} d_2(t,\pricePath_t) = +\infty, \\
\liminf_{t\uparrow \timeHorizon} d_1(t,\pricePath_t) = \liminf_{t\uparrow \timeHorizon} d_2(t,\pricePath_t) = -\infty.
\end{split}
\end{equation}
Because of equation \eqref{eq.limitAtMaturityDoesNotExist}, conditions 1 and 2 in Definition \ref{def.qModeration} will not always be satisfied. Moreover, the singularity at $\timeHorizon$ will also impact condition 3. 

\textbf{Condition 1.} The condition on the extension of $H$ and $H\derivative$ up to the time horizon $\timeHorizon$ depends on the terminal value $\pricePath_\timeHorizon$ of the price path. Assume $\pricePath_\timeHorizon> K$. Then, the terms $d_1(t,\pricePath_t)$ and $d_2(t,\pricePath_t)$ both converge to $+\infty$ as $t\uparrow\timeHorizon$, and thus 
\begin{equation*}
\lim_{t\uparrow\timeHorizon} v(t,\pricePath_t) = \pricePath_\timeHorizon - K, \qquad 
\lim_{t\uparrow\timeHorizon} \partial_{z} v(t,\pricePath_t) = 1, \qquad
\lim_{t\uparrow\timeHorizon} \partial\squared_{zz} v(t,\pricePath_t) = 0.
\end{equation*}
Therefore, in the case $\pricePath_\timeHorizon> K$, $H$ and $H\derivative$ can be continuously extended up to $\timeHorizon$ by setting $H_\timeHorizon = 1$ and $H\derivative_\timeHorizon = 0$. Moreover, notice that $H\derivative$ is of finite $p$-variation on $[0,\timeHorizon)$, and so, since it can be continuously extended up to $\timeHorizon$, it is in fact of finite $p$-variation on $\timeWindow$.

The case $\pricePath_\timeHorizon< K$ is similar. Indeed, in this case, the terms $d_1(t,\pricePath_t)$ and $d_2(t,\pricePath_t)$ both converge to $-\infty$ as $t\uparrow\timeHorizon$, and thus 
\begin{equation*}
\lim_{t\uparrow\timeHorizon} v(t,\pricePath_t) = 0 \qquad 
\lim_{t\uparrow\timeHorizon} \partial_{z} v(t,\pricePath_t) = 0, \qquad
\lim_{t\uparrow\timeHorizon} \partial\squared_{zz} v(t,\pricePath_t) = 0.
\end{equation*}
Therefore, in the case $\pricePath_\timeHorizon< K$, $H$ and $H\derivative$ can be continuously extended up to $\timeHorizon$ by setting $H_\timeHorizon = 0$ and $H\derivative_\timeHorizon = 0$. Moreover, notice that $H\derivative$ is of finite $p$-variation on $[0,\timeHorizon)$, and so, since it can be continuously extended up to $\timeHorizon$, it is in fact of finite $p$-variation on $\timeWindow$.

The case $\pricePath_\timeHorizon = K$ instead does not allow for the extension up to time $\timeHorizon$. Indeed, because of equation \eqref{eq.limitAtMaturityDoesNotExist},  $N(d_1(t,\pricePath_t))$ will not have a limits as $t\uparrow\timeHorizon$. Moreover, $H\derivative$ will converge to $+\infty$ as $t\uparrow\timeHorizon$ because, using equation \eqref{eq.asymptoticsWithIteratedLog}, we have
\begin{equation}\label{eq.asymptoticsOfGamma_atTheMoney}
\frac{N\derivative (d_1(t,\pricePath_t))}{\sqrt{\timeHorizon - t}} 
\sim 
\frac{
	(2\pi)^{-\half}\exp\left\lbrace 
	- \left(\frac{\brownianMotion_\timeHorizon - \brownianMotion_t}{\sqrt{2(\timeHorizon - t)\ln \ln (1/(\timeHorizon - t))}}\right)\squared  
				\cdot{2 \ln\ln \frac{1}{\timeHorizon - t}} 
			\right\rbrace }
	{\sqrt{\timeHorizon - t}}.
\end{equation}

\textbf{Condition 2.} For all $x>0$ the function $t\mapsto N(d_1(t,x))$ is continuously differentiable on $[0,\timeHorizon)$. Moreover, 
\begin{equation*}
\begin{split}
\lim_{t\uparrow\timeHorizon} d_1(t,x) = +\infty & \text{ if } x>K \\
\lim_{t\uparrow\timeHorizon} d_1(t,x) = -\infty & \text{ if } x<K.
\end{split}
\end{equation*} 
Therefore the function $t\mapsto N(d_1(t,x))$ is of bounded variation on $[0,\timeHorizon)$ and can be continuously extended to the closed interval $\timeWindow$ if either $x>K$ or $x<K$. Instead, in the case $x=K$, there is not continuous extension up to $\timeHorizon$.

\textbf{Condition 3.} For all $t<\timeHorizon$ we have that the map 
\begin{equation}\label{eq.mapForCondition3}
x\mapsto \frac{N\derivative (d_1(t,x))}{x\volatilityCoefficient \sqrt{\timeHorizon - t}}
\end{equation}  
is continuously differentiable in $(0,+\infty)$. Indeed, we can compute
\begin{equation*}
\partial_{x} \left[ \frac{N\derivative (d_1(t,x))}{x\volatilityCoefficient \sqrt{\timeHorizon - t}} \right]
= - \frac{\left(d_1(t,x) + \volatilityCoefficient \sqrt{\timeHorizon - t}\right)  N\derivative  (d_1(t,x))}{x\squared \volatilityCoefficient\squared (\timeHorizon - t)}. 
\end{equation*} 
Therefore we have that for all $\epsilon>0$  
\begin{equation*}
\sup_{x\geq \epsilon}\left\lvert \partial_{x} \left[ \frac{N\derivative (d_1(t,x))}{x\volatilityCoefficient \sqrt{\timeHorizon - t}} \right] \right \rvert 
\leq \frac{C}{\epsilon\squared \volatilityCoefficient\squared (\timeHorizon - t)},
\end{equation*}
for a fixed constant $C$ that does not depend on $t$. 
The restriction $x\geq \epsilon$ does not hinder the applicability because stock prices are always strictly positive, hence the path $\pricePath$ is lower bounded by a strictly positive constant. 

Because of the latter estimate, for all $t<\timeHorizon$, the map in equation \eqref{eq.mapForCondition3} is $\alpha$-H\"older for all $0<\alpha<1$. However, the modulus of H\"older continuity goes to infinity as $t$ approaches $\timeHorizon$. This says that condition 3 in Definition \ref{def.qModeration} is not satisfied up to the option maturity, but $\GammaHedge$ controls $\DeltaHedge$ in the sense of Gubinelli only up to a time horizon strictly before the option maturity.

\vspace{0.5cm}

The three conditions assessed above  reveal that the applicability of our pathwise framework based on integration \`a la Gubinelli is hindered by the singularities of the sensitivities when $t$ approaches the option maturity. In principle, one could circumvent this issue  by a smooth approximation of the option payoff that could eliminate the point of non-differentiability; we leave this as a future exercise. Here instead, we comment on what this says about option trading in practice, and on how these singularities, exposed by our pathwise framework, could be regarded as an underpinning of the practicality of option hedging. 

The unstable behaviour of the sensitivities when time is close to maturity is known in practice, in particular in the case of options that are at-the-money (i.e. the underlying has a price equal or very close to the strike). Because of this, it is common to stop the delta hedging before the actual option maturity, and to continue with a simpler strategy as buy-and-hold. This is described by introducing a time horizon $\hat{\timeHorizon}$ smaller than the option maturity $\timeHorizon$; then the main option trading based on Black-Scholes model hinges on the undiscounted
Black-Scholes PDE
 \begin{equation*} 
\begin{cases}
\Big(\partial_t + \generatorL  \Big) (\eminusrt \hat{v}) = 0 &\text{ in } [0,\hat{\timeHorizon}) \times \Rd \\
\hat{v}(\hat{\timeHorizon},z)= v(\hat{\timeHorizon},z) & \text{ on }\lbrace \hat{\timeHorizon} \rbrace \times \Rd ,
\end{cases}
\end{equation*}
where $\generatorL$ is given by
\begin{equation}\label{eq.BSfullGenerator}
 \generatorL \varphi (z) = \half z^2  \sigma^2 \partial^{2}_{z z} \varphi (z) 
 + r z \partial_{z}\varphi(z), \quad \varphi \in C_b(\Rd)\cap C^2(\Rd).
 \end{equation}
\eqref{eq.BSfullGenerator}
and $v=v(t,z)$ was given in equation \eqref{eq.BSpriceOfCallOption}, and $\hat{v}$ denotes the unknown in the PDE. In fact, the functions $v$ and $\hat{v}$ coincide in $[0,\hat{\timeHorizon}] \times \Rd $, but we use a different notation to emphasise that the latter is thought of as the Black-Scholes solution stopped before the option maturity.  The issues encountered above do not apply to the pair $(\hat{v},\pricePath)$, which is $q$-moderate in the sense of Definition \ref{def.qModeration} up to the horizon $\hat{\timeHorizon}$. Hence, our pathwise approach exposes the mathematical features that underpin the mentioned common practice. 

After $\hat{\timeHorizon}$ and in the limit as time approaches $\timeHorizon$, the sensitivity $\GammaHedge$ in equation \eqref{eq.GammaCallOption} no longer controls $\DeltaHedge$ of equation \eqref{eq.DeltaCallOption} in the sense of Gubinelli. This is due to the failure of condition 3 in Definition \ref{def.qModeration} as discussed above. Moreover, in the case of at-the-money options, the gamma sensitivity diverges to infinity as time approaches $\timeHorizon$. This has an impact on the profit\&loss formula of Proposition \ref{prop.fundamentalTheoremDerivativeTrading}, as described in the following proposition.  

\begin{prop}\label{prop.divergentPandL}
	Assume that $\pricePath_\timeHorizon = K$. Consider  the Black-Scholes model specified by the enhancer in equation \eqref{eq.BSmodelSpecification} and consider the rough bracket $[\enhancedTruePricePath]$ of the true price signal. Let $\half <\gamma<1$.  Assume that for all $\epsilon>0$ there exists a partition $\pagebreak$ such that $\lvert \partition \rvert < \epsilon$ and 
	\begin{equation}\label{eq.conditionForDivergentPandL}
		\inf\left\lbrace [\enhancedTruePricePath]\subscriptuuprime - [\enhancedPricePath]\subscriptuuprime : \, u\in \partition \right\rbrace > \epsilon^{1-\gamma}. 
	\end{equation}
	Then, there always exists an arbitrary fine trading grid such that the profit\&loss of the delta hedging on this trading grid diverges to $-\infty$ as time approaches the option maturity. 
\end{prop}

\begin{remark}
	Proposition \ref{prop.divergentPandL} says that, in the case of at-the-money options, if the misspecification of the Black-Scholes model is such that the volatility is underestimated, then there exist trading times when following the  delta hedging  will make the trader incur in unbounded losses. Instead, in the cases of in-the-money and out-the-money options ($\pricePath_\timeHorizon>K$ and $\pricePath_\timeHorizon< K$ respectively), the gamma sensitivity has a limit as time approaches maturity and this limit is zero. Therefore, in these two cases, the Young integral describing profit\&loss can be bounded relying on the integration bounds of Section \ref{sec.pathwiseIntegrals}.  
\end{remark}

\begin{proof}[Proof of Proposition \ref{prop.divergentPandL}]
	Let $\partition$ be a trading grid up to the option maturity.  Consider the approximation of the Young integral in equation \eqref{eq.fundamentalPandLformula} on this trading grid, namely 
	\begin{equation}\label{eq.discreteApproximationOfPandL}
	\sum_{u\in \partition} \GammaHedge_u ([\enhancedPricePath]\subscriptuuprime - [\enhancedTruePricePath]\subscriptuuprime).
	\end{equation}
	The condition in equation \eqref{eq.conditionForDivergentPandL} says that for every $\epsilon>0$ there exists $\partition = \partition (\epsilon)$ such that for all $u$ in $\partition$ it holds
	\begin{equation*}
	 [\enhancedPricePath]\subscriptuuprime - [\enhancedTruePricePath]\subscriptuuprime \leq -\epsilon^{1-\gamma}.
	\end{equation*}
	Hence, if the sum in equation \eqref{eq.discreteApproximationOfPandL} is performed on this partition, then such a sum is upper bounded by
	\begin{equation*}
	-\epsilon^{1-\gamma} \Gamma_{\timeHorizon - },
	\end{equation*}
	 where $\timeHorizon - $ denotes the partition point immediately before the option maturity. By the asymptotics in equation \eqref{eq.asymptoticsOfGamma_atTheMoney}, we see that as $\epsilon \downarrow 0$ the quantity $	-\epsilon^{1-\gamma} \Gamma_{\timeHorizon - }$ goes to $-\infty$.
\end{proof}

\section{Conclusions}\label{sec.conclusions}

In this work, we proposed a technical apparatus for pricing and hedging European options that refrained from using probability. The motivation for our proposal is grounded on the fact that the change of measure in the classical paradigm of martingale pricing entails that only pathwise properties of the physical underlying securities are relevant for the valuation of derivatives. This was constructively shown in Section \ref{sec.disentanglingVolatilityFromMarginalVariances}, where an example was produced in which two  stock dynamics that are probabilistically indistinguishable on arbitrarily fine time grids actually imply arbitrarily different prices for European options written on them.

Our probability-free apparatus hinged on enhanced price paths defined in the spirit of Rough Path Theory. On the one hand, their enhancements are essential for pathwise integration, as discussed in Section \ref{sec.pathwiseIntegrals}. On the other hand, they encapsulate the specification of a model for the valuation of derivatives, carrying the information needed for the hedging (Section \ref{sec.enhancedPathOfDiffusionType}). Moreover, these enhancements allow to assess model misspecification: a P\&L formula for the hedging under `wrong' volatility was proved, generalising the so-called fundamental theorem of derivative trading (Section \ref{sec.pathwiseFormulationOfFundamentalEquationsOfHedging}).

We stated the precise assumptions that allow for the application of Gubinelli integrals in the description of hedging strategies. These assumptions are satisfied in the standard Black-Scholes case of European call and put options only up to a time $\hat{\timeHorizon}$ that strictly precedes the option maturity $\timeHorizon$. On the one hand, this opens the question about suitable approximations for the limiting case as $\hat{\timeHorizon}$ converges to $\timeHorizon$ (without using probability); on the other hand, it provides a mathematical underpinning to some hedging practises linked to unstable option sensitivities, in particular in the at-the-money case. 

Beside  this technical issue,   further possible directions of research emerged from our approach to the classical formulas of Mathematical Finance. Indeed, the present work has adopted the option writer's perspective, whereby the option price is justified as the initial endowment of a self-financing  hedging strategy that replicates the option payoff. This could be complemented with a discussion on arbitrage formulated relying on our enhanced paths, hence adopting the option buyer's perspective. The fact that our enhanced paths extend to trajectories other than semimartingales would make the no-arbitrage arguments suitable for models with transaction costs and other market imperfections. Indeed, in these cases price trajectories are usually less regular than semimartingales. Moreover, we would like to point out that the classical arguments for no-arbitrage under transaction costs   is based on consistent price systems, see  \cite{Gua06arb}, \cite{GRS08con}. This means that the absence of arbitrage is ultimately based on support theorems, hence presenting the opportunity to apply Rough Path Theory, whose application in support-type arguments has proved to be fruitful (see \cite[Chapter 19]{FV10mul}). In this direction, a recent  MSc Thesis at Imperial College London moved the first step (\cite{Pei19rou}).

%
%
%
%

	  \addcontentsline{toc}{section}{References}
	  \bibliographystyle{alpha}
	  \bibliography{bibliography_pvopm}
	  
\appendix

 \section[Disentangling historical and implied volatility]{Disentangling historical and implied volatility}
 \label{sec.disentanglingVolatilityFromMarginalVariances}
 
Brigo and Mercurio  \cite{BM00opt} produced examples of ``alternative continuous-time dynamics for discretely-observed stock prices'', in the following sense: given two distinct standard Black-Scholes processes $\geometricBrownianMotion_1$ and $\geometricBrownianMotion_2$ (i.e. geometric Brownian motions), and given a trading grid $\partition$, they constructively showed the existence of a continuous price dynamics such that the following hold simultaneously 
 \begin{enumerate}
 	\item on the grid $\partition$, all its probabilistic features are those of $\geometricBrownianMotion_1$;
 	\item it prices contingent claims as $\geometricBrownianMotion_2$ does. 
 \end{enumerate}
 However fine the grid might be, such ``alternative dynamics'' exist and they span all the range of no-arbitrage prices. In this respect, the ``alternative dynamics'' are deceptive, because statistically inferred distributional properties on the grid $\partition$ would make a trader prone to use $\geometricBrownianMotion_1$ for pricing and hedging purposes, whereas the ``correct'' volatility would be that of $\geometricBrownianMotion_2$.
This indicates that Black-Scholes pricing technology ignores discretely-observed distributional features of the underlying.
 
 This section is devoted to a reformulation of Brigo and Mercurio's construction. Our formulation emphasises the disentanglement of the concept of implied volatility from the concept of historical volatility (a more precise term could be marginal variance), hence describing Brigo and Mercurio's construction from our pathwise perspective. We present simplified direct proofs that circumvent the original discourse based on the Fokker-Planck equation and on evolutions of marginal laws in finite dimensional manifolds of densities (see for example \cite{Bri00mar}). Moreover, we tackle a limiting case that was hinted at in the original article. Indeed, in \cite{BM00opt} the ``alternative dynamics'' were constructed by patching together different processes, and this operation relied on $\epsilon$-neighbourhoods around the grid points. The authors remarked that the process that emerged in the limit as $\epsilon\downarrow 0 $ exists, but they did not treat this limiting case. By introducing the concept of weak NAP-equivalence, we instead manage to present a neat statement in Proposition \ref{prop.interpolatingHistoricalandImpliedVol} about this limiting case. 
 
 \vspace{0.5cm}
 
 We start by recalling three concepts employed in \cite{BM00opt}. They concern distributional properties of stochastic processes. 
 
 \begin{defi}[``Marginal identity'']
 	Let $X$ and $Y$ be  stochastic processes on the time window $\timeWindow$. We say that $X$ and $Y$ are \emph{marginally identical} if their marginal laws are equal at all times, namely if for all bounded measurable $f$ and  all $0\leq t\leq \timeHorizon$ it holds 
 	\begin{equation}\label{eq.conditionForMarginalIdentity}
 	\Expectation f(X_t) = \Expectation f(Y_t).
 	\end{equation} 
 \end{defi}

\begin{remark}\label{remark.differentProbabilitySpaces}
	The condition in equation \eqref{eq.conditionForMarginalIdentity} refers to the law of the two processes $X$ and $Y$. Hence, it is not actually necessary to suppose that $X$ and $Y$ are defined on the same probability space. We shall emphasise this in Definition \ref{def.NAPindifference} below, where different probability spaces represent different models for stock prices' evolutions. However, assuming that $X$ and $Y$ are defined on the same probability space does not affect generality, because a probability space that accommodates both processes can always be constructed. 
\end{remark}
 
 Let $\partition$ be a partition of  $\timeWindow$, i.e. a finite ordered collection of points in $\timeWindow$ such that the initial time $0$ and the time horizon $\timeHorizon$ are both in $\partition$. Let $t$ be in $\timeWindow$.  We adopt the following notational convention: 
 \begin{equation}\label{eq.partitionsNotationalConvention}
 \begin{split}
 t\derivative:= \inf \lbrace u \in \partition : \, u> t \rbrace ,  \qquad 
 \lfloor t \rfloor := \sup\lbrace u &\in \partition : \, u\leq t\rbrace
 \end{split}
 \end{equation}
 
 \begin{defi}[``$\partition$-Markovianity'']
 	Let $X$ be a stochastic process on $\timeWindow$, and let $(\sigmaAlgebra_t)$ be the minimal filtration generated by $X$. Let $\partition$ be a partition of  $\timeWindow$. We say that $X$ is $\partition$\emph{-Markov} if for all $s$ in $\partition$, all $t\geq s$,  and all bounded measurable $f$, it holds
 	\[
 	\Expectation \left[ f(X_t) \vert \sigmaAlgebra_s \right] 
 	= \Expectation \left[ f(X_t) \vert X_s \right] .
 	\]
 \end{defi}
 
 \begin{defi}[``$\partition$-indistinguishability'']
 	Let $X$ and $Y$ be  stochastic processes on $\timeWindow$, and let $\partition$ be a partition of  $\timeWindow$. We say that $X$ and $Y$ are $\partition$\emph{-indistinguishable} if they are $\partition$-Markov, and if
 	\begin{enumerate}
	\item for all $s$ in $\partition$, the push-forwards of the marginal laws of $X_s$ and of $Y_s$ are equivalent as probability measures on $\R$;
	\item for all $s$ in $\partition$, all $t\geq s$, and for almost every $z$ in $\R$ with respect to the push-forward of the marginal law of $X_s$ (or equivalently of $Y_s$), it holds 
 	\begin{equation}\label{eq.conditionPartitionIndistinguishability}
 	\Expectation [f(X_t) \vert X_s=z] 
 	= \Expectation [f(Y_t) \vert Y_s=z],
 	\end{equation}
 	for all all bounded measurable $f$.
 	\end{enumerate}
 \end{defi}

The condition in equation \eqref{eq.conditionPartitionIndistinguishability} says that, if the the marginals $X_s$ and $Y_s$ are equal, then the laws of the marginals $X_t$ and $Y_t$ coincide. In other words, for all $s$ in $\partition$ and  all $t\geq s$, the law of $X_t$ conditioned on $X_s$ is the same as the law of $Y_t$ conditioned on $Y_s$. 

The same observation as in Remark \ref{remark.differentProbabilitySpaces} applies to the condition in equation \eqref{eq.conditionPartitionIndistinguishability}. Moreover we observe that $\partition$-indistinguishability implies that the transition functions of the discrete-time Markov processes $^{\partition}X_t := X_{\lfloor t \rfloor}$ and $^{\partition}Y_t := Y_{\lfloor t \rfloor}$ are the same. 
 
 The last concept that we introduce  is quintessentially  financial. The acronym \emph{NAP} shall stand for \emph{no-arbitrage pricing}. We fix a deterministic interest rate $\interestRate$ so that to include in every market model the riskless asset $\risklessAsset_t=\risklessAsset_0 \exp(+\interestRate t)$. Given a price process $X$, the forward price of $X$ at time $t$ is defined to be $e^{-rt}X_t$. 
 
 \begin{defi}[``NAP-equivalence'']\label{def.NAPindifference}
 	Let $X$ and $Y$ be positive semimartingales defined respectively on $(\Omega_X,\sigmaAlgebra^X,\Prob^X)$ and $(\Omega_Y,\sigmaAlgebra^Y,\Prob^Y)$. We say that $X$ and $Y$ \emph{induce equivalent pricing kernels} / \emph{are NAP-equivalent} if there exist probability measures $\probabilityQ^X$ and $\probabilityQ^Y$, respectively defined on $(\Omega_X,\sigmaAlgebra^X)$ and $(\Omega_Y,\sigmaAlgebra^Y)$, and equivalent to $\Prob^X$ and  $\Prob^Y$, such that
 	\begin{enumerate}
 		\item the forward prices of $X$ and $Y$ are respectively  $\probabilityQ^X$ and $\probabilityQ^Y$-martingales;
 		\item for all $s$, the push-forwards of the marginal laws of $X_s$ and of $Y_s$ with respect to $\pricingMeasure^X$ and $\pricingMeasure^Y$ are equivalent as probability measures on $\R$;
 		\item  for all $s<t$, and for almost every  $z$ in $\R$ with respect to the push-forward of the marginal law of $X_s$ (or equivalently of $Y_s$),  it holds
 		\begin{equation}\label{eq.NAPindifference}
 		\Expectation_{\probabilityQ^X} \left[ f(X_t) \vert X_s=z \right] 
 		= \Expectation_{\probabilityQ^{Y}} \left[ f(Y_t) \vert Y_s=z \right] ,
 		\end{equation}
 		for all bounded measurable $f$,
 		where $\Expectation_{\probabilityQ^X}$ and $\Expectation_{\probabilityQ^{Y}}$ denote respectively expectation under $\probabilityQ^X$ and under $\probabilityQ^Y$.
 	\end{enumerate}  
 \end{defi}
 
 \begin{example}[``NAP-equivalent geometric Brownian motions and market price of risk'']\label{ex.NAPindifferenceOfGeomBM}
 	Notoriously, if $\mu_i$, $i=1,2$, are two real numbers and $X_t^i$,  $i=1,2$, are price processes following the dynamics
 	\begin{equation}\label{eq.physicalDynamicsBS}
 	dX^i = \mu_i X^i dt + \volatilityCoefficient X^i dW^i,
 	\end{equation}
 	where $\volatilityCoefficient$ is a fixed volatility coefficient and $W^1$, $W^2$ are standard one-dimensional Brownian motions, then $X^1$ and $X^2$ induce indifferent pricing kernels. The change of measure that brings the physical dynamics \eqref{eq.physicalDynamicsBS} into their respective pricing dynamics
 	\begin{equation*}\label{eq.pricingDynamicsBS}
 	dX^i = r X^i dt + \volatilityCoefficient X^i dW^i
 	\end{equation*}
 	is described by 
 	\[
 	\frac{d\probabilityQ^i}{d\Prob^i} = \mathcal{E} \left(-\frac{\mu_i - r}{\volatilityCoefficient}W^i\right),
 	\qquad i=1,2,
 	\]
 	where $\mathcal{E}$ denotes It\^o exponential. The coefficient $(\mu_i - r)/\volatilityCoefficient$ is referred to as \emph{market price of risk}, and it takes the role of the volatility in the dynamics of the Radon-Nykodim derivative of the pricing measure with respect to the physical measure. 
 \end{example}
 
 If $(X,\probabilityQ^X)$ and $(Y,\probabilityQ^Y)$ are time-homogeneous Markov processes, then equation \eqref{eq.NAPindifference} is the equivalence of their transition semigroups. Indeed, under the assumption of time-homogeneous Markovianity, we could replace equation \eqref{eq.NAPindifference} with\footnote{
 	In this formula $p^{X}_t (z,dx)$ denotes the transition function associated with the time-homogeneous Markov process $(X,\probabilityQ^X)$, and $p^{Y}_t (z,dy)$ denotes the transition function associated with the time-homogeneous Markov process $(Y,\probabilityQ^Y)$.}
 \begin{equation*}
 \int f(x)p^{X}_{t-s} (z,dx) = 
 \int f(y)p^{Y}_{t-s} (z,dy), 
 \end{equation*}
 for all bounded measurable $f$ and almost all $z$ in the support of the push-forward of $X_s$.
 This will actually be the case for the discussion to follow, where under the pricing measure the processes $X$ and $Y$ will be geometric Brownian motions. Under the assumption of Markovianity, equation \eqref{eq.NAPindifference} implies that the entire laws of the processes $X$ and $Y$ are the same; in the case of Brownian motions, this forces $X$ and $Y$ to have the same drift and the same diffusion coefficient under the pricing measure, as seen in Example \ref{ex.NAPindifferenceOfGeomBM}. 
 
 Having the geometric Brownian motions in mind, we can establish the following Proposition. It will allow us to patch together processes defined on adjacent time intervals and retain the NAP-equivalence.

 \begin{prop}\label{prop.ConcatenateNAPindifferentProcesses}
 	Let $(X^1,Y^1)$ and $(X^2,Y^2)$ be two pairs of NAP-equivalent price processes. Assume that under the pricing measure, they are time-homogeneous Markov processes, with $X^1$ independent from $X^2$, $Y^1$ independent from $Y^2$, and $X^2_0 = Y^2_0 \equiv 1$. Consider the concatenations
 	\begin{equation*}
 	X_t = \begin{cases}
 	X^1_t & 0\leq t \leq T \\
 	X^1_T X^2_{t-T} & T<t\leq 2T
 	\end{cases}
 	\end{equation*}
 	and 
 	\begin{equation*}
 	Y_t = \begin{cases}
 	Y^1_t & 0\leq t \leq T \\
 	Y^1_T Y^2_{t-T} & T<t\leq 2T.
 	\end{cases}
 	\end{equation*}
 	Then $X$ and $Y$ are NAP-equivalent.
 \end{prop}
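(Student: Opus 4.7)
The plan is to construct the pricing measures for the two concatenations as product measures. Define $\probabilityQ^X := \probabilityQ^{X^1}\otimes\probabilityQ^{X^2}$ on $\Omega_{X^1}\times\Omega_{X^2}$ and $\probabilityQ^Y$ analogously. This builds in the independence hypotheses and keeps each component a time-homogeneous Markov process with its original transition kernel $K^i_u(z,\cdot)$; by the assumed NAP-equivalence of each pair, $K^i$ is shared between the $X$- and $Y$-side, and the initial condition $X^2_0 = Y^2_0 = 1$ forces the one-dimensional marginal of $X^2_u$ to coincide with that of $Y^2_u$ for every $u$.

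First I would verify that $e^{-rt}X_t$ is a $\probabilityQ^X$-martingale in the natural filtration of the pair $(X^1, X^2)$. On $[0,T]$ the property is inherited directly from $X^1$; for $T < t \leq 2T$ the factorisation $e^{-rt}X_t = e^{-rT}X^1_T\cdot e^{-r(t-T)}X^2_{t-T}$, independence of $X^1$ and $X^2$, and the forward-martingale property of $X^2$ give the result, while the initial condition $X^2_0 = 1$ also delivers the continuity at $t = T$ needed for this to assemble into a martingale on $[0, 2T]$. The computation for $Y$ is identical.

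Next I would check the transition-kernel identity $\Expectation_{\probabilityQ^X}[f(X_t)\vert X_s = z] = \Expectation_{\probabilityQ^Y}[f(Y_t)\vert Y_s = z]$ by splitting on the position of $s,t$ relative to $T$. The case $0\leq s < t \leq T$ is immediate from the NAP-equivalence of $(X^1, Y^1)$. For $0\leq s \leq T < t \leq 2T$, the Markov property of $X^1$ from $s$ to $T$, the independence of $X^1$ and $X^2$, and the Markov property of $X^2$ from $0$ to $t-T$ reduce both sides to the same double integral $\int K^1_{T-s}(z, dx)\int K^2_{t-T}(1, dw)\, f(xw)$, which coincides across $X$ and $Y$ thanks to the matching kernels and the shared initial value of $X^2, Y^2$.

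The main obstacle is the case $T < s < t \leq 2T$, where the concatenation fails to be Markov: the value $X_s = X^1_T X^2_{s-T}$ does not determine its two factors, and the conditional distribution of $X_t$ given $X_s = z$ is a genuine projection of the joint information. Here I would first condition on the full pair $(X^1_T, X^2_{s-T})$, use the Markov property of $X^2$ to rewrite the inner conditional expectation as $g(X^1_T, X^2_{s-T})$ with $g(x, y) = \int f(xw) K^2_{t-s}(y, dw)$, and then expand the outer conditional via Bayes' rule. The result is the integral of $g$ against the conditional law of $(X^1_T, X^2_{s-T})$ given that their product equals $z$; by independence this conditional law is a function only of the two marginals, both of which have already been matched to the marginals of $(Y^1_T, Y^2_{s-T})$. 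This identifies the two transition kernels and, together with the martingale check, establishes NAP-equivalence of $X$ and $Y$.
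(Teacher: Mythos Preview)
Your argument is correct, but it takes a different route from the paper. The paper does not verify the conditional-expectation identity case by case. Instead, it observes that under the common pricing measure $\probabilityQ$ the \emph{full laws} of $\log X$ and $\log Y$ on $[0,2\timeHorizon]$ coincide: on $[0,\timeHorizon]$ this follows from NAP-equivalence of $(X^1,Y^1)$ together with the Markov property and a common starting point; on $[\timeHorizon,2\timeHorizon]$ the law is determined by the (matched) law of $\log X^1_\timeHorizon$ and the (matched) transition semigroup of $\log X^2$, together with independence. Once the laws agree, the identity $\Expectation_{\probabilityQ}[f(X_t)\mid X_s=z]=\Expectation_{\probabilityQ}[f(Y_t)\mid Y_s=z]$ is automatic for all $s<t$, with no case distinction and no need to confront the non-Markovianity of the concatenation.

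Your approach has the merit of making explicit the subtle point that, for $\timeHorizon<s<t$, the concatenation $X$ is not Markov with respect to $\sigma(X_s)$: conditioning on $X_s=X^1_\timeHorizon X^2_{s-\timeHorizon}$ does not separate the factors, and you correctly resolve this via a disintegration argument using the product structure of the joint law of $(X^1_\timeHorizon,X^2_{s-\timeHorizon})$. The paper's law-level argument sidesteps this entirely, at the cost of being somewhat terse about why the law on $[\timeHorizon,2\timeHorizon]$ is determined by the ingredients listed. The martingale verification is essentially the same in both.
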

 
 \begin{proof}
 	Let $(\Omega,\sigmaAlgebra,\probabilityQ)$ be a probability space that accomodates the processes $(X^i,\probabilityQ^{X^{i}})$ and $(Y^i,\probabilityQ^{Y^{i}})$, $i=1,2$. Firstly, we need to show that for all bounded measurable $f$ and all $0\leq s\leq t \leq 2T$ it holds 
 	\begin{equation}\label{eq.checkNAPindifference}
 	\Expectation [f(X_t)\vert X_s=z]
 	= \Expectation [f(Y_t)\vert Y_s=z],
 	\end{equation}
 	where expectations are computed with respect to $\probabilityQ$. This follows from the fact that, under $\probabilityQ$, the law of $\lbrace \log X_t, 0\leq t \leq 2\timeHorizon\rbrace$ is the same as the law of $\lbrace \log Y_t, 0\leq t \leq 2\timeHorizon\rbrace$. To see this, observe that $\log X $ and $\log Y$ are Markov and that: 1) the laws of $\lbrace \log X_t, 0\leq t \leq \timeHorizon\rbrace$ and of $\lbrace \log Y_t, 0\leq t \leq \timeHorizon\rbrace$ are the same by assumption; 2) the laws of $\lbrace \log X_t, \timeHorizon\leq t \leq 2\timeHorizon\rbrace$ and of $\lbrace \log Y_t, \timeHorizon\leq t \leq 2\timeHorizon\rbrace$ are the same, since they both coincide with the unique law of the Markov process described by the transition semigroup of $\log X^2$ and by the initial distribution $\log X^1_\timeHorizon$. 
 	
 	Secondly, we need to show that the forward prices are $\probabilityQ$-martingales. Again this is clear up to time $t=T$. If $s\geq T$, then 
 	\begin{equation*}
 	\begin{split}
 	\Expectation [e^{-rt}X_t \vert e^{-rs}X_s] =&
 	\Expectation 
 	[\Expectation \Big[e^{-rt}X_t \Big\vert e^{-rT}X^1_T, e^{-r(t-T)}X^2_{s-T}\Big] 
 	\vert e^{-rs}X_s] \\
 	=&\Expectation [ e^{-rT}X^1_T e^{-r(s-T)}X^2_{s-T}\vert e^{-rs}X_s]\\
 	=& e^{-rs}X_s.
 	\end{split}
 	\end{equation*}
 	Finally, if $s<T<t$ then 
 	\begin{equation*}
 	\begin{split}
 	\Expectation [e^{-rt}X_t \vert e^{-rs}X_s] =&
 	\Expectation[e^{-rT}X^1_T \vert e^{-rs}X^1_s]\Expectation[e^{-r(t-T)}X^2_{t-T}] \\
 	= & e^{-rs}X_s.
 	\end{split}
 	\end{equation*}
 	The martingality of $e^{-rt}Y_t$ is either proved analogously, or deduced from that of $e^{-rt}X_t$ and the equivalence in law.  
 \end{proof}
 
 A relaxed version of the concept in Definition \ref{def.NAPindifference} brings to the following
 
 \begin{defi}[``Weak NAP-equivalence'']\label{def.weakNAPindifference}
 	Let $X$ and $Y$ be positive semimartingales, interpreted as price dynamics. We say that $X$ and $Y$  are \emph{weakly NAP-equivalent} if there exist sequences of positive semimartingales  $X^n$ and $Y^n$, $n\geq 1$,  such that 
 	\begin{enumerate}
 		\item for all $t$, the log-prices $\log X^n_t$ and $\log Y^n_t$ converge respectively to $\log X_t$ and $\log Y_t$ in $L\squared(\Prob)$ as $n\uparrow \infty$;
 		\item for all $s$ and $t$, the joint law of $(X^n_s,X^n_t)$ converges to the joint law of $(X_s,X_t)$ as $n\uparrow \infty$, and the  joint law of $(Y^n_s,Y^n_t)$ converges to the joint law of $(Y_s,Y_t)$;
 		\item for every $n$, the processes $X^n$ and $Y^n$ are NAP-equivalent. 
 	\end{enumerate}
 \end{defi}
The sequences $X^n$ and $Y^n$, $n\geq 1$, in the definition above are referred to as reducing sequences for the weakly NAP-equivalent pair $(X,Y)$.

\begin{remark}\label{remark.weakNAPreq1and2_logNormal}
	Consider the log-normal case, where the processes $\log X$, $\log Y$, $\log X^n$ and  $\log Y^n$ from Definition \ref{def.weakNAPindifference} are all Gaussian processes. Then, requirement 1 in the definition actually implies requirement 2. Indeed, the joint law of $(X^n_s,X^n_t)$ converges to the joint law of $(X_s,X_t)$ if and only if the mean and the covariance matrix of the Gaussian vector $(\log X^n_s,\log X^n_t)$ converge to the mean and the covariance matrix of the Gaussian vector  $(\log X_s,\log X_t)$. The fact that $\Expectation [\log X^n_s]$,  $\Expectation [\log X^n_t]$,  $\Expectation [(\log X^n_s)\squared]$ and  $\Expectation [(\log X^n_t)\squared]$ converge respectively to $\Expectation [\log X_s]$,  $\Expectation [\log X_t]$,  $\Expectation [(\log X_s)\squared]$ and  $\Expectation [(\log X_t)\squared]$ follows immediately from the convergence $\log X_u^n \rightarrow \log X_u$ in $\Ltwo (\Prob)$ for all $u$. Moreover, from the same convergence in $\Ltwo (\Prob)$ it follows that the product $ \log X_s^n \log X^n_t $ converges to the product $ \log X_s \log X_t $ in $L^{1} (\Prob)$, whence the covariance $\text{cov}(\log X^n_s, \log X^n_t)$ convergences to the covariance  $\text{cov}(\log X_s, \log X_t)$.
	
	The case for the convergence of $Y^n$ to $Y$ is analogous. 
\end{remark}

The possibility to concatenate NAP-equivalent processes extends immediately to weakly NAP-equivalent processes. 

\begin{corol}\label{corol.concatenateWeaklyNAPindifferentProcesses}
	Let $(X^1,Y^1)$ and $(X^2,Y^2)$ be two pairs of weakly NAP-equivalent processes. Let $(X^{1,n},Y^{1,n})_n$ and $(X^{2,n},Y^{2,n})_n$, $n\geq 1$, be their reducing sequences and assume that for every $n$ the NAP-equivalent processes $(X^{1,n},Y^{1,n})$ and $(X^{2,n},Y^{2,n})$ satisfy the assumptions of Proposition \ref{prop.ConcatenateNAPindifferentProcesses}. Then the concatenations 
	\begin{equation*}
	X_t = \begin{cases}
	X^1_t & 0\leq t \leq T \\
	X^1_T X^2_{t-T} & T<t\leq 2T
	\end{cases}
	\end{equation*}
	and 
	\begin{equation*}
	Y_t = \begin{cases}
	Y^1_t & 0\leq t \leq T \\
	Y^1_T Y^2_{t-T} & T<t\leq 2T.
	\end{cases}
	\end{equation*}
	are weakly NAP-equivalent.
\end{corol}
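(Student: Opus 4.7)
The plan is to concatenate the reducing sequences of $(X^1,Y^1)$ and of $(X^2,Y^2)$ into candidate reducing sequences for $(X,Y)$, and then to verify the two conditions of Definition \ref{def.weakNAPindifference}. Concretely, for every $n\geq 1$ I would set
\begin{equation*}
X^n_t := \begin{cases} X^{1,n}_t, & 0\leq t\leq \timeHorizon, \\ X^{1,n}_\timeHorizon X^{2,n}_{t-\timeHorizon}, & \timeHorizon<t\leq 2\timeHorizon, \end{cases}
\qquad
Y^n_t := \begin{cases} Y^{1,n}_t, & 0\leq t\leq \timeHorizon, \\ Y^{1,n}_\timeHorizon Y^{2,n}_{t-\timeHorizon}, & \timeHorizon<t\leq 2\timeHorizon, \end{cases}
\end{equation*}
so that $X^n$ and $Y^n$ are themselves of the form produced by Proposition \ref{prop.ConcatenateNAPindifferentProcesses}.

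By the standing hypothesis, for every $n$ the pairs $(X^{1,n},Y^{1,n})$ and $(X^{2,n},Y^{2,n})$ satisfy the assumptions of Proposition \ref{prop.ConcatenateNAPindifferentProcesses}, so that proposition applies directly and yields the NAP-equivalence of $X^n$ and $Y^n$ for each $n$. This takes care of item 2 of Definition \ref{def.weakNAPindifference}. For item 1, i.e. the $L\squared(\Prob)$-convergence of the log-prices, the range $0\leq t\leq \timeHorizon$ is immediate from the hypothesis on the reducing sequences. For $\timeHorizon < t \leq 2\timeHorizon$ I would use the telescoping identity
\begin{equation*}
\log X^n_t - \log X_t = \bigl(\log X^{1,n}_\timeHorizon - \log X^1_\timeHorizon\bigr) + \bigl(\log X^{2,n}_{t-\timeHorizon} - \log X^2_{t-\timeHorizon}\bigr),
\end{equation*}
together with the triangle inequality in $L\squared(\Prob)$, which reduces the task to the $L\squared(\Prob)$-convergence of each summand; both are part of the defining property of the two given reducing sequences. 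The argument for $Y^n_t \to Y_t$ is identical.

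The one genuine subtlety, and the step that requires the most care, is the construction of a single probability space simultaneously accommodating all the sequences $\{(X^{i,n},Y^{i,n})\}_{i=1,2,\,n\geq 1}$ and their limits $(X^i,Y^i)_{i=1,2}$, while preserving, for every $n$, the independence of $X^{1,n}$ from $X^{2,n}$ (and of $Y^{1,n}$ from $Y^{2,n}$) needed to invoke Proposition \ref{prop.ConcatenateNAPindifferentProcesses}. This is handled by the same product-space construction used in that proposition's proof, performed once and for all for the countable family of processes involved, and introduces no new idea beyond careful bookkeeping.
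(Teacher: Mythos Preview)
Your proposal is correct and is exactly the argument the paper has in mind: the paper states the corollary without proof, remarking only that the concatenation of NAP-equivalent processes ``extends immediately to weakly NAP-equivalent processes,'' and your construction of the concatenated reducing sequences $X^n$, $Y^n$ together with the application of Proposition~\ref{prop.ConcatenateNAPindifferentProcesses} and the telescoping of $\log X^n_t - \log X_t$ is precisely what makes that immediacy explicit.
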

 
 \vspace{0.5cm}
 
 Having introduced the concepts above, we prepare the construction of  ``alternative  dynamics''.

 Let $\probabilitySpace$ be a probability space and let $\brownianMotion$ be a Brownian motion on it. We consider the probability measure $\Prob$ as fixed and we refer to it as \emph{physical measure}. Let $(\sigmaAlgebra_t)$ be the minimal $\Prob$-completed right-continuous filtration generated by $\brownianMotion$.
 We consider processes defined in the time window $\timeWindow$. Given $\tzero$ in $[0,\timeHorizon[$, the space $L\squared (\Prob \otimes \frac{dt}{\timeHorizon - \tzero})$ $=\Ltwo (\Omega \times [\tzero,\timeHorizon],$ $ \sigmaAlgebra \otimes \mathcal{B}[\tzero,\timeHorizon],$ $\Prob \otimes \frac{dt}{\timeHorizon - \tzero})$ is the space of square integrable random variables on $\Omega \times [\tzero,\timeHorizon]$ with respect to the product measure $\Prob \otimes \frac{dt}{\timeHorizon - \tzero}$, where $dt/(\timeHorizon-\tzero)$ is the normalised Lebesgue measure on $[\tzero,\timeHorizon]$. We use the symbol $\fint dt $ for the integral with respect to such normalised Lebesgue measure. For $\xi$ in $\Ltwo (\Prob\otimes dt/(\timeHorizon - \tzero))$ we set 
 \begin{equation}\label{eq.defTripleNorm}
 \tripleNorm{\xi} := 
 \fint_{\tzero}^{\timeHorizon} \norm[\xi(t)]_{\Ltwo(\Prob)}dt
 \end{equation}
 and we observe 
 \[
 \tripleNorm{\xi} \leq \norm[\xi]_{\Ltwo (\Prob\otimes dt/(\timeHorizon - \tzero))}.
 \]
 Let $\tripleNormClosure$ be the closure of $\Ltwo (\Prob\otimes dt/(\timeHorizon - \tzero))$ with respect to $\tripleNorm{\cdot}$. 
 
 Let $\hurstExponent$ be a strictly positive real number. For $s,t > \tzero$ we introduce the functions
 \begin{equation}\label{eq.defVolterraKernel}
 \volterraKernel (\tzero, s, t) := \left(\frac{s-\tzero}{t-\tzero}\right)^{\hurstExponent-\half}
 \end{equation}
 and
 \begin{equation}\label{eq.defCovarianceKernel}
 \covarianceKernel (\tzero, s, t) := \frac{(t-\tzero)^{\hurstExponent+\half}}{(s-\tzero)^{\hurstExponent-\half}}.
 \end{equation}
 
 We collect  few facts about $\volterraKernel$ and $\covarianceKernel$ in the following two lemmas. The proof are straightforward and omitted. 
 
 \begin{lemma}\label{lemma.propertiesOfVolterraKernel}
 	Consider the function $\volterraKernel$ in equation \eqref{eq.defVolterraKernel}. Then,
 	\begin{enumerate}
 		\item the real-valued function $s\mapsto \volterraKernel (\tzero, s, t)$ is square integrable over the interval $]\tzero,t]$ with 
 		\begin{equation*}
 		\int_{\tzero}^{s}\volterraKernel(\tzero,u,s)\volterraKernel(\tzero,u,t)du = \covarianceKernel (\tzero,t,s)/2\hurstExponent,
 		\end{equation*}
 		for any $\tzero<s\leq t$;
 		\item the real-valued function $(s,t)\mapsto \volterraKernel(\tzero, s, t)$ is square integrable over the simplex $\lbrace \tzero\leq s\leq t\leq \timeHorizon\rbrace$ with 
 		\begin{equation*}
 		\int_{\tzero}^{\timeHorizon} dt \int_{\tzero}^{t}ds \volterraKernel\squared (\tzero,s,t) 
 		= \covarianceKernel\squared (\tzero,\timeHorizon,\timeHorizon)/4\hurstExponent;
 		\end{equation*}
 		\item for all $s\leq t$ it holds $\covarianceKernel(\tzero,t,s)\leq \covarianceKernel (\tzero,t,t)$. 
 	\end{enumerate}
 \end{lemma}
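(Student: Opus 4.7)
All three assertions reduce to direct computations from the definitions of $\volterraKernel$ and $\covarianceKernel$. The organising observation is that $\volterraKernel(\tzero,u,s)\,\volterraKernel(\tzero,u,t)$ factors as a power of $(u-\tzero)$ times a prefactor that does not depend on $u$; the relevant exponent is $2\hurstExponent-1$, and the assumption $\hurstExponent>0$ is precisely what makes $(u-\tzero)^{2\hurstExponent-1}$ integrable near $u=\tzero$. My plan is therefore to unwind the two definitions, integrate via the power rule, and collect the resulting powers of $(s-\tzero)$ and $(t-\tzero)$ to match the right-hand side.

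For claim (1) I would expand
\[
\volterraKernel(\tzero,u,s)\,\volterraKernel(\tzero,u,t)
=\frac{(u-\tzero)^{2\hurstExponent-1}}{(s-\tzero)^{\hurstExponent-\half}\,(t-\tzero)^{\hurstExponent-\half}}
\]
and integrate $u$ over $(\tzero,s]$. The antiderivative in the numerator evaluates to $(s-\tzero)^{2\hurstExponent}/(2\hurstExponent)$ at the upper endpoint and vanishes at $\tzero$ precisely because $\hurstExponent>0$. After cancellation one lands on $(s-\tzero)^{\hurstExponent+\half}/\bigl(2\hurstExponent\,(t-\tzero)^{\hurstExponent-\half}\bigr)$, which is exactly $\covarianceKernel(\tzero,t,s)/(2\hurstExponent)$ by the definition of $\covarianceKernel$. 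Specialising this identity to $s=t$ simultaneously certifies square-integrability of $s\mapsto\volterraKernel(\tzero,s,t)$ on $(\tzero,t]$, with total mass $(t-\tzero)/(2\hurstExponent)$.

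Claim (2) then follows by integrating this $s=t$ specialisation once more in $t$ over $[\tzero,\timeHorizon]$, which produces $(\timeHorizon-\tzero)^{2}/(4\hurstExponent)$; this coincides with $\covarianceKernel\squared(\tzero,\timeHorizon,\timeHorizon)/(4\hurstExponent)$ because $\covarianceKernel(\tzero,\timeHorizon,\timeHorizon)=\timeHorizon-\tzero$. Finally, claim (3) reduces to the monotonicity of $s\mapsto(s-\tzero)^{\hurstExponent+\half}$, which is valid since $\hurstExponent+\half>0$: writing $\covarianceKernel(\tzero,t,s)=(s-\tzero)^{\hurstExponent+\half}/(t-\tzero)^{\hurstExponent-\half}$ and using $s\leq t$ yields the inequality after dividing through by the positive factor $(t-\tzero)^{\hurstExponent-\half}$, which is constant in $s$. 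I do not anticipate any genuine obstacle; the only detail worth singling out is the role of $\hurstExponent>0$ in making the left-endpoint contribution of the power-rule antiderivative vanish.
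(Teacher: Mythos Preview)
Your proof is correct. The paper states this lemma without proof, treating it as an elementary computation left to the reader; your argument is precisely the direct verification one would expect, and there is nothing to compare against.
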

 
 \begin{lemma}\label{lemma.propertiesOfCovarianceKernel}
 	Consider the reciprocal $\covarianceKernel\inverse$ of $\covarianceKernel$, defined as 
 	\[
 	\covarianceKernel\inverse (\tzero,s,t) = \frac{(s-\tzero)^{\hurstExponent-\half}}{(t-\tzero)^{\hurstExponent+\half}}.
 	\]
 	Then,
 	\begin{enumerate}
 		\item the real-valued function $s\mapsto \covarianceKernel\inverse (\tzero,s,t)$ is square integrable over the interval $]\tzero,t]$ with 
 		\begin{equation*}
 		\int_{\tzero}^{s}\covarianceKernel\inverse (\tzero,u,s)\covarianceKernel\inverse (\tzero,u,t)du
 		=\covarianceKernel\inverse(\tzero,s,t)/2\hurstExponent,
 		\end{equation*}
 		for all $\tzero<s\leq t$;
 		\item the real-valued function $(s,t)\mapsto \covarianceKernel\inverse (\tzero,s,t)$ is in $L^{q}(t<s\leq t\leq T)$ for all $1\leq q <2$, but not square integrable over the simplex $\lbrace \tzero\leq s\leq t\leq \timeHorizon\rbrace$, with 
 		\begin{equation*}
 		\int_{\tzero}^{\timeHorizon} dt \int_{\tzero}^{t}ds \covarianceKernel^{-q} (\tzero,s,t)
 		=\covarianceKernel^{2-q} (\tzero,T,T)/(2-q)(q\hurstExponent +1-q/2);
 		\end{equation*}
 		\item for all $0<\epsilon_1 \leq \epsilon_2$ it holds
 		\begin{equation*}
 		\covarianceKernel\inverse (\tzero, t +\epsilon_1, t+\epsilon_2)
 		\leq \covarianceKernel\inverse (\tzero, t , t).
 		\end{equation*}
 	\end{enumerate}
 \end{lemma}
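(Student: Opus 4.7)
The plan is to prove all three assertions by essentially direct substitution of the explicit formula $\covarianceKernel\inverse(\tzero,s,t) = (s-\tzero)^{\hurstExponent-\half}/(t-\tzero)^{\hurstExponent+\half}$ and reduction to elementary power integrals; the three parts are largely independent.

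For (1), I would substitute the explicit form of $\covarianceKernel\inverse$ and pull the $s$- and $t$-dependent factors outside the $u$-integral, leaving
\[
\int_{\tzero}^{s}\covarianceKernel\inverse(\tzero,u,s)\covarianceKernel\inverse(\tzero,u,t)\,du
= \frac{1}{(s-\tzero)^{\hurstExponent+\half}(t-\tzero)^{\hurstExponent+\half}}\int_{\tzero}^{s}(u-\tzero)^{2\hurstExponent-1}\,du,
\]
and then evaluate the remaining integral as $(s-\tzero)^{2\hurstExponent}/(2\hurstExponent)$, using that $\hurstExponent>0$ so the power is integrable at $\tzero$. Collecting the exponents of $(s-\tzero)$ gives exactly $\covarianceKernel\inverse(\tzero,s,t)/(2\hurstExponent)$.

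For (2), I would compute the iterated integral in the same way. The inner integral is $\int_{\tzero}^{t}(s-\tzero)^{q(\hurstExponent-\half)}\,ds$, which is finite iff $q(\hurstExponent-\half)>-1$; since $\hurstExponent>0$, the bound $1/(\half-\hurstExponent)>2$ holds whenever $\hurstExponent<\half$, so the condition is satisfied for every $q<2$ and every $\hurstExponent>0$. This inner integral evaluates to $(t-\tzero)^{q(\hurstExponent-\half)+1}/(q\hurstExponent+1-q/2)$, and the outer integral becomes $\int_{\tzero}^{\timeHorizon}(t-\tzero)^{1-q}\,dt/(q\hurstExponent+1-q/2)$, which is finite exactly because $q<2$ (so $1-q>-1$). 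The resulting value is $(\timeHorizon-\tzero)^{2-q}/[(2-q)(q\hurstExponent+1-q/2)]$; since $\covarianceKernel(\tzero,\timeHorizon,\timeHorizon)=\timeHorizon-\tzero$, this matches the stated formula. For the non-square-integrability at $q=2$, the outer integrand becomes $1/[2\hurstExponent(t-\tzero)]$, giving a logarithmic divergence at $\tzero$.

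For (3), I plan a short unified argument avoiding a case split on the sign of $\hurstExponent-\half$. Setting $a:=t-\tzero>0$, the inequality is equivalent to $a\,(a+\epsilon_1)^{\hurstExponent-\half}\leq (a+\epsilon_2)^{\hurstExponent+\half}$. I bound the left side by $(a+\epsilon_1)^{\hurstExponent+\half}$ using $a\leq a+\epsilon_1$ (multiplying through by the positive quantity $(a+\epsilon_1)^{\hurstExponent-\half}$), and then by $(a+\epsilon_2)^{\hurstExponent+\half}$ using $\epsilon_1\leq \epsilon_2$ and $\hurstExponent+\half>0$. The latter is the only place positivity of $\hurstExponent+\half$ is used, and no splitting by sign of $\hurstExponent-\half$ is needed.

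None of the three steps presents a real obstacle; the most delicate point is merely the bookkeeping in (2) to verify that the exponent condition $q(\hurstExponent-\half)>-1$ holds throughout the stated range $1\leq q<2$ for every $\hurstExponent>0$, and to check that the boundary case $q=2$ indeed fails by a logarithmic divergence rather than a power-law one.
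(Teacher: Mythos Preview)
Your proposal is correct. The paper states this lemma without proof, presumably regarding all three items as elementary verifications; your direct computations fill this in accurately, and your unified argument for (3) avoiding a case split on the sign of $\hurstExponent-\tfrac12$ is a clean way to handle it.
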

 
 The functions $\volterraKernel$ and $\covarianceKernel$ are used to describe the Gaussian processes $\volterraProcess$ and $\psi$ introduced in the following two lemmata. 
 
 \begin{lemma}\label{lemma.DefOfPsi}
 	The Volterra-type formula
 	\begin{equation*}
 	\psi(t,\tzero,\hurstExponent) = \int_{t_0}^{t}\covarianceKernel\inverse (\tzero,s,t)dW_s
 	\end{equation*}
 	defines a centred Gaussian $(\sigmaAlgebra_t)$-adapted process on $]\tzero,t]$ with covariance structure 
 	\begin{equation*}
 	\Expectation \left[ \psi (s,\tzero,\hurstExponent)\psi(t,\tzero,\hurstExponent) \right]
 	= \covarianceKernel\inverse (\tzero,s,t)/2\hurstExponent, \qquad s\leq t. 
 	\end{equation*}
 	Setting $\psi(\tzero,\tzero,\hurstExponent) :=0$, the adapted process $\lbrace \psi (t,\tzero,\hurstExponent): \, \tzero\leq t \leq \timeHorizon\rbrace$ is a well defined element of $\tripleNormClosure$ and it is approximated with respect to $\tripleNorm{\cdot}$ by the sequence
 	\begin{equation}\label{eq.adaptedApproximationOfRoughVolterraProcess}
 	\psi_{\epsilon}(t):=\int_{\tzero}^{t}\covarianceKernel\inverse (\tzero,s+\epsilon,t+\epsilon) dW_s
 	\end{equation}
 	of elements of $\Ltwo(\Prob\otimes dt/(\timeHorizon-\tzero))$.
 \end{lemma}
 \begin{remark}
 	For every $\epsilon>0$ the process $\psi_{\epsilon}$ of equation \eqref{eq.adaptedApproximationOfRoughVolterraProcess} is a semimartingale adapted to the filtration $(\sigmaAlgebra_t)$ of the Brownian motion $\brownianMotion$.
 \end{remark}
\begin{proof}[Proof of Lemma \ref{lemma.DefOfPsi}]
	Consider the function $g$ in $C^{1,2} (]\tzero,\timeHorizon]\times \R)$ defined as 
	\[
	g(t,x):= \left(t-\tzero\right)^{-\half - \hurstExponent}  x.
	\]
	Let $\epsilon>0$. Consider the centred Gaussian martingale
	\begin{equation}\label{eq.auxiliaryCentredGaussianMArtingale}
	\xi_\epsilon (t) := \int_{\tzero}^{t} (u+\epsilon - \tzero)^{\hurstExponent-\half}dW_u,
	\end{equation}
	and the process
	\begin{equation}\label{eq.nonAdaptedApproximationOfRoughVolterraProcess}
	\begin{split}
	\tilde{\psi}_{\epsilon}(t):=& g(t+\epsilon,\xi_{\epsilon}(t+\epsilon)) \\
	=& \int_{\tzero}^{t+\epsilon}\covarianceKernel\inverse(\tzero,s+\epsilon,t+\epsilon)d\brownianMotion_s.
	\end{split}
	\end{equation}
	Let $0<\epsilon_1\leq \epsilon_2$. We can estimate
	\begin{equation*}
	\begin{split}
	\int_{\tzero}^{\timeHorizon} \lVert\tilde{\psi}_{\epsilon_1}(t) -& \tilde{\psi}_{\epsilon_2}(t)\rVert_{\Ltwo(\Prob)} dt \\
	= & \frac{1}{2\hurstExponent} \int_{\tzero}^{\timeHorizon} \vert \covarianceKernel\inverse (\tzero,t+\epsilon_1,t+\epsilon_1)
	+ \covarianceKernel\inverse (\tzero,t+\epsilon_2,t+\epsilon_2)\\
	&-2  \covarianceKernel\inverse (\tzero,t+\epsilon_1,t+\epsilon_2) \vert ^{1/2} dt \\
	\leq & \frac{1}{2\hurstExponent} \int_{\tzero}^{\timeHorizon} \vert\covarianceKernel\inverse (\tzero,t+\epsilon_1,t+\epsilon_1)
	-\covarianceKernel\inverse (\tzero,t+\epsilon_1,t+\epsilon_2) \vert ^{1/2} dt\\
	& + \frac{1}{2\hurstExponent} \int_{\tzero}^{\timeHorizon} \vert\covarianceKernel\inverse (\tzero,t+\epsilon_2,t+\epsilon_2)
	-\covarianceKernel\inverse (\tzero,t+\epsilon_1,t+\epsilon_2) \vert ^{1/2} dt\\
	\longrightarrow&  0, \qquad \text{ as } \epsilon_1,\epsilon_2 \downarrow 0.
	\end{split}
	\end{equation*}
	We have used dominated convergence with domination 
	\begin{equation*}
	\begin{split}
	\vert\covarianceKernel\inverse (\tzero,t+\epsilon_1,t+\epsilon_1)
	-\covarianceKernel\inverse   (\tzero,t+\epsilon_1,t+\epsilon_2)& \vert ^{1/2} \\
	\leq& 2 \covarianceKernel^{-\half} (\tzero,t,t).
	\end{split}
	\end{equation*}
	Therefore, $\lbrace\psi(t,\tzero,\hurstExponent): \, \tzero\leq t \leq \timeHorizon\rbrace$ exists as limit in $\tripleNormClosure$ and defines a  Gaussian process on $]\tzero,\timeHorizon]$ with the claimed covariance structure. Finally,
	\begin{equation*}
	\tilde{\psi}_{\epsilon}(t) - \psi_{\epsilon}(t)
	= \int_{t}^{t+\epsilon}\covarianceKernel\inverse (\tzero,s+\epsilon,t+\epsilon)d\brownianMotion_s
	\end{equation*}
	and 
	\begin{equation*}
	\Expectation \left(\tilde{\psi}_{\epsilon}(t) - \psi_{\epsilon}(t)\right)\squared =
	(t+\epsilon-\tzero)^{-2\hurstExponent-1}\int_{t}^{t+\epsilon}(s+\epsilon-\tzero)^{2\hurstExponent-1}ds.
	\end{equation*}
	In both cases $0<\hurstExponent<1/2$ and $\hurstExponent\geq 1/2$, we have 
	\begin{equation*}
	\Expectation \left(\tilde{\psi}_{\epsilon}(t) - \psi_{\epsilon}(t)\right)\squared 
	\lesssim \epsilon(t+\epsilon-\tzero)^{-2},
	\end{equation*}
	so that 
	\begin{equation*}
	\begin{split}
	\int_{\tzero}^{\timeHorizon}
	\lVert
	\tilde{\psi}_{\epsilon}(t) - \psi_{\epsilon}(t)
	\rVert_{\Ltwo(\Prob)}
	dt
	\lesssim& \quad
	\epsilon^{1/2}\int_{\tzero}^{\timeHorizon}
	(t+\epsilon-\tzero)\inverse dt \\
	=&\quad \epsilon^{1/2} (\log(\timeHorizon+\epsilon-\tzero)
	-\log\epsilon).
	\end{split}
	\end{equation*}
	The right hand side goes to zero as $\epsilon\downarrow 0$. 
\end{proof}

 \begin{lemma}\label{lemma.DefOfVolterraProcess}
 	The Volterra-type formula
 	\begin{equation*}
 	\volterraProcess(t,\tzero,\hurstExponent) = \int_{\tzero}^{t} \volterraKernel(\tzero,s,t)dW_s
 	\end{equation*}
 	defines a centred Gaussian $(\sigmaAlgebra_t)$-adapted process on $[\tzero,\timeHorizon]$ with covariance structure 
 	\begin{equation*}
 	\Expectation \left[ \volterraProcess (s,\tzero,\hurstExponent)\volterraProcess(t,\tzero,\hurstExponent) \right]
 	= \covarianceKernel(\tzero,t,s)/2\hurstExponent, \qquad s\leq t.
 	\end{equation*}
 	Moreover, $\volterraProcess$ is a semimartingale  and for all $\tzero\leq t \leq \timeHorizon$
 	\begin{equation}\label{eq.semimartDecompositionOfVolterraProcess}
 	\volterraProcess(t,\tzero,\hurstExponent)
 	= W_t - W_{\tzero} + (\half - \hurstExponent) \int_{\tzero}^{t} \psi(s,\tzero,\hurstExponent)ds, 
 	\end{equation}
 	where equality is meant in $\Ltwo(\Prob)$ and $\psi$ was defined in Lemma \ref{lemma.DefOfPsi}.
 \end{lemma}
 
 \begin{proof}
 	Point 2 of Lemma \ref{lemma.propertiesOfVolterraKernel} yields the first claim. We establish the second claim. 
 	Consider the function $f$ in $C^{1,2} (]\tzero,\timeHorizon]\times \R)$ defined as 
 	\[
 	f(t,x):= \left(t-\tzero\right)^{\half - \hurstExponent}  x.
 	\]
 	Let $\epsilon>0$. Consider the processes
 	\begin{equation*}
 	\begin{split}
 	\volterraProcess_\epsilon (t) := &
 	f \Big(t+\epsilon, \xi_\epsilon (t)\Big),
 	\end{split}
 	\end{equation*}
 	where $\xi_\epsilon$ was defined in equation \eqref{eq.auxiliaryCentredGaussianMArtingale}. Since $f$ is twice continuously differentiable on $[\tzero+\epsilon,\timeHorizon]\times\R$, by It\^o's lemma $\volterraProcess_{\epsilon}$ is a semimartingale in $\Ltwo(\Prob\otimes dt/(\timeHorizon-\tzero))$ and 
 	\begin{equation*}
 	\begin{split}
 	\volterraProcess_{\epsilon}(t) = &
 	f(\tzero+\epsilon,\xi_{\epsilon}(\tzero)) 
 	+ \int_{\tzero}^{t}\partial_{x}f(s+\epsilon,\xi_{\epsilon}(s))d\xi_{\epsilon}(s)\\
 	& \qquad+\int_{\tzero}^{t}\partial_t f (s+\epsilon,\xi_{\epsilon}(s)) ds \\
 	=& \brownianMotion_t + \brownianMotion_{\tzero} 
 	+\big(\half - \hurstExponent\big)
 	\int_{\tzero}^{t} \psi_{\epsilon}(s)ds +o_\epsilon (1),
 	\end{split}
 	\end{equation*}
 	where $\psi_{\epsilon}$ was defined in equation \eqref{eq.adaptedApproximationOfRoughVolterraProcess} and $o_\epsilon(1)$ is going to $0$ in $\Ltwo(\Prob)$ as $\epsilon\downarrow 0$. By Minkowski integral inequality, we have that 
 	\begin{equation*}
 	\Big\lVert
 	\int_{\tzero}^{t}
 	\psi_{\epsilon}(s)-\psi(s) \quad ds
 	\Big\rVert_{\Ltwo(\Prob)}
 	\leq 
 	\int_{\tzero}^{t}
 	\lVert
 	\psi_{\epsilon}(s)-\psi(s) \rVert_{\Ltwo(\Prob)}
 	ds.
 	\end{equation*}
 	Therefore, letting $\epsilon\downarrow 0$ yields equation \eqref{eq.semimartDecompositionOfVolterraProcess}.
 \end{proof}
 Consider, for $\epsilon>0$, the process
 \begin{equation}\label{eq.DefVolterraProcessEpsilon}
 \volterraProcess_\epsilon (t,\tzero,\hurstExponent) :=
 \brownianMotion_t - \brownianMotion_{\tzero}
  + (\half - \hurstExponent) \int_{\tzero}^t \psi_\epsilon (s) ds, \qquad \tzero\leq t \leq T,
 \end{equation}
 where $\psi_\epsilon$ was defined in equation \eqref{eq.adaptedApproximationOfRoughVolterraProcess}. The proof above shows that $\volterraProcess_\epsilon(t,\tzero,\hurstExponent)$ converges to $\volterraProcess(t,\tzero,\hurstExponent)$ in $\Ltwo(\Prob)$. Since $\Variance \psi_\epsilon (t) \leq  \covarianceKernel\inverse (\tzero, t+\epsilon, t+\epsilon)/2\hurstExponent$, we have that 
 \[
 \sup_{\tzero\leq t \leq \timeHorizon} \Variance \psi_\epsilon (t) \leq  \epsilon\inverse / 2\hurstExponent,
 \]
 and for $0<\eta<2\hurstExponent \epsilon$ 
 \begin{equation*}
 \sup_{\tzero\leq t \leq \timeHorizon} \Expectation \exp \Big(\eta \psi_{\epsilon}\squared (t)\Big) < \infty.
 \end{equation*}
 This is a Novikov-type condition, see  \cite[Chapter VIII, (1.40) Exercise]{RY99con}. Therefore, for all $\epsilon>0$ there exists a probability $\Prob^{\epsilon}$, equivalent to the physical measure $\Prob$, such that $\volterraProcess_\epsilon(\cdot,\tzero,\hurstExponent)$ is a Brownian motion under $\Prob^{\epsilon}$. More precisely, $\Prob^{\epsilon}$ is given by the formula 
 \begin{multline*}
 \frac{d\Prob^{\epsilon}}{d\Prob}\restrictedto{\sigmaAlgebra_t} = 
 \exp \left( (\hurstExponent - \half) \int_{\tzero}^{t} \psi_{\epsilon} (s) dW_s
 - \half  (\hurstExponent - \half)\squared \int_{\tzero}^{t} \psi_{\epsilon}\squared (s) ds
 \right), \\
 \tzero \leq t \leq \timeHorizon.
 \end{multline*}
 
 \begin{remark}
 	Informally passing to the limit as $\epsilon\downarrow 0 $ in the change of measure above yields the Wick exponential of 
 	\[
 	(\hurstExponent - \half)\int_{\tzero}^t \psi(s,\tzero,\hurstExponent)dW_s.
 	\]
 	Borrowing the terminology introduced in Example \ref{ex.NAPindifferenceOfGeomBM}, we can then refer to $(\hurstExponent -1/2)\psi(t,\tzero,\hurstExponent)$ as the (time-dependent) market price of risk. However, the limiting change of measure is delicate because it entails that some mass is lost; indeed, $\Prob(\int_{\tzero}^{t} \psi\squared (s,\tzero,\hurstExponent) ds = \infty)>0$. The concept of weak NAP equivalence was introduced to circumvent this technical issue and to acheive a neat statement without having to mention the approximating sequence explicitly. Such a neat statement will be contained in Proposition \ref{prop.interpolatingHistoricalandImpliedVol}. Notice that the article \cite{BM00opt} does not make this choice and states everything in terms of approximating sequences. 
 \end{remark}

 Let $\mu$ be a real number, which we fix. With $\volatilityCoefficient$ in $\R_{+}$, we define the line $\linearDrag (t, \tzero,\mu,\sigma)$ as 
 \begin{equation}\label{eq.definitionLinearDrag}
 \linearDrag (t,t_0,\mu,\volatilityCoefficient):=(\mu -\volatilityCoefficient\squared/2)(t-t_0), \qquad t_0, t \in \timeWindow.
 \end{equation}
 The letter $X$ will refer to geometric Brownian motion, defined for $\tzero \leq t \leq \timeHorizon$ as 
 \begin{equation}\label{eq.definitionGeometricBrownianMotion}
 \geometricBrownianMotion(t,t_0,\mu,\volatilityCoefficient) 
 := x_0 \exp 
 \Big(\linearDrag(t,t_0,\mu,\volatilityCoefficient) + \volatilityCoefficient \brownianMotion_{t} - \volatilityCoefficient \brownianMotion_{t_0}\Big).
 \end{equation}
 
 Example \ref{ex.NAPindifferenceOfGeomBM} has shown that $\lbrace X(\cdot,\tzero,\mu,\volatilityCoefficient): \, \mu \in\R \rbrace$ is a family of NAP-equivalent processes, whose pricing dynamics is the one of $X(\cdot,\tzero,r,\volatilityCoefficient)$, with $r$ denoting the fixed interest rate in the market model.
 
 \begin{prop}\label{prop.deceptiveNAPindifference}
 	Let $\mu$ be a real coefficient an let  $\volatilityCoefficient_1$ and $\volatilityCoefficient_2$ be two positive real numbers. 
 	Then, the process\footnote{The drift $\mu$ is suppressed from the notation for $Y$.} 
 	\begin{equation*}
 	\begin{split}
 	Y(t,\tzero,\volatilityCoefficient_1,\volatilityCoefficient_2)
 	= x_0 \exp \Big(
 	\volatilityCoefficient_2 \volterraProcess(t,\tzero,\frac{\volatilityCoefficient_2\squared}{2\volatilityCoefficient_1\squared}) 
 	+ \linearDrag (t,\tzero,\mu,\volatilityCoefficient_1)
 	\Big)&, \\
 	&\qquad \tzero \leq t \leq \timeHorizon, 
 	\end{split}
 	\end{equation*}
 	is simultaneously weakly NAP-equivalent to $X(\cdot,\tzero,\mu, \volatilityCoefficient_2)$ and marginally identical to $X(\cdot,\tzero,\mu, \volatilityCoefficient_1)$.
 \end{prop}
 \begin{remark}\label{remark.blindnessToMarginalVariances}
 	The quadratic variation of $Y$ is 
 	\[
 	[Y]_{\tzero,t} = \volatilityCoefficient_{2}\squared \int_{\tzero}^{t} Y\squared_{s} ds.
 	\]
 	This is the same as the one of $X(\cdot,\tzero,\mu, \volatilityCoefficient_2)$, since 
 	\[
 	[X(\cdot,\tzero,\mu, \volatilityCoefficient_2)]_{\tzero,t} = \volatilityCoefficient_{2}\squared \int_{\tzero}^{t} X\squared_s ds,
 	\]
 	but different from that of $X(\cdot,\tzero,\mu, \volatilityCoefficient_1)$. In this respect, no-arbitrage pricing is sensitive to quadratic variations but blind to marginal variances/historical volatilities. 
 \end{remark}
 \begin{proof}[Proof of Proposition \ref{prop.deceptiveNAPindifference}]
 	For simplicity we take $x_0 = 1$. Consider the process
 	\begin{equation}\label{eq.reducingProcess}
 	Y_\epsilon (t,\tzero,\volatilityCoefficient_1, \volatilityCoefficient_2) 
 	= \exp \Big(
 	\volatilityCoefficient_2 \volterraProcess_{\epsilon}(t,\tzero,\frac{\volatilityCoefficient_2\squared}{2\volatilityCoefficient_1\squared}) 
 	+ \linearDrag (t,\tzero,\mu,\volatilityCoefficient_1)
 	\Big),
 	\end{equation}
 	where $\volterraProcess_{\epsilon}$ was defined in equation \eqref{eq.DefVolterraProcessEpsilon}. We know already that $\log Y_\epsilon(t) \rightarrow \log Y(t)$ in $\Ltwo(\Prob )$. By remark \ref{remark.weakNAPreq1and2_logNormal}, we also have that for all $s<t$ the joint law of $(Y_\epsilon(s), Y_\epsilon(t))$ converges to the joint law of  $(Y(s), Y(t))$. Moreover, there exists a probability measure $\Prob^{\epsilon}$, equivalent to $\Prob$, such that $(\volterraProcess_{\epsilon},\Prob^\epsilon)$ is a Brownian motion, and thus there exists an equivalent $\probabilityQ^\epsilon$ such that $(\log Y_\epsilon, \probabilityQ^\epsilon)$ has the law of $\log X (\cdot, \tzero, r,\volatilityCoefficient_2)$. This shows the asserted weak NAP-equivalence.
 	
 	As for the marginal identity, it suffices to notice that $\log Y(t,\tzero,\volatilityCoefficient_1,\volatilityCoefficient_2)$ is a Gaussian random variable with mean $\linearDrag (t,\tzero,\mu,\volatilityCoefficient_1)$ and variance 
 	\begin{equation*}
 	\begin{split}
 	\volatilityCoefficient_2\squared \Variance
 	\volterraProcess (t,\tzero,\frac{\volatilityCoefficient_1\squared}{2\volatilityCoefficient_2\squared})
 	=& 	\volatilityCoefficient_2\squared 
 	\Big[\covarianceKernel (\tzero,t,t)/2H\Big]_{H=\volatilityCoefficient_2\squared / 2\volatilityCoefficient_1\squared}\\
 	=& \volatilityCoefficient_1\squared (t-\tzero).
 	\end{split}
 	\end{equation*}
 	These are the mean and the variance of the Gaussian random variable $\log X (t,\tzero,\mu,\volatilityCoefficient_1)$.
 \end{proof}
 
 \begin{remark}
 	For $n$ in $\N$ let $Y^n$ be the process $\lbrace Y_{1/n} (t,t_0,\volatilityCoefficient_1,\volatilityCoefficient_2)\rbrace$, where for $\epsilon>0$ the process $Y_{\epsilon} (\cdot, t_0,\volatilityCoefficient_1,\volatilityCoefficient_2)$ was defined in equation \eqref{eq.reducingProcess}. Let $X^n = X(\cdot, t_0, \mu, \volatilityCoefficient_2)$ be a geometric Brownian motion with drift $\mu$ and volatility $\volatilityCoefficient_2$ for all $n$ in $\N$. Then, the proof of Proposition \ref{prop.deceptiveNAPindifference} shows that the pair $(X^n, Y^n)$ is a reducing sequence for the weak NAP-equivalence between the process $Y$ defined in the statement of Proposition \ref{prop.deceptiveNAPindifference} and the geometric Brownian motion $X(\cdot, t_0, \mu, \volatilityCoefficient_2)$. We remark that $X^n$ is the constant sequence equal to $X(\cdot, t_0, \mu, \volatilityCoefficient_2)$, hence for all bounded continuous function $f$ we have that 
 	\begin{equation*}
 	\Expectation_{\pricingMeasure^n} \left[ f(Y^n_T) \vert Y_{t_0} = x_0\right]
 	= \Expectation_{\pricingMeasure} \left[ f(X(T, t_0, \mu, \volatilityCoefficient_2)) \vert X(t_0, t_0, \mu, \volatilityCoefficient_2) = x_0\right],
 	\end{equation*}
 	where $\pricingMeasure^n$ is the pricing measure associated with $Y^n$ and $\pricingMeasure$ is the measure under which $X(\cdot, t_0, \mu, \volatilityCoefficient_2)$ has the law of $X(\cdot, t_0, r, \volatilityCoefficient_2)$. This says that the price of the European option with payoff $f$ at maturity $T$ is the same for all stochastic models $Y^n$, and such a price is equal to the price of that option in the classical Black-Scholes model. 
 \end{remark}

\begin{remark}
	The reducing sequence for the weak NAP-equivalence between $Y$ and the geometric Brownian motion $X(\cdot, t_0, \mu, \volatilityCoefficient_2)$ is constructed from the process $Y_\epsilon$ defined in equation \eqref{eq.reducingProcess}. This process is closely related to the process constructed in \cite{BM00opt}, but it is not the same. Indeed, the process constructed in \cite{BM00opt} is given by $\bar{Y}_\epsilon(t,t_0) = x_0 \exp (\bar{Z}_\epsilon (t,t_0))$, where $\bar{Z}_\epsilon$ is null at $t=t_0$ and follows the dynamics
	\begin{equation*}
	\begin{split}
	\bar{Z}_\epsilon (t,t_0) & \\
	=\quad & \ell (t,t_0,\mu,\volatilityCoefficient_1) \\
	&+\begin{cases}
	\volatilityCoefficient_1 (\brownianMotion_t - \brownianMotion_{t_0}) &  \text{ if } t_0 \leq t < t_0 + \epsilon \\
	\left(\epsilon^{-1}\left(t-t_0\right)\right)^{(\volatilityCoefficient_1\squared - \volatilityCoefficient_2\squared)/2\volatilityCoefficient_1\squared}
 \\ \qquad \cdot\Big[\volatilityCoefficient_1 (\brownianMotion_{t_0 + \epsilon} -  \brownianMotion_{t_0}) 
	+\volatilityCoefficient_2 \int_{t_0 + \epsilon}^{t} \left( \epsilon^{-1} (s-t_0)\right)^{(\volatilityCoefficient_2\squared - \volatilityCoefficient_1\squared)/2\volatilityCoefficient_1\squared} & d\brownianMotion_s \Big] \\
	& \text{ if }  t\geq t_0. 
	\end{cases}
	\end{split}
	\end{equation*}
	This is equation \cite[Equation (3)]{BM00opt}. Both $\log Y_\epsilon$ and $\log \bar{Y}_\epsilon$ are Gaussian processes such that, for all $t$, the marginals $\log Y_\epsilon(t)$ and $\log \bar{Y}_\epsilon(t,t_0)$ converge to $\log Y(t,t_0,\volatilityCoefficient_1,\volatilityCoefficient_2)$ in $\Ltwo (\Prob)$. Moreover, $ \bar{Y}_\epsilon$ is marginally identical to $X(\cdot, t_0, \mu, \volatilityCoefficient_2)$ for all $\epsilon>0$. 
\end{remark}
 
 Let $\partition$ be a partition of the time window $\timeWindow$, and recall the notational convention in equation \eqref{eq.partitionsNotationalConvention}. For $u$ in $\partition$ consider the process
 \begin{equation}\label{eq.defDeceptiveArithmeticBM} 
 \deceptiveArithmeticBM (t,u,\volatilityCoefficient_1,\volatilityCoefficient_2) 
 =\begin{cases}
 0 & 0\leq t \leq u\\
 \linearDrag (t,u,\mu,\volatilityCoefficient_1)
 +\volatilityCoefficient_2 \volterraProcess(t,u,\frac{\volatilityCoefficient_2\squared}{2\volatilityCoefficient_1\squared})
 & u<t\leq \uprime \\
 \linearDrag (\uprime,u,\mu,\volatilityCoefficient_1)
 +\volatilityCoefficient_2 \volterraProcess(\uprime,u,\frac{\volatilityCoefficient_2\squared}{2\volatilityCoefficient_1\squared})
 & t> \uprime .
 \end{cases}
 \end{equation}

 \begin{prop}[{\cite[Propositions 2.1 and 2.2]{BM00opt}}]\label{prop.interpolatingHistoricalandImpliedVol}
 	Let $\volatilityCoefficient_1$ and $\volatilityCoefficient_2$ be two positive real numbers, and correspondingly consider the geometric Brownian motions $X(\cdot,\volatilityCoefficient_i)=X(\cdot,0,\mu,\volatilityCoefficient_i)$, $i=1,2$, defined in equation \eqref{eq.definitionGeometricBrownianMotion}, for some $\mu$ in $\R$. Let $\partition$ be a time grid in the time window $\timeWindow$, and correspondingly define the processes $\deceptiveArithmeticBM$ as in equation \eqref{eq.defDeceptiveArithmeticBM}. Let $Y=Y(\cdot,\volatilityCoefficient_1,\volatilityCoefficient_2)$ be the process 
 	\begin{equation*}
 	\begin{split}
 	Y(t,\volatilityCoefficient_1,\volatilityCoefficient_2) =
 	x_0 \exp
 	\sum_{u\in\partition} \deceptiveArithmeticBM(t,u,\volatilityCoefficient_1,\volatilityCoefficient_2), \\
 	&\qquad 0\leq t \leq \timeHorizon.
 	\end{split}
 	\end{equation*}
 	Then, it simultaneously holds
 	\begin{enumerate}
 		\item $Y(\cdot,\volatilityCoefficient_1,\volatilityCoefficient_2)$ and $X(\cdot,\volatilityCoefficient_1)$ are $\partition$-indistinguishable;
 		\item $Y(\cdot,\volatilityCoefficient_1,\volatilityCoefficient_2)$ and $X(\cdot,\volatilityCoefficient_2)$ are weakly NAP-equivalent. 
 	\end{enumerate}
 \end{prop}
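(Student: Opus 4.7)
The plan is to reduce both claims to Proposition \ref{prop.deceptiveNAPindifference} applied on each sub-interval of $\partition = \{0=u_0<u_1<\cdots<u_N=\timeHorizon\}$, together with the concatenation results Proposition \ref{prop.ConcatenateNAPindifferentProcesses} and Corollary \ref{corol.concatenateWeaklyNAPindifferentProcesses}. The structural observation underlying both claims is that for $t\in[u_j,u_{j+1}]$, the definition \eqref{eq.defDeceptiveArithmeticBM} telescopes to
\begin{equation*}
\log Y(t,\volatilityCoefficient_1,\volatilityCoefficient_2) - \log Y(u_j,\volatilityCoefficient_1,\volatilityCoefficient_2)
= \linearDrag(t,u_j,\mu,\volatilityCoefficient_1) + \volatilityCoefficient_2\,\volterraProcess\bigl(t,u_j,\tfrac{\volatilityCoefficient_1\squared}{2\volatilityCoefficient_2\squared}\bigr),
\end{equation*}
since contributions for $u<u_j$ are frozen at the preceding grid point and those for $u\geq u_{j+1}$ vanish. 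The right-hand side is measurable with respect to the Brownian increments on $[u_j,u_{j+1}]$, so log-increments of $Y$ over disjoint grid intervals are mutually independent Gaussians.

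For Claim 1, I would first note that $\geometricBrownianMotion(\cdot,\volatilityCoefficient_1)$ is a geometric Brownian motion, hence Markov, a fortiori $\partition$-Markov. For $Y$, the independence just recorded combined with the $\sigmaAlgebra_{u_j}$-measurability of $Y(u_j)$ yields the $\partition$-Markov property. To check the transition-kernel identity \eqref{eq.conditionPartitionIndistinguishability} at $s=u_j\in\partition$ and $t\in[u_m,u_{m+1}]$ with $m\geq j$, I would telescope
\begin{equation*}
\log\frac{Y(t)}{Y(u_j)}
= \sum_{l=j}^{m-1}\deceptiveArithmeticBM(u_{l+1},u_l,\volatilityCoefficient_1,\volatilityCoefficient_2)
+ \deceptiveArithmeticBM(t,u_m,\volatilityCoefficient_1,\volatilityCoefficient_2),
\end{equation*}
a sum of independent Gaussians. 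The marginal-identity part of Proposition \ref{prop.deceptiveNAPindifference} gives each summand mean $\linearDrag$ and variance $\volatilityCoefficient_1\squared$ times the corresponding time-increment; adding yields mean $\linearDrag(t,u_j,\mu,\volatilityCoefficient_1)$ and variance $\volatilityCoefficient_1\squared(t-u_j)$, matching the law of $\log(\geometricBrownianMotion(t,\volatilityCoefficient_1)/\geometricBrownianMotion(u_j,\volatilityCoefficient_1))$.

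For Claim 2, I would proceed by induction on $N$. The base case $N=1$ is Proposition \ref{prop.deceptiveNAPindifference}. For the inductive step, the restriction of $Y$ to $[u_{N-1},\timeHorizon]$, divided by $Y(u_{N-1})$, is the Proposition \ref{prop.deceptiveNAPindifference} process on that sub-interval and is therefore weakly NAP-equivalent to $\geometricBrownianMotion(\cdot,u_{N-1},\mu,\volatilityCoefficient_2)$; the restriction of $Y$ to $[0,u_{N-1}]$ is weakly NAP-equivalent to $\geometricBrownianMotion(\cdot,\volatilityCoefficient_2)$ on $[0,u_{N-1}]$ by the inductive hypothesis. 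The reducing sequences from Proposition \ref{prop.deceptiveNAPindifference} become geometric Brownian motions under the respective equivalent pricing measures, hence are time-homogeneous Markov there; the two halves are driven by Brownian increments on disjoint sub-intervals, hence independent; and the second half starts at $1$. These are precisely the assumptions of Proposition \ref{prop.ConcatenateNAPindifferentProcesses}, so Corollary \ref{corol.concatenateWeaklyNAPindifferentProcesses} concludes the induction.

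The main obstacle I foresee is the bookkeeping for the inductive step of Claim 2: one must verify that the time-homogeneous Markov and independence hypotheses of Proposition \ref{prop.ConcatenateNAPindifferentProcesses} hold uniformly in the approximating index $n$ of the reducing sequences. This reduces to the observation that the pieces $Y_{\epsilon_n}^{j}$ and $\geometricBrownianMotion^{j}$ built over different sub-intervals use pairwise disjoint Brownian increments, and that the equivalent pricing measures constructed in Lemma \ref{lemma.DefOfPsi} and in the proof of Proposition \ref{prop.deceptiveNAPindifference} all turn them into geometric Brownian motions of the same volatility $\volatilityCoefficient_2$.
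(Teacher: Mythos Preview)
Your proposal is correct and follows essentially the same approach as the paper's proof: for Claim 1 both you and the paper telescope $\log Y(t)-\log Y(u)$ over the grid, use the independence of the $Z$-increments on disjoint sub-intervals, and identify each piece as a Gaussian with mean $\linearDrag$ and variance $\volatilityCoefficient_1\squared$ times the time-increment; for Claim 2 both apply Proposition \ref{prop.deceptiveNAPindifference} on each sub-interval and then concatenate via Corollary \ref{corol.concatenateWeaklyNAPindifferentProcesses}. Your explicit induction on $N$ and the bookkeeping paragraph for the reducing sequences make rigorous what the paper leaves implicit, but the strategy is the same.
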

 \begin{proof}
 	We split the proof in two parts, which correspond to the statements. 
 	\begin{enumerate}
 		\item Let $u$ be a partition point and observe that for $t>u$ the variable $\log Y(t,\volatilityCoefficient_1,\volatilityCoefficient_2) - \log Y(u,\volatilityCoefficient_1,\volatilityCoefficient_2)$ is independent from $\sigmaAlgebra_u$. Moreover, 
 		\begin{equation*}
 		\begin{split}
 		\log Y(t,\volatilityCoefficient_1,\volatilityCoefficient_2) - \log& Y(u,\volatilityCoefficient_1,\volatilityCoefficient_2)\\
 		=& \underbrace{\sum_{\substack{v\in\partition \\ u\leq v < \lfloor t \rfloor }}\log \frac{Y(v\derivative,\volatilityCoefficient_1,\volatilityCoefficient_2)}{Y(v,\volatilityCoefficient_1,\volatilityCoefficient_2)}}_{\mathtt{L1}}
 		+ \underbrace{\log \frac{Y(t,\volatilityCoefficient_1,\volatilityCoefficient_2)}{Y(\lfloor t \rfloor ,\volatilityCoefficient_1,\volatilityCoefficient_2)}}_{\mathtt{L2}}.
 		\end{split}
 		\end{equation*}
 		The two summands $\mathtt{L1}$ and $\mathtt{L2}$ are independent. The second summand, $\mathtt{L2}$, is normally distributed with mean $\linearDrag(t,\lfloor t \rfloor , \mu , \volatilityCoefficient_1 )$ and variance $\volatilityCoefficient_1\squared (t-\lfloor t \rfloor)$. As for the first summand $\mathtt{L1}$, we further notice the independence of the variables $\log [{Y(v\derivative,\volatilityCoefficient_1,\volatilityCoefficient_2)}/{Y(v,\volatilityCoefficient_1,\volatilityCoefficient_2)}]$, $v \in \partition$, which are normally distributed with mean $\linearDrag(v\derivative, v, \mu , \volatilityCoefficient_1)$ and variance $\volatilityCoefficient_1\squared (v\derivative - v)$. Therefore, $\mathtt{L1}$ is normally distributed with mean $\linearDrag(\lfloor t \rfloor, u, \mu, \volatilityCoefficient_1 )$ and variance $\volatilityCoefficient_1\squared (\lfloor t \rfloor - u)$. Hence, $\log[ Y(t,\volatilityCoefficient_1,\volatilityCoefficient_2) /Y(u,\volatilityCoefficient_1,\volatilityCoefficient_2)]$ is distributed as $\log[ X(t,\volatilityCoefficient_1)/X(u,\volatilityCoefficient_1)]$.
 		\item On each subinterval $[u,u\derivative]$ of $\partition$, the processes 
 		\[
 		\left\lbrace 
 		\frac{Y(t,\volatilityCoefficient_1,\volatilityCoefficient_2)}{Y(u,\volatilityCoefficient_1,\volatilityCoefficient_2)}: \, u\leq t \leq u\derivative 
 		\right \rbrace
 		\]
 		and
 		\[
 		\left\lbrace 
 		\frac{X(t,\volatilityCoefficient_2)}{X(u,\volatilityCoefficient_2)} = X(t,u,\mu,\volatilityCoefficient_2) : \, u\leq t \leq u\derivative 
 		\right \rbrace
 		\] 
 		are weakly NAP-equivalent, as argued in Proposition \ref{prop.deceptiveNAPindifference}. Therefore we conclude by recalling Corollary \ref{corol.concatenateWeaklyNAPindifferentProcesses}.
 	\end{enumerate}
 \end{proof}

Proposition \ref{prop.interpolatingHistoricalandImpliedVol} achieves the disentanglement between the concept of implied volatility and the concept of historical volatility. Indeed, if we refer to implied volatility as the feature of the price path that is relevant for option pricing, we  see from the statement of Proposition \ref{prop.interpolatingHistoricalandImpliedVol} that such a feature is at a remove from the distributional properties of the physical evolution of the price path. More precisely, with reference to the discussion in Section \ref{sec.introduction}, we see that the implied volatility of the process $Y(\cdot, \volatilityCoefficient_1,\volatilityCoefficient_2)$ is $\volatilityCoefficient_{2}$, but for all $t$ the physical variance $\Variance_{\Prob}(Y(t, \volatilityCoefficient_1,\volatilityCoefficient_2))$ of the $t$-marginal of $Y$ is $\volatilityCoefficient_1\squared t$.

This motivates our pathwise perspective on the mathematical models of option pricing. Such a perspective demands in particular to reconsider the employed  integration theory, as we will explain in the next section.   
\section{Proofs for Section \ref{sec.pathwiseIntegrals}}
\label{sec.proofsPathwiseIntegrals}

\begin{proof}
 	[Proof of Proposition \ref{Prop.[RY99,ChapterIV,(2.21)Exercise]}]
 	The one-dimensional case $d=1$ suffices. 
 	Let $(\onepi_n)_n$ and $(\twopi_n)_n$ be two sequences of partitions with vanishing mesh-size. Define $\tilde{\pi}_{2k+1} := \onepi_k$ and $\tilde{\pi}_{2k} : = \twopi_k$ for $k$ in $\N$. The assumption guarantees that $(\piH[\tilde{\pi}_n].X)_T$, $n\geq 1$,  is a Cauchy sequence for every $H$ in $C([0,T],\R)$. Therefore, the triangulation
 	\begin{align*}
 	\lvert \lim_n (\piH[\onepi_n].X)_T -& \lim_n (\piH[\twopi_n].X)_T \rvert \\ \leq &
 	\abs{\lim_n (\piH[\onepi_n].X)_T -(\piH[\onepi_N].X)_T } \\
 	& + \abs{(\piH[\tilde{\pi}_{2N+1}].X)_T - (\piH[\tilde{\pi}_{2N}].X)_T} \\
 	& + \abs{(\piH[\twopi_N].X)_T - \lim_n (\piH[\twopi_n].X)_T  }
 	\end{align*}
 	yields the first claim. As a consequence, for every $H$ in $C([0,T],\R)$, we have
 	\[
 	\sup\left\lbrace \lvert(\piH.X)_T\rvert : \, \pi \text{ partition of }[0,T] \right\rbrace < \infty.
 	\]
 	But the map $H\mapsto (\piH.X)_T$ is a bounded linear operator on $C([0,T],\R)$ with 
 	\[
 	\abs{(\piH.X)_T} \leq \norm[H]_{\infty} \sum_{u \in \pi} \abs{X\subscriptuuprime}.
 	\]
 	Furthermore, given $\pi$ the integrand
 	\begin{align*}
 	S_t = \left\lbrace 1- \frac{t-\lfloor t \rfloor}{\lfloor t \rfloor\derivative - \lfloor t \rfloor}\right\rbrace 
 	\mathrm{sign} \Big(X_{\lfloor t \rfloor, \lfloor t \rfloor\derivative}\Big)
 	+ \left\lbrace  \frac{t-\lfloor t \rfloor}{\lfloor t \rfloor\derivative - \lfloor t \rfloor}\right\rbrace
 	\mathrm{sign}  \Big(&X_{\lfloor t \rfloor\derivative, \lfloor t \rfloor^{\prime \prime}}\Big), \\
 	& \qquad \qquad 0\leq t \leq T,	
 	\end{align*}
 	is such that 
 	\[
 	({^{\pi}}\! S.X)_T = \sum_{u\in \pi} \abs{X\subscriptuuprime}.
 	\]
 	Therefore, an application of the uniform boundedness principle concludes the proof. 
 \end{proof}

\begin{proof}[Proof of Proposition \ref{prop.SewingLemma}]
	Given a partition $\partition$ of $[s,t]\subset \timeWindow$, let us set 
	\[
	\int_{\partition} \Xi := \sum_{u\in\partition} \Xi_{u,\uprime}.
	\]
	
	We start by showing that, for any pair $\partition$, $\tilde{\partition}$ of partitions of $[s,t]$, it holds
	\begin{equation}\label{Eq.differentPartitions}
	\begin{split}
	\Big\lvert\int_{\partition} \Xi - &\int_{\tilde{\partition}}\Xi\Big\rvert \\
	\leq&
	 2^\gamma \zeta(\gamma) \control(s,t) \norm[\delta\Xi]_{\control, \gamma} \Big(\mathrm{osc}(\control,\abs{\partition})^{\gamma-1} - \mathrm{osc}(\control,\abs{\tilde{\partition}})^{\gamma-1}\Big),
	 \end{split}
	\end{equation}
	where $\zeta(\gamma) := \sum_{n\geq 1} n^{-\gamma}$ is the zeta function, and $\mathrm{osc}(\control,\abs{\pi}):= \sup\lbrace \control(s,t): \, \abs{t-s}\leq \abs{\pi}\rbrace$ is the modulus of continuity of $\control$ on a scale smaller or equal than the mesh-size $\abs{\pi}$.
	
	Let $\pi$ be a partition of $[s,t]\subset[0,T]$ with at least two subintervals and let\[
	m:=\# \lbrace [u,\uprime] \in \pi\rbrace \geq 2
	\]
	denote the number of subintervals of $\pi$. It is easily seen by contradiction that there must exists some internal point $u$ of $\pi$ such that $[u-,u], [u,\uprime] \in \pi$ and
	\[
	\control(u-, \uprime) \leq \frac{2}{m-1}\control(s,t).
	\]
	We estimate
	\begin{align*}
	\abs{\int_{\pi\setminus \lbrace u \rbrace}  \!\!\!\!\! \Xi \quad  - \int_\pi \Xi } = & 
	\abs{\Xi_{u-, \uprime} - \Xi_{u-, u} - \Xi_{u,\uprime}} \\
	\leq & \norm[\delta \Xi]_\gamma \control^{\gamma}(u-, \uprime) \\
	\leq &  \norm[\delta \Xi]_\gamma \frac{2^\gamma}{(m-1)^\gamma}\control^{\gamma}(s,t) .
	\end{align*}
	By iteration we see
	\begin{equation}\label{Eq.maximalInequality}
	\abs{\Xist - \int_\pi \Xi} \leq \norm[\delta \Xi]_\gamma \big[2\control(s,t)\big]^\gamma \sum_{n\geq 1} \frac{1}{n^\gamma}.
	\end{equation}
	Now, if $\tilde{\pi}$ is a partition that refines $\pi$ we have
	\[
	\int_\pi \Xi - \int_{\tilde{\pi}}\Xi = \sum_{u\in \pi}\left\lbrace \Xi\subscriptuuprime - \int_{\tilde{\pi} \cap [u,\uprime]} \Xi \right\rbrace
	\]
	and equation \eqref{Eq.maximalInequality} yields
	\begin{align*}
	\abs{\int_\pi \Xi - \int_{\tilde{\pi}}\Xi} \leq & 
	\sum_{u \in \pi} \norm[\delta \Xi]_\gamma\big(2 \control(u,\uprime)\big)^\gamma \zeta(\gamma) \\
	\leq & 2^\gamma \zeta(\gamma) \norm[\delta \Xi]_\gamma \mathrm{osc}(\control,\abs{\pi})^{\gamma - 1}   \control(s,t).
	\end{align*}
	 The general case for $\pi$, $\tilde{\pi}$ can be reduced to the case where $\tilde{\pi}$ refines $\pi$. This proves \eqref{Eq.differentPartitions} and says that $\int \Xi$ is well-defined and consistent as pointwise limit of $t\mapsto \int _{\pi^n_t } \Xi$ along any sequence $(\pi^n)_n$ of partitions of $[0,T]$ with mesh-sizes $\abs{\pi^n}$ shrinking to zero. Here we have used the notation $\pi^n_t:= (\pi^n \cup \lbrace t \rbrace)\cap [0,t]$. 
	
	In order to get the bound in \eqref{Eq.errorOfIncrements}, we consider the dyadic sequence of partitions of $[s,t]$, i.e.
	$\pi_0 = \lbrace [s,t]\rbrace$ and 
	\[
	\pi_{n+1} = \bigcup_{u\in \pi_n} \Big\lbrace \left[u,\hat{u}\right], \left[\hat{u}, \uprime\right]\Big\rbrace, 
	\qquad n\geq 0,
	\]
	where $\hat{u}:=\inf \lbrace v >u: \, \control(u,v)\geq 2^{-(n+1)}\control(s,t)\rbrace $. We are assuming, without loss of generality, that $(\pi_n)_n$ has vanishing mesh-size, i.e. that $\control$ is strictly increasing, in the sense that $\control(s,t)>0$ if $s<t$.  Notice that by continuity of $\control$ it holds $\control(u,\hat{u})=2^{-(n+1)}\control(s,t)$ and by subadditivity $\control(\hat{u},\uprime)\leq 2^{-(n+1)}\control(s,t)$. Thus, 
	\[
	\int_{\pi_{n+1}} \Xi = \int_{\pi_n} \Xi - \sum_{u\in\pi_n} \delta \Xi_{u,\hat{u}, \uprime} 
	\]
	and 
	\begin{align*}
	\abs{\int_{\pi_{n+1} }\Xi - \int_{\pi_n}\Xi} \leq & 
	\sum_{u\in\pi_n} \norm[\delta \Xi]_\gamma \control^\gamma (u,\uprime) \\
	\leq & \norm[\delta \Xi]_\gamma  \control^\gamma(s,t)\sum_{u\in\pi_n}  2^{-n\gamma} \\
	=& \norm[\delta \Xi]_\gamma  \control^\gamma(s,t) \,  2^{n(1-\gamma)}.
	\end{align*}
	The right hand side is summable in $n$. Hence, 
	\begin{eqnarray*}
		\abs{\int_s^t \Xi - \Xist }  & & \\ 
		& \leq \sum_{n\geq 0} \Big\lvert \left( \int_{\pi_{n+1}} - \int_{\pi_n} \right) \big(\Xi) \Big\rvert  &  \\
		&&\leq \norm[\delta \Xi]_\gamma \frac{\control^\gamma(s,t)}{1-2^{1-\gamma}}.
	\end{eqnarray*}
	This established equation \eqref{Eq.errorOfIncrements}.
	
	Having obtained equation \eqref{Eq.errorOfIncrements}, the continuity of the path $\int\Xi$ follows from the assumption $\lim_{t\downarrow s} \Xist $ $ = 0$.
	
\end{proof}

\begin{proof}[Proof of Proposition \ref{prop.YoungIntegral}]
	Let $\control_X$ and $\control_H$ be respectively the $p$-variation and the $q$-variation controls of $X$ and of $H$. Using additivity of $X$ we see that 
	\[
	H_s\Xst - H_s\Xsu - H_u\Xut = -H_{s,u}\Xut 
	\]
	and 
	\[
	H_t\Xst - H_u\Xsu - H_t\Xut = H_{u,t}\Xut. 
	\]
	Therefore in both cases
	\[
	\abs{\delta \Xi_{s,u,t}} \leq \control_H ^{1/q}\control_X^{1/p} (s,t).
	\]
	This shows the claimed approximate additivity. Moreover with $1/p\derivative = 1- 1/p$ we can estimate
	\begin{equation*}
	\begin{split}
	\sum_{u\in\partition}\abs{H_u X\subscriptuuprime - H_{\uprime}X\subscriptuuprime}
		\leq &
		\left(\sum\abs{H\subscriptuuprime}^{p\derivative} \right)^{1/p\derivative}
		\left(\sum\abs{X\subscriptuuprime}^{p} \right)^{1/p} \\
		\leq & 
		\text{osc}^{\frac{p\derivative - q}{p\derivative}}(H,\abs{\partition}) \, 
		\control_H ^{1/p\derivative}\control_X^{1/p} (0,\timeHorizon) \\
		\longrightarrow & 0 \qquad \text{ as } \abs{\partition} \downarrow 0.
	\end{split}
	\end{equation*}
\end{proof}

\begin{proof}[Proof of Lemma \ref{lemma.approxAdditivityOfCompensatedSummand}]
	Let $\gamma:= 1/q+2/p$ and notice that by the subadditivity of the $1/\gamma$-th power, for every $s\leq u\leq t$, it holds
	\begin{align*}
	\abs{\delta \Xi_{s,u,t}}^{1/\gamma} \leq &
	\Big[ \control_{R^H}^{1/q+1/p}(s,u)\control_{X}^{1/p}(u,t) + \control_{H\derivative}^{1/q}(s,u)\control_{\secondOrderX}^{2/p}(u,t)\Big]^{1/\gamma} \\
	\leq & \big[\control_{R^H}^{1-1/\gamma p }\control_{X}^{1/\gamma p} + \control_{H\derivative}^{1/\gamma q}\control_{\secondOrderX}^{2/\gamma p}\big](s,t),
	\end{align*}
	where $\control_{R^H}$, $\control_{X}$, $\control_{H\derivative}$ and $\control_{\secondOrderX}$ are the variation controls of $R^H$, $X$, $H\derivative$ and $\secondOrderX$ with the appropriate exponents. Since $(1-\frac{1}{\gamma p}) + \frac{1}{\gamma p } = \frac{1}{\gamma q} + \frac{2}{\gamma p } = 1$, the term in the squared brackets is a control.
\end{proof}

\begin{proof}[Proof of Lemma \ref{lemma.qModerationAndControlledPaths} ]
	Let $p*$ and $\control$ be as in the definition of $q$-moderation. Then,
	\begin{equation*}
	\begin{split}
	\lvert 
	\gradx w (t,\Xt) -& \gradx w (s,\Xs) - \Hessianx w (s,\Xs)\Xst
	\rvert \\
	\leq &
	\lvert 
	\gradx w (t,\Xt) - \gradx w (s,\Xt)
	\rvert \\
	& + 
	\lvert
	\gradx w (s,\Xt) - \gradx w (s,\Xs)-\Hessianx w (s,\Xs)\Xst
	\rvert \\
	\leq & 
	\control^{1/p*}(s,t) \\
	& + \Big\lvert
	\int_{0}^{1} \Big[
	\Hessianx w\left(s,(1+y)\Xs + y\Xt \right) - \Hessianx w \left(s,\Xs\right)
	\Big]
	\Xst \, dy
	\Big\rvert \\
	\leq &
	\control^{1/p*}(s,t) \\
	&+\norm[\Hessianx w (s,\cdot)]_{\alpha\text{-H\"ol}, \convexHull X\timeWindow}
	\frac{\control_{X}^{\frac{1+\alpha}{p}}(s,t)}{1+\alpha}.
	\end{split}
	\end{equation*}
	The symbol $\control_{X}$ denotes the $p$-variation control of the path $X$. By assumption $1/q < \alpha /p$ and so $\frac{1+\alpha}{p}p*>1$. This says that the $p*$-power of the right hand side is of bounded variation. 
\end{proof}

\end{document}